\theoremstyle{plain}
\newtheorem{theorem}{Theorem}[section]
\newtheorem{corollary}[theorem]{Corollary}
\newtheorem{lemma}[theorem]{Lemma}
\newtheorem{proposition}[theorem]{Proposition}
\newtheorem{definition}[theorem]{Definition}
\newtheorem{assumption}[theorem]{Assumption}
\theoremstyle{remark}
\newtheorem{remark}[theorem]{Remark}
\numberwithin{equation}{section}
\newcommand{\ind}{1\!\kern-1pt \mathrm{I}}
\newcommand{\rsto}{]\!\kern-1.8pt ]}
\newcommand{\lsto}{[\!\kern-1.7pt [}
\numberwithin{equation}{section}
\renewcommand{\rho}{\varrho}
\DeclareMathOperator*{\argmax}{arg\,max}
\begin{document}
\title[Universal portfolios, SPT and the num\'eraire portfolio]{Cover's universal portfolio, stochastic portfolio theory and the num\'eraire portfolio}
\begin{abstract} 

Cover's celebrated theorem states that the long run yield of a properly chosen ``universal'' portfolio is as good as the long run yield of the best \emph{retrospectively chosen} constant rebalanced portfolio. The ``universality'' pertains to the fact that this result is \emph{model-free}, i.e., not dependent on an underlying stochastic process. We extend Cover's theorem to the setting of stochastic portfolio theory as initiated by R.~Fernholz: the rebalancing rule need not to be constant anymore but may depend on the present state of the stock market. This \emph{model-free} result is complemented by a comparison with the log-optimal num\'eraire portfolio when fixing a stochastic model of the stock market. Roughly speaking, under appropriate assumptions, the optimal long run yield coincides for the three approaches mentioned in the title of this paper. We present our results in discrete and continuous time.
\end{abstract}
\thanks{}
\keywords{Universal portfolio, stochastic portfolio theory, log-optimal num\'eraire portfolio, long-only portfolios, functionally generated portfolios, diffusions on the unit simplex, ergodic Markov process}
\subjclass[2000]{60G48, 91B70, 91G99  }

\author{Christa Cuchiero, Walter Schachermayer and Ting-Kam Leonard Wong}
\address{Vienna University, Oskar-Morgenstern-Platz 1, A-1090 Vienna}
\address{University of Southern California, Los Angeles, CA 90089, United States}
\date{\today}
\maketitle

\section{Introduction}\label{sec.1}
In \cite{FK:09} the question was raised whether there is a relation between T.~Cover's theory of universal portfolios (which appeared as the very first paper of the present journal, see \cite{C91}) and stochastic portfolio theory as initiated by R.~Fernholz (see \cite{F:02} and earlier references therein). After all, both theories ask for rather general recipes for choosing a good (at least in the long run) \emph{long-only} portfolio among $d$ assets, whose prices over time are given by

\[
S = (S^1_t, \dots, S^d_t).
\]
Here the time $t$ varies in $\mathbb{T}$, where $\mathbb{T}$ stands either for $\mathbb{N} = \{0, 1, \ldots\}$ (discrete time) or $\mathbb{R}_+ = [0, \infty)$ (continuous time). In many cases $S$ is modeled by a stochastic process defined on some probability space. We note, however, that one may also consider a \emph{model-free} approach, where $S=(s^1_t, \dots s^d_t)_{t \in \mathbb{T}}$ is just an arbitrary deterministic sequence in $(0, \infty)^d$. Indeed, Cover's results make sense in a {\it sure} way (as opposed to an {\it almost sure way}) and do not necessarily refer to a probability measure. In a similar spirit model-free results in the context of stochastic portfolio theory were obtained by S.~Pal and L.~Wong \cite{PW:15} in discrete time and A.~Schied et al.~\cite{SSV:16} in continuous time.

The relationship between stochastic portfolio theory and Cover's theory of universal portfolios has recently started to be pursued by T.~Ichiba and M.~Brod \cite{IB:14, B:14} as well as L.~Wong \cite{wong2015universal}. These works shed light on the connection of universal portfolios with the so-called \emph{functionally generated portfolios} in the sense of \cite{F:02} from different angles. In particular, Wong extends Cover's approach to the family of functionally generated portfolios in discrete time and shows that the distribution of wealth in this family satisfies a pathwise large deviation principle as time tends to infinity.

\subsection{Summary of the main results}

In this paper we also consider the setting of stochastic portfolio theory, in the sense that we are interested in the performance of \emph{relative} wealth processes, where \emph{relative} means here ``in comparison with the \emph{market portfolio}''. In other words, we choose the market portfolio to be the num\'eraire, so that the primary assets are the market weights which are supposed to take values in the open $d$-simplex defined by $\Delta^d=\{x \in (0,1)^d \, |\, \sum_{i=1}^d x^i =1\}$. Its 
closure is denoted by $\bar{\Delta}^d= \{x \in [0,1]^d \, |\, \sum_{i=1}^d x^i =1\}$.

Let us start by summarizing first our results in \emph{discrete} time. In contrast to \cite{wong2015universal}, we extend Cover's universal portfolio to a class of \emph{$M$-Lipschitz functions} 
denoted by $\mathcal{L}^M$. Each element of $\mathcal{L}^M$ maps the markets weights to \emph{long-only} portfolio weights in $\bar{\Delta}^d$, 
whose values specify the proportions of current wealth invested in each of the asset (see Definition \ref{def:LM}). 
This enables us to go beyond \emph{functionally generated portfolios} since the portfolio maps are no longer restricted to (normalized) gradient maps of concave functions on the simplex (see \cite{PW:15} for financial and mathematical justifications of this seemingly restrictive condition).

Denoting by $(V^{\pi}_t)_{t =0}^{\infty}$ the \emph{relative wealth process} corresponding to a portfolio strategy\footnote{Here, the portfolio weight $\pi_t$ is chosen at time $t - 1$ and is used over the time interval $[t - 1, t]$.} $(\pi_t)_{t =1}^{\infty}$, we are interested in comparing the asymptotic growth rates, i.e., 
\[
\lim_{T\to \infty} \frac{1}{T} \log( V^{\pi}_T),
\] 
for certain 	``optimal'' portfolio choices $\pi$. 
More precisely, assuming the market weights evolve according to an ergodic time-homogenous Markov process,  
we establish asymptotic equality of the following growth rates:
\begin{itemize}
\item the \emph{best retrospectively chosen portfolio} at time $T$ in the class $
\bigcup^\infty_{M=1}\mathcal{L}^M$, (in this context $V_T^{\ast,M}$ will denote the relative wealth at time $T$ achieved by investing according to the best strategy in $\mathcal{L}^M$ over the time interval $[0, T]$);
\item the \emph{analog of Cover's universal portfolio} whose relative wealth process is denoted by $(V_t(\nu))_{t = 0}^{\infty}$ (here $\nu$ is a probability measure on $\bigcup^\infty_{M=1}\mathcal{L}^M$ with full support on each $\mathcal{L}^M$); 
\item the \emph{log-optimal num\'eraire portfolio} among the class of \emph{long-only} strategies, whose relative wealth process is denoted by  
$(\hat{V}_t)_{t =0 }^{\infty}$.
\end{itemize}

In mathematical terms, our main result thus essentially -- omitting the precise setting and technical conditions -- reads as follows: 

\begin{theorem}\label{th:mainess}
Let $(\mu_t)_{t =0}^{\infty}$ be a time-homogenous ergodic Markov process describing 
the dynamics of the market weights. Then
\begin{align}\label{eq:1}
\lim_{M \to \infty} \lim_{T \to \infty} \frac{1}{T} \log(V_T^{\ast,M})=\lim_{T\to \infty}\frac{1}{T} \log(V_T(\nu))=
\lim_{T\to \infty}\frac{1}{T}\log( \hat{V}_T)
\end{align}
holds almost surely. 
\end{theorem}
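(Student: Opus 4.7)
The plan is to establish the circular chain
\[
\lim_{T\to\infty} \tfrac{1}{T} \log \hat{V}_T \;\geq\; \lim_{T\to\infty} \tfrac{1}{T}\log V_T(\nu) \;\geq\; \lim_{M\to\infty}\lim_{T\to\infty} \tfrac{1}{T} \log V_T^{\ast,M} \;\geq\; \lim_{T\to\infty} \tfrac{1}{T} \log \hat{V}_T
\]
a.s., which forces the three limits to coincide (existence of the individual limits falls out of the argument). For the first inequality I would invoke the num\'eraire property of $\hat V$: since Cover's portfolio $V(\nu)$ is itself a self-financing long-only strategy (an average of such strategies, as in Cover's original observation), $V_T(\nu)/\hat V_T$ is a nonnegative supermartingale starting from~$1$. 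Such supermartingales converge a.s.\ to finite limits, hence $\tfrac{1}{T}\log(V_T(\nu)/\hat V_T)\to 0$ a.s.

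The second inequality is the classical Cover--Laplace argument adapted to the class $\mathcal{L}^M$. Fix $M$ and let $\pi^{\ast,M}_T$ denote a maximizer of $\pi\mapsto V_T^\pi$ in $\mathcal{L}^M$. Because $\mathcal{L}^M$ consists of $M$-Lipschitz maps from the compact simplex into itself, it is compact in the sup-norm topology by Arzel\`a--Ascoli. Each single-period log-return $\pi\mapsto\log(\pi(\mu_{t-1})\cdot R_t)$ is Lipschitz in $\pi$ uniformly along trajectories, so on a sup-norm $\varepsilon$-ball $B_\varepsilon$ around $\pi^{\ast,M}_T$ one has $\log V_T^\pi\ge \log V_T^{\ast,M}-\varepsilon C T$ for some deterministic constant $C$. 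Since $V_T(\nu)$ averages $V_T^\pi$ over $\pi$, and $\nu$ has full support on $\mathcal{L}^M$, one obtains $V_T(\nu)\geq \nu(B_\varepsilon)\,e^{-\varepsilon C T}\,V_T^{\ast,M}$. Taking $\tfrac{1}{T}\log$, letting $T\to\infty$ and then $\varepsilon\downarrow 0$ delivers the second inequality for each fixed $M$, hence also in the outer limit $M\to\infty$.

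For the third inequality I would use ergodicity together with Lipschitz approximation. The Markov assumption lets us represent the log-optimal long-only portfolio as a Borel map $\hat\pi\colon\bar\Delta^d\to \bar\Delta^d$ of the current state, and Birkhoff's theorem identifies $\lim_T \tfrac{1}{T}\log \hat V_T$ with $\mathbb{E}_{\bar\mu}[\log(\hat\pi(\mu_0)\cdot R_1)]$, where $\bar\mu$ is the invariant law of the market-weight process. Approximating $\hat\pi$ in $L^1(\bar\mu)$ by a sequence $\pi^M\in\mathcal{L}^M$ (Lipschitz maps into $\bar\Delta^d$ are dense in Borel maps by standard mollification, truncating the Lipschitz constant at $M$) and again invoking ergodicity yields $\lim_T\tfrac{1}{T}\log V_T^{\pi^M} \to \lim_T\tfrac{1}{T}\log\hat V_T$ as $M\to\infty$. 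Since $V_T^{\ast,M}\geq V_T^{\pi^M}$ by definition of the retrospective optimum, the third inequality follows. The hard part is precisely this step: passing from $\pi^M$ to $\hat\pi$ while ensuring $\mathbb{E}_{\bar\mu}[\log(\pi^M(\mu_0)\cdot R_1)]\to \mathbb{E}_{\bar\mu}[\log(\hat\pi(\mu_0)\cdot R_1)]$ requires uniform integrability of $\log(\pi^M\cdot R_1)$, which in turn needs a non-degeneracy condition on the market weights (keeping $\pi\cdot R_1$ bounded away from $0$ under $\bar\mu$). Once this integrability hypothesis is secured, the supermartingale dominance argument and the Laplace-type concentration on $\mathcal{L}^M$ are comparatively routine.
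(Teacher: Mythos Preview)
Your proposal is correct and uses the same three ingredients as the paper---the supermartingale property of $V(\nu)/\hat V$, the Cover--Laplace concentration on $\mathcal{L}^M$, and Lipschitz approximation of $\hat\pi$---but organizes them more directly. The paper first fixes $M$ and introduces an auxiliary log-optimizer $\hat V^M$ over $\mathcal{L}^M$; it then uses the inequality $\mathbb E[\log V_T^M(\nu)]\le \mathbb E[\log \hat V_T^M]$ together with Fatou's lemma and the ergodic theorem to pin down $\liminf_T\tfrac1T\log V_T^M(\nu)=L^M$ (Theorem~\ref{tG3}), and only afterwards invokes the supermartingale property to pass to $M\to\infty$ (Corollary~\ref{c4.8}). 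Your circular chain bypasses $\hat V^M$ and the Fatou step entirely: the supermartingale bound gives the upper estimate on $V(\nu)$ against $\hat V$ directly, and the Lipschitz approximation closes the loop from below. This is a genuine streamlining.

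Two points deserve a remark. First, your ``num\'eraire property of $\hat V$'' is not the unrestricted num\'eraire property: $\hat\pi$ is only log-optimal among \emph{long-only} maps, so the supermartingale property of $V^\pi/\hat V$ for long-only $\pi$ is not automatic. The paper proves it separately (Lemma~\ref{lem:leq1}) via a convexity argument; you should flag this. Second, your Lipschitz claim in step~2 is correct but relies crucially on $\mathcal{L}^M$ mapping into $\bar\Delta^d_{M^{-1}}$: for $p\in\bar\Delta^d_{M^{-1}}$ one has $\langle p,y/x\rangle\ge (Md)^{-1}\max_j y^j/x^j$, which bounds the Lipschitz constant of $p\mapsto\log\langle p,y/x\rangle$ by $Md$ \emph{uniformly in $(x,y)$}. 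The paper obtains the same estimate by the convex-combination trick $\pi^M=(1-\alpha)\pi^{*,M}+\alpha\tilde\pi$ with $\tilde\pi\in\bar\Delta^d$; the two arguments are equivalent here. Finally, note that the supermartingale convergence only yields $\limsup_T\tfrac1T\log(V_T(\nu)/\hat V_T)\le 0$, not equality (the limit could be zero); this is all you need for the chain, but your phrasing ``$\tfrac1T\log(V_T(\nu)/\hat V_T)\to 0$'' overstates what the supermartingale argument alone delivers.
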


One new and appealing feature of the above theorem is the connection with the (long-only) log-optimal portfolio which could not be established in the setups considered so far, since the class of admitted portfolio strategies to build the universal portfolio was simply too small. Indeed, in Cover's original setting where only constant rebalanced portfolios are considered, 
an analogue of Theorem \ref{th:mainess} is true only when the stock returns are i.i.d. 
As the definition of the log-optimal portfolio requires necessarily a probabilistic setting, 
the above theorem is formulated for certain stochastic models in an almost sure sense. 
The first equality in \eqref{eq:1} is however proved to hold in a pathwise sense for \emph{all} trajectories of the market weights (see Theorem \ref{t4.6}).

In \emph{continuous} time, such a pathwise approach is in general no longer possible due to the appearance of stochastic integrals. 
Therefore, we come back to the setting of \emph{functionally generated portfolios} precisely specified in Section \ref{sec:functionally}, 
under which portfolio wealth processes can be defined via H.~F\"ollmer's pathwise approach to stochastic integration in a probability free way 
(see \cite{F:81} and also the recent paper \cite{SSV:16}).  The choice of functionally generated portfolios is in this continuous time setting best possible 
for the following two reasons: first it perfectly connects Cover's theory
with stochastic portfolio theory in continuous time, and second functionally generated portfolios are in the present Markovian context also the largest class for
which wealth processes can be defined in a pathwise way without passing to rough paths theory.
By replacing the set $\mathcal{L}^M$ by certain spaces of functionally generated portfolios (defined in Section \ref{sec:functionally}) 
and assuming that the log-optimal portfolio is functionally generated (for which we provide precise conditions in Proposition \ref{prop:logoptfunctionally}
when the market weights follow an It\^o-diffusion),  we get essentially the same theorem as above, with the first equality holding in a pathwise sense.

Let us remark that in continuous time generalizations of Cover's universal portfolio to non-parametric families of portfolio maps have not been considered so far. 
It thus significantly extends the results obtained by F.~Jamshidian \cite{J:92}, whose setting corresponds to constant portfolio maps in continuous time. 

While our approach is more theoretical, universal portfolio strategies have been studied extensively in an algorithmic framework. See \cite{LH:14} for a recent survey and in particular \cite{HK:15}.

The remainder of the paper is organized as follows. 
In Section \ref{sec:2} we provide a brief overview (in discrete time for convenience) of the main topics of this paper, 
i.e., we review Cover's theorem, introduce the setting of stochastic portfolio theory and recall the notion of the log-optimal num\'eraire portfolio. 
In Section \ref{sec:discretetime} we establish Theorem \ref{th:mainess} in discrete time (see Theorem \ref{tG3} and Corollary \ref{c4.8}), 
while Section \ref{sec:functionally} is dedicated to prove the corresponding statement in continuous time in the setting of functionally generated portfolios
and under the assumption that the market weights follow an ergodic It\^o-diffusion (see Theorem \ref{th:continouscase1} and Corollary \ref{cor:main}). 

\section{Cover's universal portfolio, stochastic portfolio theory and the log-optimal portfolio} \label{sec:2}
\subsection{Cover's universal portfolio}
Cover's insight reveals the striking phenomenon that the ``wisdom of hindsight'' does not give any significant advantage as compared to a properly chosen ``universal'' portfolio which is constructed in a predictable way. The relevant optimality criterion here is the asymptotic growth rate of the portfolio. 

Let us sketch this -- at first glance extremely surprising -- result in a particularly easy setting (compare \cite{C91}, \cite{cover1996universal}): let time $t$ vary in $\mathbb{N}$ and fix $T \in \mathbb{N}.$ Think of an investor who at time $T$ looks back which stock she should have bought at time $t=0$ (by investing her initial endowment of $1\euro$  and subsequently holding the stock). There is an obvious solution to this problem: pick a stock $i \in \{1, \dots, d\}$ which maximizes the performance $\frac{S^i_T}{S^i_0}.$ It clearly also maximizes the normalized logarithmic return
\begin{align}\label{eq:C3}
\frac{1}{T} [\log(S^i_T) - \log (S^i_0)] \qquad i=1, \dots, d.
\end{align}

The ``only'' problem with this trading strategy is, of course, that we have to make our choice at time $t=0$ instead of $t=T$ and cannot have a look into the future:~at time $t=0$ we do not have the knowledge which stock will perform best during $[0,T]$.

Here is the remedy (compare e.g., \cite{blum1999universal}): at time $t=0$ simply divide the initial endowment of $1 \euro$ into $d$ portions of $\frac{1}{d} \euro$, invest each portion in each of the stocks and then hold the resulting portfolio. At time $T$ the wealth equals 
\begin{align*}
V_T=\frac{1}{d} \sum^d_{j=1} \frac{S^j_T}{S^j_0} \geq \frac{1}{d} \frac{S^i_T}{S^i_0},
\end{align*}
where again $i$ denotes the stock which performed best during the time interval $[0,T]$.\footnote{By a slight abuse of notation $V$ stands here for the wealth expressed in Euro units as opposed to relative wealth considered in the introduction and also subsequently.} 

Passing again to normalized logarithmic returns we obtain
\begin{align}\label{eq:C4}
\frac{1}{T} \log (V_T) \geq  \frac{1}{T} \Big[\log (S^i_T) - \log(S^i_0) - \log d\Big],
\end{align}
so that the difference between \eqref{eq:C3} and \eqref{eq:C4} can be bounded by $\frac{\log(d)}{T}$, which tends to zero as $T \to \infty.$ Hence the universal portfolio consisting of equally weighing the $d$ stocks at time $t = 0$, and subsequently following a buy-and-hold strategy, asymptotically has the same normalized logarithmic return as the -- only in retrospect known -- best performing stock.

\vskip10pt
In fact, Cover considers a more interesting and challenging setting than the above almost trivial observation. Instead of only considering ``pure'' investments into one of the stocks as benchmark, he considers all {\it constant rebalanced portfolio strategies}: let $b=(b^1, \dots, b^d)$ be a fixed element of the $d$-simplex $\bar{\Delta}^d$, i.e., 
$b^j \geq 0$ and $\sum^d_{j=1} b^j=1.$ The corresponding constant rebalanced portfolio $ (V_t(b))_{t=0}^{\infty}$ starting at $V_0(b)=1$ is inductively defined by holding the investment $b^j V_t(b)$ in stock $S^j$ during the  period $(t, t+1)$, so that $V_0(b)=1$ and 
\begin{align}\label{p2}
\frac{V_{t+1}(b)}{V_t (b)}(s)=\sum^d_{j=1} b^j \frac{s^j_{t+1}}{s^j_t},
\end{align}
for each scenario $s=((s^j_t)^d_{j=1})^\infty_{t=0}$ of strictly positive numbers modeling the evolution of the prices of $d$ stocks $S^1, \ldots, S^d$.

In other words, $(V_t (b))_{t\geq0}$ denotes the wealth process of an investor who always invests the proportion $b^j$ of her current wealth $V_t(b)$ into stock $j$. The ``pure'' investment into stock $j$ then corresponds to the $j$'th unit vector $b=e_j$. Of course, for a given scenario $(s^1_t, \dots, s^d_t)^T_{t=0}$, there may be weights $b$ such that the performance of $(V_t (b))^T_{t=0}$ is strictly better than the performance of the best stock. Here is an easy example taken from \cite{blum1999universal}: consider a market with just two stocks. The price of one stock remains constant and the price of the other stock alternatively halves and doubles. Investing in a single stock will not increase the wealth by more than a factor of two. However, letting $b=(\frac{1}{2}, \frac{1}{2})$ the ratio
$\frac{V_{t+1}(b)}{V_t(b)}$ equals $\frac{3}{4}$ or $\frac{3}{2}$, depending on whether $t$ is even or odd. Hence $V_{2t}(b)=(\frac{9}{8})^t$, for all $t \geq 1$, so that $(V_t(b))_{t \geq 0}$ increases exponentially.

Fix again $T$ and define in a pathwise way the quantity $V^*_T$ by
\begin{align}\label{C7}
V^*_T(s)= \max_{b \in \bar{\Delta}^d} V_T(b)(s),
\end{align}
which is a function depending on the scenario $s=(s^1_t, \dots, s^d_t)^T_{t=0}.$

Again, the idea is that, with hindsight, i.e., knowing the trajectory $(s^1_t, \dots, s^d_t)^T_{t=0},$ one considers the best weight $b \in \bar{\Delta}^d$ which attains the maximum \eqref{C7}. Cover's goal is to construct a universal portfolio which performs as well as the hypothetical portfolio process $(V^*_T)_{T \geq 0},$ asymptotically for $T \to \infty.$ In order to do so,  let $\nu$ be a probability measure on $\bar{\Delta}^d.$ The measure $\nu$ will replace the previously considered uniform distribution $\frac{1}{d} \sum^d_{j=1} \delta_{e_j}$ on the unit vectors of $\bar{\Delta}^d.$ The universal portfolio now consists of investing at time $0$ the portion $d\nu(b)$ of one's wealth into the constant rebalanced portfolio $V(b)$ and subsequently following the constant rebalanced portfolio process $(V_t(b))^T_{t=0}.$ The explicit formula is

\begin{align}\label{C8}
V_t(\nu)=\int_{\bar{\Delta}^d}V_t(b)d\nu(b),
\end{align}
where $V_t(b)$ is defined via \eqref{p2}.

Cover's celebrated result now reads as follows.

\begin{theorem}{(Cover \cite{C91}):}\label{t1.1}
Let $\nu$ be a probability measure on $\bar{\Delta}^d$ with full support. Then 
\begin{align}\label{p3a}
\lim_{T \rightarrow \infty} \frac{1}{T} \log \frac{V_T(\nu)(s)}{V_T^*(s)} = 0,
\end{align}
for {\it all} trajectories $s=(s^1_t, \dots, s^d_t)^\infty_{t=0}$ for which there are constants $0 < c \leq C < \infty$ such that
\begin{align}\label{p3}
c \leq \frac{s^j_{t+1}}{s^j_t} \leq C,\qquad \text{for all} \quad j=1, \dots, d \quad \text{and all} \quad t \in \mathbb{N}.
\end{align}
\end{theorem}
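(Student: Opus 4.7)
The plan is to establish matching upper and lower bounds for $\tfrac{1}{T}\log\big(V_T(\nu)/V_T^*\big)$. The upper bound is immediate: since $V_T(b) \leq V_T^*(s)$ for every $b \in \bar{\Delta}^d$, integrating against $\nu$ gives $V_T(\nu) \leq V_T^*$, whence $\limsup_{T\to\infty} \tfrac{1}{T}\log(V_T(\nu)/V_T^*) \leq 0$. The content of the theorem lies entirely in the matching lower bound, and the strategy is to exploit the multiplicative structure $V_T(b) = \prod_{t=0}^{T-1}\langle b, r_{t+1}\rangle$, with $r^j_{t+1} := s^j_{t+1}/s^j_t$, to show that $V_T(\nu)$ already captures a fraction of $V_T^*$ whose exponential decay rate can be made arbitrarily slow.

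The key ingredient is a simple mixing inequality. For any $b^*, \tilde b \in \bar{\Delta}^d$ and $\varepsilon \in (0,1)$, linearity of $b \mapsto \langle b, r_{t+1}\rangle$ yields
\[
\langle (1-\varepsilon) b^* + \varepsilon \tilde b,\, r_{t+1}\rangle \;\geq\; (1-\varepsilon)\, \langle b^*, r_{t+1}\rangle,
\]
so taking products in $t$ gives $V_T\big((1-\varepsilon)b^* + \varepsilon \tilde b\big) \geq (1-\varepsilon)^T V_T(b^*)$. Apply this with $b^* = b^*(s,T) \in \argmax_{b \in \bar{\Delta}^d} V_T(b)$, which exists by compactness of $\bar{\Delta}^d$ and continuity in $b$. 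Writing $B_\varepsilon(b^*) := (1-\varepsilon) b^* + \varepsilon\, \bar{\Delta}^d$, one then has $V_T(b) \geq (1-\varepsilon)^T V_T^*$ for every $b \in B_\varepsilon(b^*)$, and hence
\[
V_T(\nu) \;\geq\; \int_{B_\varepsilon(b^*)} V_T(b)\, d\nu(b) \;\geq\; (1-\varepsilon)^T\, \nu\big(B_\varepsilon(b^*)\big)\, V_T^*.
\]

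What remains is to produce a uniform positive lower bound $\delta(\varepsilon) := \inf_{b \in \bar{\Delta}^d} \nu\big(B_\varepsilon(b)\big) > 0$; this is where the full-support hypothesis on $\nu$ enters, and it is the one non-trivial step. The observation is that $B_\varepsilon(b)$ always contains a relative ball in $\bar{\Delta}^d$ of radius $\rho\varepsilon$ around the point $c(b) := (1-\varepsilon) b + \tfrac{\varepsilon}{d}(1,\dots,1)$, for some $\rho = \rho(d) > 0$, and as $b$ ranges over $\bar{\Delta}^d$ these centers $c(b)$ stay in a compact subset of the open simplex $\Delta^d$. A finite subcover together with the positivity of $\nu$ on every open ball of $\Delta^d$ then delivers the uniform bound. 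Assembling the pieces,
\[
\frac{1}{T}\log\frac{V_T(\nu)}{V_T^*} \;\geq\; \log(1-\varepsilon) + \frac{1}{T}\log\delta(\varepsilon),
\]
so $\liminf_{T\to\infty} \tfrac{1}{T}\log(V_T(\nu)/V_T^*) \geq \log(1-\varepsilon)$, and letting $\varepsilon \downarrow 0$ concludes. The main obstacle is precisely this uniformity of $\delta(\varepsilon)$, since $b^*$ depends on the trajectory and may lie on the boundary of $\bar{\Delta}^d$; notably, the growth bound \eqref{p3} is not actually needed in this plan, so the argument delivers the conclusion for any trajectory of strictly positive prices.
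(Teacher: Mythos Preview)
Your proof is correct, and in fact sharper than the paper's, but it follows a genuinely different route. The paper's appendix argument exploits condition \eqref{p3} to show that $b \mapsto \tfrac{1}{T}\log V_T(b)$ is Lipschitz on $\bar{\Delta}^d$ with a constant depending only on $c$ and $C$ (and not on $T$); from this, a small $\ell_1$-ball around the maximizer $\bar b$ captures near-optimal growth, and full support of $\nu$ supplies the uniform mass bound. Your argument instead uses the convexity trick $V_T\big((1-\varepsilon)b^* + \varepsilon\tilde b\big) \geq (1-\varepsilon)^T V_T(b^*)$, which sidesteps any regularity estimate and, as you observe, makes the hypothesis \eqref{p3} superfluous. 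This is essentially the Blum--Kalai device the paper cites in the discussion after Theorem~\ref{t1.1} and deploys in its own proof of Theorem~\ref{t4.6}; your handling of the uniform lower bound $\delta(\varepsilon)>0$ via the shrunken simplex $B_\varepsilon(b^*)$ containing a relative ball of fixed radius around an interior point is clean and correct. What the paper's approach buys is a slightly more direct error bound in terms of the neighborhood radius, at the cost of the extra assumption; your approach buys generality and is arguably more elegant.
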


We provide the (easy) proof in the Appendix. This proof relies on the uniform boundedness assumption \eqref{p3} which has to be imposed on {\it all} $t \in \mathbb{N}.$ This assumption is also made in much of the subsequent literature on Cover's universal portfolio.  It was shown by T.~Cover and E.~Ordentlich \cite{cover1996universal} that this assumption can, in fact be dropped, at least when $\nu$ is chosen to be uniformly or Dirichlet $(\frac{1}{2}, \cdots, \frac{1}{2})$ distributed on $\Delta^d$ (see also  A.~Blum and A.~Kalai \cite{blum1999universal} for an elegant proof in case of the uniform distribution). In this context sharper estimates than \eqref{p3a} can be obtained, depending on the dimension $d$. For further literature on universal portfolio theory we refer the reader to the recent survey article \cite{LH:14} and the references therein.

\subsection{Stochastic Portfolio Theory}\label{sec:stochportfolio}

In stochastic portfolio theory one usually considers a class of portfolio processes which is somewhat different from the above setting. First of all, one associates to each vector of market capitalizations $(s^1, \dots, s^d) \in (0, \infty)^d$ the vector of {\it market weights} $(\mu^1, \dots, \mu^d) \in \Delta^d$ by normalizing with the total market capitalization $s^1 + \dots + s^d,$ i.e., 
\begin{align*}
(\mu^1, \dots, \mu^d)=\left(\frac{s^1}{s^1 + \dots + s^d} , \dots, \frac{s^d}{s^1 + \dots + s^d}\right).
\end{align*}
Economically speaking, this amounts to take the {\it market portfolio} $(\sum^d_{j=1}s^j_t)^\infty_{t = 0}$ as {\it num\'eraire} (compare \cite{DS:95}, \cite{FK:10}).

Second, investment strategies are usually encoded via {\it portfolio maps}. That are (Borel) measurable functions
\begin{align}\label{eq:portfolio map}
\pi:\Delta^d \to \bar{\Delta}^d
\end{align}
which associate to the current market capitalization $\mu_t=(\mu^1_t, \dots, \mu^d_t)$ the weights $(\pi(\mu_t)=(\pi^1(\mu_t), \dots, \pi^d(\mu_t))$ according to which an agent distributes current wealth among the $d$ stocks at time $t$. 
We then obtain (here in discrete time for simplicity) the {\it relative wealth process} $(V^{\pi}_t)^\infty_{t=0},$ expressed in units of the market portfolio $(s^1_t + \cdots + s^d_t)^\infty_{t=0},$ by starting at $V_0=1$ and the following recursive relation\footnote{Here it is assumed implicitly that the stocks do not pay dividends. This assumption is common in universal and stochastic portfolio theory and allows us to focus on the main ideas.}
\begin{align}\label{p4}
\frac{V^{\pi}_{t+1}}{V^{\pi}_t}= \sum^d_{j=1} \pi_{t+1}^j \frac{\mu^j_{t+1}}{\mu^j_t},
\end{align}
where $\pi_{t+1} = \pi(\mu_t)$. The {\it constant rebalanced portfolio strategies} considered by Cover correspond to the constant functions $\pi:\Delta^d \to \bar{\Delta}^d$.

The main theme of the present paper is to extend Cover's theory pertaining to \emph{constant rebalanced portfolios} 
to more general deterministic \emph{portfolio maps}. While it is not possible to develop a theory analogous to Cover's if we allow for \emph{all} 
measurable functions $\pi$ in \eqref{eq:portfolio map}, we can do so by restricting to sufficiently regular functions. 
We shall focus on Lipschitz functions $\pi$ whose values are separated from the boundary of $\bar{\Delta}^d$, namely the class $\mathcal{L}^M$ as mentioned in 
Section \ref{sec.1} and introduced in Definition \ref{def:LM} below. 
This turns out to be the right setting to prove an analogue of Theorem \ref{t1.1} pertaining to this more general class of 
strategies as made precise in Theorem \ref{t4.6} in discrete time. 
In this context let us again mention the paper \cite{wong2015universal} which addresses in discrete 
time Cover's universal portfolio and \emph{functionally generated portfolio maps} in stochastic portfolio theory. 
We take up this theme in continuous time and prove a model-free analogue to Theorem \ref{t1.1} for this class of functions (see Theorem \ref{th:continouscase1}).

\subsection{The log-optimal portfolio} \label{sec:logoptimal}
In the previous section we considered a fixed scenario $(\mu_t)_{t=0}^{\infty}$ of market capitalizations. We now pass to a more probabilistic setting. The stock price process $S=(S^1_t, \dots, S^d_t)^\infty_{t=0}$ and the corresponding relative market capitalizations $\mu=(\mu^1_t, \dots, \mu^d_t)^\infty_{t=0}$ are now assumed to be stochastic processes (for the sake of a convenient exposition here in discrete time) defined on a filtered probability space $(\Omega, \mathcal{F}, (\mathcal{F}_t)^\infty_{t=0}, \mathbb{P})$. As we are only interested in the {\it relative} performance of a portfolio in comparison to the market portfolio, we shall only consider the process $\mu$ in the sequel and discard the process $S$.

There is an ample literature on the log optimal portfolio, which in most cases coincides with the growth optimal or the num\'eraire portfolio (see e.g., \cite{becherer2001numeraire}, \cite{KK:07} and the references given there). For fixed horizon $T$, this portfolio is defined as the process $V^{\pi}$ as of \eqref{p4} which maximizes the expected logarithmic growth rate 
\begin{align}\label{C15}
\mathbb{E}[\log (V_T^{\pi})] =\mathbb{E}\left[\sum_{t=0}^{T-1}\log \left(\sum^d_{j=1} \pi^j_{t+1} \frac{\mu^j_{t+1}}{\mu^j_t}\right)\right].
\end{align}
Here we allow for all predictable, admissible trading strategies $(\pi_t)^{T}_{t=1}$, where the portfolio weight $\pi_t$ is used over the time interval $[t-1, t]$. Note that in \eqref{p4} we restricted ourselves to ``Markovian'' strategies $\pi_{t+1}=\pi(\mu_t)$ given by a {\it deterministic function} $\pi:\Delta^d \mapsto \bar{\Delta}^d$ as in \eqref{eq:portfolio map}. It is well understood that under mild assumptions on the process $\mu$ we may assert (in discrete as well as in continuous time) the existence of a unique optimizer $\pi^{\text{num}}$ to \eqref{C15}, see e.g., \cite{becherer2001numeraire}, \cite{KS:99}.

A natural question is the following: how does this num\'eraire portfolio relate to Cover's universal portfolio in the context of stochastic portfolio theory? After all, the optimizer $\pi^{\text{num}}$ to \eqref{C15} is, by definition, the (predictable) portfolio strategy with the optimal growth rate, relative to the horizon $T$ among ${\it all}$ predictable strategies. 

To give reasonable answers to this question we begin by identifying some appropriate assumptions on the underlying process $\mu$. First of all, we assume that $\mu$ is Markovian, which simplifies the situation considerably: indeed, if $\mu$ is Markovian, the $\mathbb{R}^d$-valued predictable trading strategy $(\pi^{\text{num}}_t)^\infty_{t=1}$ defining $V^{\text{num}}$ is of the form $\pi^{\text{num}}_t=\pi_t(\mu_{t-1}),$ for some deterministic functions $\pi_t$ on $\Delta^d$. A priori these functions may depend on $t$. If we also suppose that $\mu$ is homogeneous in time, the functions $\pi_t$ do not depend on $t$, i.e., we deal with a {\it deterministic function} $\pi:\Delta^d \to H^d$, where $H^d$ denotes the hyperplane $H^d=\{x \in \mathbb{R}^d | \sum^d_{j=1} x^j=1\}$. In contrast to the portfolio maps considered in Section \ref{sec:stochportfolio}, these functions define a fully invested portfolio without the additional requirement of being {\it long-only}, allowing also for shorting, i.e., negative investments into some of the stocks. In order to work with the same strategies as in Section \ref{sec:stochportfolio}, it is natural to focus on {\it long-only, fully invested} portfolios. That is we optimize in \eqref{C15} only over processes taking values in $\bar{\Delta}^d$. If $\mu$ is a time-homogenous Markov process, we therefore end up with a {\it deterministic function} $\pi:\Delta^d \mapsto \bar{\Delta}^d$ as in \eqref{eq:portfolio map}, which we denote by $\hat{\pi}$, i.e., $\hat{\pi}$ is the optimizer of
\begin{align}\label{C16}
\sup_{\pi \in \Pi_{\text{long}}}\mathbb{E}[\log (V_T^{\pi})],
\end{align}
where $\Pi_{\text{long}}$ denotes the set of predictable processes taking values in $\bar{\Delta}^d$. We always use $\pi^{\text{num}}$ to denote the unrestricted log-optimal trading strategy, while $\hat{\pi}$ refers to the long-only log-optimal portfolio.

From an economic point of view the Markovian assumption introduced above can be motivated by the long term stability of capital distributions of equity markets (see \cite[Chapter 5]{F:02}). In stochastic portfolio theory, this observation led for instance to stable diffusion models of the market weights (see, for example \cite{BFI:05} for the Atlas model).

\section{A comparison of the three approaches - the discrete time case}\label{sec:discretetime}

We have recapitulated in the previous section three points of view for choosing  a ``good'' or ``optimal'' portfolio with respect to the long term growth rate. Our basic goal is to relate and compare these concepts. Throughout this section we work in discrete time and assume that the market weights are described by a $d$-dimensional path $\mu=(\mu_t)^\infty_{t=0}$ with values in $\Delta^d$. We consider as far as possible a model-free approach, but will introduce a probabilistic setting when necessary.

\subsection{Different types of portfolios}\label{sec:portfoliodiscrete}
We start by defining rigorously the following portfolios:
\begin{enumerate}
\item [(i)] the best retrospectively chosen portfolio,
\item [(ii)] the universal portfolio, and
\item [(iii)] the long-only log-optimal portfolio.
\end{enumerate}

While (ii) and (iii) pertain to {\it predictable} portfolio choices which by definition are not allowed to look into the future, the choice of (i) is highly non-adaptive.

The salient difference between (ii) and (iii) is that (iii) can only be defined for a given {\it stochastic model} of the process $(\mu_t)^\infty_{t=0}$: By definition, the log-optimal portfolio maximizes the expected growth rate of the (relative) wealth process. On the other hand, the choice of the retrospectively optimal portfolio (i) and the universal portfolio (ii) do not require knowledge of the specific stochastic model underlying the process $(\mu_t)^\infty_{t=0}$. 

\subsubsection{The best retrospectively chosen portfolio}
Let us first come back to Cover's theme of choosing retrospectively at time $T$ a strategy which is optimal within a certain class of strategies. As mentioned in Theorem \ref{t1.1}, Cover originally considered the class of {\it constant rebalanced portfolios}. These strategies correspond to the constant functions $\pi:\Delta^d \to \bar{\Delta}^d.$ We want to extend his approach to more general portfolio maps $\pi:\Delta^d \to \bar{\Delta}^d$ and compare their performance with the log-optimal map $\hat{\pi}:\Delta^d \to \bar{\Delta}^d$ identified in \eqref{C16} (under time-homogenous Markovianity of $\mu$). A moment's reflection reveals that it does not make sense to allow to choose at time $T$ -- when we dispose of the wisdom of hindsight -- among {\it all measurable} functions $\pi:\Delta^d \to \bar{\Delta}^d.$ Indeed, knowing $(\mu_t)^T_{t=0}$ (and assuming that the elements $\mu_t$ are different, for $t=0, \dots, T,$ to avoid trivialities) there is no restriction to choose $\pi$ such that $\pi(\mu_t)=e_{j(t)},$ where $j(t)\in\{1, \dots, d\}$ maximizes $\frac{\mu^j_{t+1}}{\mu^j_t}.$ In other words,  we can, for each point in time $t=0, \dots, T-1,$ {\it independently} choose the best performing stock with respect to the subsequent period $[t,t+1].$ This is asking for too much clairvoyance and does not allow for meaningful results: Compare \cite{cover1996universal} and \cite[Section 5]{blum1999universal}.

However, it does make sense (economically as well as mathematically) to restrict to more regular strategies $\pi:\Delta^d \to \bar{\Delta}^d$ which satisfy a certain continuity property -- after all, if two states of the market are similar, the corresponding optimal portfolios should be similar as well. In particular, we shall work with the following set of $M$-Lipschitz portfolio maps.

\begin{definition}\label{def:LM}
For $M>0$ we define by $\mathcal{L}^M$ the set of all $M$-Lipschitz functions $\Delta^d \to \bar{\Delta}^d_{M^{-1}}.$ Here $\bar{\Delta}^d_\epsilon$ denotes the set of 
$x \in \Delta^d$ satisfying $x^j \geq \frac{\epsilon}{d},$ for $j=1, \dots, d.$ The Lipschitz constant $M$ pertains, e.g., to the metric induced by the $\ell_1$-norm $\|\cdot \|_1$ on $\bar{\Delta}^d$.
\end{definition}

\begin{remark}\label{rem:lip}
Note that the set $\mathcal{L}^M$ of $M$-Lipschitz functions $\pi:\Delta^d \to \bar{\Delta}^d_{M^{-1}}$ is a compact metric space with respect to the topology of uniform convergence induced by the norm $\|\pi\|_{\infty}=\sup \{\|\pi(x)\|_1 :x \in \Delta^d \}.$
\end{remark}

\begin{remark} We also remark that, instead of Lipschitz functions, we could just as well consider other compact function spaces, e.g., H\"older spaces eqipped with a proper norm. This is done in the context of functionally generated portfolios in Section \ref{sec:functionally}.
\end{remark}

The retrospectively chosen best performing portfolio among the above Lipschitz maps is defined as follows:

\begin{definition}
For a given scenario $(\mu_t)^T_{t=0} \in (\Delta^d)^{T+1}$ we define
\begin{align}\label{p7}
V^{*,M}_T=\sup_{\pi \in \mathcal{L}^M} V^{\pi}_T = \sup_{\pi \in \mathcal{L}^M} \prod^{T-1}_{t=0} \left( \sum^d_{j=1} \pi^j(\mu_t) \frac{\mu^j_{t+1}}{\mu^j_t} \right).
\end{align}
\end{definition}

By compactness (see Remark \ref{rem:lip}) there exists an optimizer $\pi^{*,M} \in \mathcal{L}^M$ (not unique) such that $V^{*,M}_T=V_T^{\pi^{*,M}}$, thus the $\sup$ above can be replaced by $\max$. Our aim is to find a {\it predictable} process $\pi^M=(\pi^M_t)^\infty_{t=1}$,\footnote{Here the portfolio vector $\pi_t^M$ is chosen at time $t - 1$ and is used over the time interval $[t - 1, t]$.} such that each $\pi^M_t$ is in $\mathcal{L}^M$ and such that the performance of $(V_t^{\pi^M})^\infty_{t=0}$ is asymptotically as good as that of $(V^{*,M}_t)^\infty_{t=0}$. This property should hold true {\it universally}, i.e., for all sequences $(\mu_t)^\infty_{t=0}$ in $\Delta^d.$ 

\subsubsection{The universal portfolio}
It turns out that we can define a universal portfolio in an analogous way as in \eqref{C8}.
Indeed, since $\mathcal{L}^M$ is a compact metric space,  we may find a (Borel) probability measure $\nu$ on $(\mathcal{L}_M, \|\cdot \|_\infty)$ with full support. Similarly as in Theorem \ref{t1.1} we divide the initial wealth of $1 \euro$ by investing $d\nu(\pi) \euro$ into the wealth process $V^\pi$ given by the portfolio map $\pi.$ We thus obtain, as in \eqref{C8},

\begin{align}\label{B1}
V^M_T(\nu)=\int_{\mathcal{L}^M} V^\pi_T d \nu(\pi).
\end{align}

Though not explicitly used in this section, we mention in passing that the portfolio vector at time $t$ is given by the wealth-weighted average
\[
\frac{\int_{{\mathcal{L}}^M} \pi V_t^{\pi} d \nu(\pi)}{\int_{{\mathcal{L}}^M} V_t^{\pi} d\nu(\pi)}.
\]
This can be interpreted as a posterior mean (see \cite{wong2015universal} and the references therein).

\subsubsection{The log-optimal portfolio}
In order to relate the universal portfolio with the (long-only) log-optimal portfolio, we need some assumptions on the underlying process $\mu$. For the reasons explained in Section \ref{sec:logoptimal}, we shall assume that $\mu=(\mu_t)^\infty_{t=0}$ is a time-homogeneous Markov process.  In addition, we suppose that it satisfies the following assumption.

\begin{assumption}\label{5.1}
The process $\mu$ is a \textup{time homogeneous, ergodic Markov process} with a unique invariant measure $\rho$ on the open simplex $\Delta^d$.
\end{assumption}

We denote the transition kernel of the chain by by $(\rho(x,\cdot))_{x \in \Delta^d}$, i.e., 
for all Borel sets $A\subseteq \bar{\Delta}^d$, we have $\mathbb{P}[\mu_{t+1} \in A| {\mathcal{F}}_t] = \rho(\mu_t, A)$. For further notions concerning ergodic Markov processes we refer to \cite{E:15}.

What is the {\it fully invested, long-only, log-optimal} trading strategy $\hat{\pi}$ for the process $\mu?$ Given that $\mu_t=x \in \Delta^d$, we know the conditional law $\rho(x, \cdot)$ of $\mu_{t+1}.$ We therefore choose $\hat{\pi}(x) \in \bar{\Delta}^d$ as a solution to
\begin{equation}\label{E2}
\begin{split}
\\ \hat{\pi}(x)=&\argmax_{p \in \bar{\Delta}^d} \left(\int_{\Delta^d} \log (\langle p, \frac{y}{x}\big\rangle ) \rho (x, dy)\right)
\\ =&\argmax_{p \in \bar{\Delta}^d} \left(\int_{\Delta^d} \log (\sum^d_{j=1} p^j \frac{y^j}{x^j}) \rho (x, dy)\right)
 \end{split}
\end{equation}
and assume that $\hat{\pi}(\cdot)$ can be chosen to be measurable.

For $x \in \Delta^d,$ define the number $L(x)$ as the value of the optimization problem \eqref{E2}, i.e.,
\begin{equation}\label{4.2}
    \begin{split}
L(x)&=\max_{p  \in \bar{\Delta}^d}\left(\int_{\Delta_d} \log (\langle p, \frac{y}{x} \rangle ) \rho(x, dy)\right)
\\&=\int_{\Delta_d} \log (\langle \hat{\pi}(x), \frac{y}{x} \rangle ) \rho(x, dy).
\end{split}
\end{equation}

By considering $\pi(x)=x$ (which corresponds to the market portfolio) we clearly have $L(x) \geq 0$ for each $x \in \Delta^d.$ We obtain the a.s.~relation 
\begin{align*}
L(x)=\mathbb{E}\left[\log\left (\frac{\hat{V}_{t+1}}{\hat{V}_t}\right) \middle| \mu_t=x \right],
\end{align*}
where $\hat{V}=(\hat{V}_t)^\infty_{t=0}$ denotes the long-only log-optimal wealth process $V^{\hat{\pi}}$ defined by the portfolio map $\hat{\pi}$ via \eqref{p4}.

\begin{assumption}\label{5.2}
Using the above notation we assume that
\begin{align}\label{eq:ass4.2}
L:=\int_{\Delta^d} L(x) d\rho (x) < \infty.
\end{align}
\end{assumption}

This assumption rules out the rather trivial (and unrealistic) case that the expected yield of the log-optimal portfolio equals infinity. We then may apply Birkhoff's ergodic theorem for discrete time Markov processes (see \cite[Theorem 2.2, Section 2.1.4]{E:15} to obtain the following basic result.

\begin{theorem}\label{t4.3}
Under Assumptions \ref{5.1} and \ref{5.2}, we have that, for $\rho$-a.e.~starting value $\mu_0 \in \Delta^d$, 
\begin{equation}\label{E4}
\lim_{T \to \infty} \frac{1}{T} \log (\hat{V}_T)=L,
\end{equation}
the limit holding true a.s.~as well as in $L^1$.

More generally, let $\pi:\Delta^d  \to \bar{\Delta}^d$ be any ($\rho$-measurable) portfolio map such that
\begin{equation}\label{E4a}
L^{\pi}:=\int_{\Delta^d}\left(\int_{\Delta^d} \log \left(\left\langle \pi (x), \frac{y}{x}\right\rangle \right) \rho(x,dy)\right) d \rho(x) > - \infty.
\end{equation}
We then have, for $\rho$-a.s.~starting value $\mu_0,$ that 
\begin{equation}\label{E4b}
\lim_{T \to \infty} \frac{1}{T} \log (V^{\pi}_T)=L^{\pi}
\end{equation}
a.s.~as well as in $L^1$.
\end{theorem}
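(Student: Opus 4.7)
The plan is to interpret $\log V^\pi_T$ as an ergodic sum and apply Birkhoff's ergodic theorem to the pair process $(\mu_t,\mu_{t+1})$. From the recursion \eqref{p4} we have the deterministic identity
\begin{equation*}
\log V^{\pi}_T \;=\; \sum_{t=0}^{T-1} F_\pi(\mu_t,\mu_{t+1}), \qquad F_\pi(x,y) := \log\langle \pi(x),\,y/x\rangle,
\end{equation*}
so $\frac{1}{T}\log V^\pi_T$ is exactly a time average of a deterministic function along two consecutive states of the Markov chain. This reduction is what lets us convert a stochastic growth-rate question into a purely ergodic-theoretic statement.

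Next I would set up and apply the ergodic theorem. Under the invariant initial distribution, $Z_t := (\mu_t,\mu_{t+1})$ is a stationary process on $\Delta^d\times\Delta^d$ with one-dimensional marginal $\rho(dx)\,\rho(x,dy)$, and is ergodic: any shift-invariant event of $(Z_t)$ is measurable with respect to the tail $\sigma$-algebra of $(\mu_t)$, which is trivial by Assumption \ref{5.1}. Birkhoff's pointwise ergodic theorem, applied to $F_\pi$ (with integrability to be verified below), then yields
\begin{equation*}
\frac{1}{T}\sum_{t=0}^{T-1} F_\pi(Z_t) \;\xrightarrow[T\to\infty]{}\; \int_{\Delta^d}\!\!\int_{\Delta^d} F_\pi(x,y)\,\rho(x,dy)\,\rho(dx) \;=\; L^\pi
\end{equation*}
both almost surely and in $L^1$. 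The passage from a.s.\ convergence under $\mathbb{P}^\rho$ (stationary initial law) to a.s.\ convergence for $\rho$-a.e.\ deterministic starting value $\mu_0$ is a standard Fubini disintegration: the event $\{\text{Birkhoff limit exists and equals }L^\pi\}$ has full $\mathbb{P}^\rho$-measure, hence full $\mathbb{P}^x$-measure for $\rho$-a.e.\ $x$. Taking $\pi=\hat\pi$ with $L^{\hat\pi}=L$ proves \eqref{E4}; the same argument specialized to a general admissible $\pi$ gives \eqref{E4b}.

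The main obstacle is the integrability of $F_\pi$, because $\log$ is unbounded below on the boundary of $\bar\Delta^d$. For the general statement, the hypothesis $L^\pi>-\infty$ is essentially $\int F_\pi^-\,d(\rho\otimes\rho(\cdot,\cdot))<\infty$, so one applies Birkhoff separately to $F_\pi^+$ and $F_\pi^-$ (allowing the limit of the positive part to be $+\infty$) and recovers \eqref{E4b} by subtraction, with $L^1$ convergence holding when both parts are integrable. For $\hat\pi$, Assumption \ref{5.2} together with the variational inequality $L(x)\geq \int \log\langle x,y/x\rangle\,\rho(x,dy)=0$ (comparison with the market portfolio $\pi(x)=x$) gives $L\in L^1(\rho)$ with nonnegative integrand; one then upgrades this to $F_{\hat\pi}\in L^1(\rho\otimes\rho(\cdot,\cdot))$, which amounts to showing that the maximizer $\hat\pi$ stays sufficiently far from $\partial\bar\Delta^d$ for its downside $F_{\hat\pi}^-$ to be controlled. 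This integrability upgrade, combined with the direct application of Birkhoff to the pair process, constitutes the technical heart of the proof.
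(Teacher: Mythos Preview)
Your approach is exactly the paper's: the paper gives no detailed argument, stating only that the theorem follows from ``Birkhoff's ergodic theorem for discrete time Markov processes'' with a reference. Your reduction of $\frac{1}{T}\log V^\pi_T$ to an ergodic average of $F_\pi$ along the stationary pair chain $(\mu_t,\mu_{t+1})$, together with the disintegration from $\PP^\rho$ to $\PP^x$, is the natural way to unpack that citation.

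The one misstep is your proposed route to integrability in the $\hat\pi$ case: you suggest that obtaining $F_{\hat\pi}\in L^1$ ``amounts to showing that the maximizer $\hat\pi$ stays sufficiently far from $\partial\bar\Delta^d$''. This is neither provable from Assumptions~\ref{5.1}--\ref{5.2} (nothing prevents $\hat\pi(x)$ from lying on the boundary for some $x$) nor necessary. The first assertion is simply the second applied to $\pi=\hat\pi$, since $L^{\hat\pi}=L\ge 0>-\infty$, so no separate argument is required. For the general case, once you read $L^\pi>-\infty$ as $\int F_\pi^-<\infty$, the step you leave open for $L^1$ convergence is closed by the pointwise bound $L^\pi(x)\le L(x)$ (optimality of $\hat\pi$), which gives $L^\pi\le L<\infty$ by Assumption~\ref{5.2} and hence $\int F_\pi^+=L^\pi+\int F_\pi^-<\infty$. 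This yields $F_\pi\in L^1$ directly, without any boundary considerations; the paper itself does not spell this out.
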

As mentioned above, the function $\hat{\pi}:\Delta^d \to \bar{\Delta}^d$ is $\rho$-measurable. In particular, the integral \eqref{eq:ass4.2} makes sense.  However, in general there is little reason why the function $\hat{\pi}$ should have better regularity properties than being just $\rho$-measurable. On the other hand, we may {\it approximate} $\hat{\pi}$ by more regular functions, in particular by functions in $\mathcal{L}^M$. This will be crucial for comparing the asymptotic growth rates.

The subsequent result is intuitively rather obvious, but the proof turns out to be quite technical. 

\begin{lemma}\label{5.4}
Under Assumptions \ref{5.1} and \ref{5.2}, for any $\epsilon >0$ there exist $M>0$ and an {\it $M$-Lipschitz function} $\pi_{Lip} \in \mathcal{L}^M$ such that
$$L^{\pi_{Lip}} > L - \epsilon,$$
where $L$ and $L^{\pi}$ are given in \eqref{eq:ass4.2} and \eqref{E4a} respectively. In particular, we have $L = \sup_M \sup_{\pi \in {\mathcal{L}}^M} L^{\pi}$.
\end{lemma}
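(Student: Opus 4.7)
My plan is to approximate $\hat\pi$ in two stages: first push it into the interior of the simplex so that log-integrability is robust, then approximate the resulting bounded-away-from-boundary measurable map by Lipschitz ones. For the first stage, fix $\alpha\in(0,1)$ and set
\[
\hat\pi_\alpha := (1-\alpha)\hat\pi + \alpha\,\mathbf{u},\qquad \mathbf{u}=(1/d,\dots,1/d).
\]
Then $\hat\pi_\alpha$ is $\rho$-measurable with values in the compact convex set $K_\alpha := \bar\Delta^d_{\alpha} = \{p\in\bar\Delta^d : p^j\ge \alpha/d\ \forall j\}$. Since $\langle\hat\pi_\alpha(x),y/x\rangle \ge (1-\alpha)\langle\hat\pi(x),y/x\rangle$, integrating the log against $\rho(dx)\rho(x,dy)$ gives
\[
L^{\hat\pi_\alpha} \;\ge\; L + \log(1-\alpha),
\]
which exceeds $L-\epsilon/2$ once $\alpha$ is chosen small enough.

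Next I would approximate the measurable map $\hat\pi_\alpha$ by Lipschitz maps into $K_\alpha$. By Lusin's theorem applied to the Borel probability measure $\rho$, for each $n$ there is a compact $F_n\subset\Delta^d$ with $\rho(F_n^c)<1/n$ such that $\hat\pi_\alpha|_{F_n}$ is continuous. Componentwise Tietze extension yields a continuous $\tilde\pi_n:\bar\Delta^d\to\mathbb{R}^d$; composing with the metric projection $P_{K_\alpha}:\mathbb{R}^d\to K_\alpha$ (which is Lipschitz and fixes $K_\alpha$) gives a continuous map $\bar\Delta^d\to K_\alpha$ coinciding with $\hat\pi_\alpha$ on $F_n$. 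Uniformly approximating this on the compact set $\bar\Delta^d$ to within $1/n$ by a polynomial map (Stone--Weierstrass) and reprojecting by $P_{K_\alpha}$ produces Lipschitz $\pi_n:\Delta^d\to K_\alpha$ with $\sup_{F_n}\|\pi_n-\hat\pi_\alpha\|<1/n$. Passing to a subsequence, $\pi_n\to\hat\pi_\alpha$ $\rho$-almost everywhere.

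Finally I would pass to the limit $L^{\pi_n}\to L^{\hat\pi_\alpha}$ by dominated convergence. Because every component of both $\pi_n(x)$ and $\hat\pi_\alpha(x)$ lies in $[\alpha/d,1]$, one has for all $x,y\in\Delta^d$
\[
\frac{\alpha}{d}\sum_j\frac{y^j}{x^j} \;\le\; \langle\pi_n(x),y/x\rangle \;\le\; \sum_j\frac{y^j}{x^j},
\]
and the same for $\hat\pi_\alpha$; hence the ratio of the two inner products lies in $[\alpha/d,\,d/\alpha]$, yielding the \emph{constant} uniform bound
\[
\bigl|\log\langle\pi_n(x),y/x\rangle - \log\langle\hat\pi_\alpha(x),y/x\rangle\bigr| \;\le\; \log(d/\alpha).
\]
Since this dominator is integrable against the joint probability measure $\rho(dx)\rho(x,dy)$ and the integrand converges to $0$ $\rho(dx)\rho(x,dy)$-a.e., dominated convergence gives $L^{\pi_n}\to L^{\hat\pi_\alpha}$. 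Choosing $n$ with $|L^{\pi_n}-L^{\hat\pi_\alpha}|<\epsilon/2$ and setting $M:=\max(\alpha^{-1},\mathrm{Lip}(\pi_n))$ forces $\pi_n\in\mathcal{L}^M$ and $L^{\pi_n}>L-\epsilon$, whence $\sup_M\sup_{\pi\in\mathcal{L}^M}L^\pi\ge L$; the reverse inequality is immediate from the definition of $L$.

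The main obstacle, and the reason a direct Lusin approximation of $\hat\pi$ itself cannot work, is that $\hat\pi$ may take values on the boundary of $\bar\Delta^d$ where $\log\langle\hat\pi(x),y/x\rangle$ is unbounded below; without a uniform lower bound on the portfolio weights, the log-ratio above would have no integrable dominating function. The interior regularization of Step 1 is therefore not cosmetic but precisely what powers the dominated-convergence step, at the price of a $\log(1-\alpha)$ loss that forces $M=M(\alpha)\to\infty$ as $\alpha\to 0$. This is exactly why the conclusion takes the form $L=\sup_M\sup_{\pi\in\mathcal{L}^M}L^\pi$ rather than attainment at a finite $M$.
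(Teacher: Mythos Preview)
Your proof is correct and follows the same opening move as the paper---push $\hat\pi$ into the interior via $\hat\pi_\alpha=(1-\alpha)\hat\pi+\alpha\mathbf{u}$ and pay the $\log(1-\alpha)$ price---but your second stage is genuinely different and in fact cleaner than the paper's. The paper approximates $\pi_\epsilon$ in $L^1(\rho)$ by a Lipschitz map and then controls $L^{\pi_{Lip}}-L^{\pi_\epsilon}$ by a three-set decomposition: it introduces an auxiliary parameter $\eta$ to keep $x$ away from $\partial\Delta^d$ (so that $p\mapsto\langle p,y/x\rangle$ is Lipschitz with a usable constant), and on the two exceptional sets it invokes absolute continuity of the integral coming from Assumption~\ref{5.2}. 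Your ratio argument bypasses all of this: since both $\pi_n$ and $\hat\pi_\alpha$ take values in $K_\alpha$, the inequality
\[
\frac{\alpha}{d}\sum_j\frac{y^j}{x^j}\ \le\ \langle p,\tfrac{y}{x}\rangle\ \le\ \sum_j\frac{y^j}{x^j}\qquad (p\in K_\alpha)
\]
gives the \emph{constant} bound $|\log\langle\pi_n,y/x\rangle-\log\langle\hat\pi_\alpha,y/x\rangle|\le\log(d/\alpha)$, uniformly in $x,y\in\Delta^d$. This is integrable against any probability measure, so dominated convergence applies directly without splitting into cases or tracking where $x$ sits in the simplex. The paper's route exposes the mechanism behind the $\eta$-boundary truncation one often sees in stochastic portfolio theory; yours trades that geometric picture for a shorter argument that uses Assumption~\ref{5.2} only in the trivial sense that $L<\infty$.

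Two small points worth tightening in a final write-up: first, for the $\rho$-a.e.\ convergence $\pi_n\to\hat\pi_\alpha$ you should make explicit that you choose the Lusin sets with $\rho(F_n^c)$ summable (e.g.\ $<2^{-n}$) so that Borel--Cantelli applies; second, your convergence holds for $\rho$-a.e.\ $x$ and \emph{all} $y$, which indeed implies a.e.\ convergence for the joint measure $\rho(dx)\rho(x,dy)$, but it is worth saying so.
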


\begin{proof}
Let $\hat{\pi}: \Delta^d \to \bar{\Delta}^d$ be the optimizer of \eqref{E2} and define, for $0 < \epsilon <1,$
$$\pi_\epsilon=(1-\epsilon)\hat{\pi} + \epsilon \left(\frac{1}{d}, \dots, \frac{1}{d}\right).$$
Note that $\pi_\epsilon$ takes values in $\bar{\Delta}^d_\epsilon$ (see Definition \ref{def:LM}). Also note that, for $p \in \bar{\Delta}^d_\epsilon,$ we have
\begin{align}\label{C1}
\langle p, \frac{y}{x}\rangle= \sum^d_{j=1} p^j \frac{y^j}{x^j} \geq \frac{\epsilon}{d},
\end{align}
for $x,y \in \Delta^d$, as at least one of the terms $\frac{y^j}{x^j}$ is greater or equal to one.

The average performance $L^{\pi_\epsilon}$ defined via \eqref{E4a} for the portfolio map $\pi_\epsilon$ is still almost as good as the optimal average performance $L\equiv L^{\hat{\pi}}$:
\begin{equation}\label{C2c}
    \begin{split}
\\L^{\pi_\epsilon}&=\int_{\Delta^d} \left[ \int_{\Delta^d} \log (\langle \pi_\epsilon(x), \frac{y}{x} \rangle) \rho(x,dy) \right] d\rho(x)
\\&\geq\int_{\Delta^d} \left[\int_{\Delta^d} \log ((1-\epsilon) \langle \hat{\pi} (x), \frac{y}{x}\rangle ) \rho(x,dy)\right] d\rho (x)
\\&\geq L+\log (1-\epsilon).
\end{split}
\end{equation}

To approximate $\pi_\epsilon$ by a Lipschitz function $\pi_{Lip}$ taking its values in $\bar{\Delta}^d_\epsilon,$ we need some preparation. By Assumption \ref{5.2} we can find $\delta >0$ such that, for $A \subseteq \Delta^d,$
\begin{equation}\label{C2a}
\int_A \left[\int_{\Delta^d} (\log(\frac{\epsilon}{d})-\log(\langle \pi_\epsilon(x), \frac{y}{x}\rangle)) \rho(x,dy)\right] d\rho (x) > -\epsilon,
\end{equation}
provided that $\rho [A] < \delta.$ In particular, we may find $\eta>0$ such that
\begin{equation}\label{C2b}
\int_{\Delta^d \backslash \bar{\Delta}^d_\eta} \left[\int_{\Delta^d} (\log(\frac{\epsilon}{d})-\log(\langle \pi_\epsilon(x), \frac{y}{x}\rangle)) d\rho(x,y) \right]d\rho (x) > -\epsilon.
\end{equation}
Now we find a Lipschitz function $\pi_{Lip}:\Delta^d \to \bar{\Delta}^d_\epsilon$ such that
\begin{equation}\label{C2}
\| \pi_{Lip}(x)-\pi_{\epsilon}(x) \|_1=\sum^d_{j=1} | \pi_{Lip}(x)^j - \pi_\epsilon(x)^j| < \eta \epsilon^2,
\end{equation}
for all $x \in \Delta^d \backslash A,$ where the exceptional set $A$ satisfies $\rho [A] < \delta$. Indeed, the functions from $\mathbb{R}^d \to \bar{\Delta}^d_\epsilon$
which are continuously differentiable in a neighborhood of $\Delta^d$ are dense with respect to the $L^1(\mathbb{R}^d, \rho; \mathbb{R}^d)$-norm.
Let $M$ be a Lipschitz constant for $\pi_{Lip}$ such that $M^{-1} \leq \epsilon.$

To estimate $L^{\pi_{Lip}} - L^{\pi_\epsilon}$ we argue separately on the sets $\Delta^d \backslash \bar{\Delta}^d_\eta, A \cap \bar{\Delta}^d_\eta$ and $\bar{\Delta}^d_\eta \backslash A.$ To start with the latter set note that, for $x \in \bar{\Delta}^d_\eta$ and $y \in \Delta^d$  we have that the function
$$p \mapsto \langle p, \frac{y}{x}\rangle = \sum^n_{i=1} p^j \frac{y^j}{x^j}, \qquad p \in \bar{\Delta}^d,$$
is Lipschitz on $\bar{\Delta}^d$ with Lipschitz constant bounded by $(\frac{\eta}{d})^{-1}.$ From \eqref{C2} we get
\begin{equation}\label{C3}
    \begin{split}
\\ \int_{\bar{\Delta}^d_\eta \backslash A} [\int_{\Delta^d} (\log (\langle \pi_{Lip} (x), \frac{y}{x} \rangle) -\log (\langle \pi_\epsilon(x), \frac{y}{x} \rangle) d\rho(x,y)] d\rho(x) 
\\ \geq -(\eta \cdot \epsilon^2)(\frac{\eta}{d})^{-1}(\frac{\epsilon}{d})^{-1} \geq -d^2\epsilon.
    \end{split}
\end{equation}
The term $(\frac{\epsilon}{d})^{-1}$ above comes from the fact that $\langle \pi_{Lip}(x), \frac{y}{x}\rangle$ as well as $\langle \pi_{\epsilon}(x), \frac{y}{x}\rangle$ takes values in $[\frac{\epsilon}{d}, \infty[$ and the function $z \mapsto \log(z)$ is Lipschitz on this set with constant $(\frac{\epsilon}{d})^{-1}.$

As regards the set $A \cap \bar{\Delta}_\eta^d$ we obtain from \eqref{C1} and \eqref{C2a} the estimate
\begin{equation}\label{C4}
\int_{A \cap \bar{\Delta}^d_\eta}[\int_{\Delta^d}(\log (\langle \pi_{Lip} (x), \frac{y}{x}\rangle ) - \log (\langle \pi_\epsilon(x), \frac{y}{x}\rangle) d \rho(x,y)] d\rho(x) \geq - \epsilon
\end{equation}
and a similar estimate holds true for the set $\Delta^d \backslash \bar{\Delta}_\eta^d$ by \eqref{C2b}. Hence, we obtain from \eqref{C2c}, \eqref{C3}, and \eqref{C4}
$$L^{\pi_{Lip}} \geq L + \log (1-\epsilon) -d^2 \epsilon - 2 \epsilon.$$

As $\epsilon >0$ is arbitrary, we have proved Lemma \ref{5.4}.
\end{proof}

\subsection{Asymptotically equivalent growth rates}

Having introduced the three different approaches, we are now ready to compare their asymptotic performance. We first establish an analog to Cover's Theorem \ref{t1.1} in the present setting.

\begin{theorem}\label{t4.6}
Fix a Borel probability measure $\nu$ with full support on $\mathcal{L}^M$ as in \eqref{B1} above.
For every individual sequence $(\mu_t)^\infty_{t=0}$ in $\Delta^d$ we have 
\begin{align}\label{p9}
\lim_{T \to\infty} \frac{1}{T} (\log(V_T^{*,M})-\log(V^M_T(\nu)))=0.
\end{align}
\end{theorem}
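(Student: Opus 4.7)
The plan is to adapt Cover's original argument (as given in Theorem~\ref{t1.1}), with the compact metric space $\mathcal{L}^M$ playing the role that the simplex $\bar{\Delta}^d$ of constant rebalanced portfolios plays in \eqref{p3a}. One direction is trivial: since $\nu$ is a probability measure and $V_T^\pi \leq V_T^{*,M}$ for every $\pi\in \mathcal{L}^M$, averaging against $\nu$ gives $V_T^M(\nu)\leq V_T^{*,M}$, so $\liminf_T \tfrac{1}{T}(\log V_T^{*,M}-\log V_T^M(\nu))\geq 0$. The substance is in the reverse inequality.

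For the upper bound, fix $\delta\in(0,1/(Md))$ and, using compactness of $\mathcal{L}^M$ (Remark~\ref{rem:lip}), pick an optimizer $\pi^{*,M}\in \mathcal{L}^M$ realizing $V_T^{*,M}$. The key pathwise estimate is a local comparison: for every $\pi\in \mathcal{L}^M$ with $\|\pi-\pi^{*,M}\|_\infty<\delta$, one bounds the per-step ratio
\[
\frac{\sum_j \pi^j(\mu_t)\, \mu_{t+1}^j/\mu_t^j}{\sum_j \pi^{*,M,j}(\mu_t)\, \mu_{t+1}^j/\mu_t^j}\geq 1-\delta M d,
\]
uniformly in $t$ and in the trajectory. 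This exploits the defining property of $\mathcal{L}^M$: every component of every $\pi\in \mathcal{L}^M$ is at least $(Md)^{-1}$, so the denominator dominates $(Md)^{-1}\max_j \mu_{t+1}^j/\mu_t^j$, while the numerator differs from it by at most $\delta \max_j \mu_{t+1}^j/\mu_t^j$. The universal cancellation of the $(\mu_{t+1}^j/\mu_t^j)$-factors is exactly what makes the bound model-free. Taking the product over $t=0,\ldots,T-1$, this gives $V_T^\pi/V_T^{*,M}\geq (1-\delta Md)^T$ for all such $\pi$.

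The remaining subtlety — and the step I expect to require the most care — is that the optimizer $\pi^{*,M}$ depends on both $T$ and the scenario $(\mu_t)$, so to integrate the bound above against $\nu$ we need a \emph{uniform} lower bound $c_\delta>0$ on the $\nu$-mass of $\delta$-balls around arbitrary points of $\mathcal{L}^M$. This is where compactness meets full support: cover $\mathcal{L}^M$ by finitely many $(\delta/2)$-balls $B_{\delta/2}(\pi_1),\ldots,B_{\delta/2}(\pi_N)$; each has strictly positive $\nu$-measure by the full-support hypothesis, and setting $c_\delta:=\min_i \nu(B_{\delta/2}(\pi_i))>0$, the triangle inequality yields $B_{\delta/2}(\pi_k)\subseteq B_\delta(\pi^{*,M})$ for the index $k$ with $\pi^{*,M}\in B_{\delta/2}(\pi_k)$, hence $\nu(B_\delta(\pi^{*,M}))\geq c_\delta$.

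Assembling the pieces: $V_T^M(\nu)\geq \int_{B_\delta(\pi^{*,M})}V_T^\pi\,d\nu(\pi)\geq c_\delta (1-\delta Md)^T V_T^{*,M}$, whence
\[
\frac{1}{T}\bigl(\log V_T^{*,M}-\log V_T^M(\nu)\bigr)\leq -\frac{\log c_\delta}{T}-\log(1-\delta Md).
\]
Sending $T\to\infty$ kills the first term, and subsequently sending $\delta\to 0$ kills the second, which, combined with the trivial lower bound, proves \eqref{p9}. The argument is entirely pathwise and uses nothing about $(\mu_t)$ beyond its taking values in $\Delta^d$.
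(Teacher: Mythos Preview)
Your proof is correct and follows the same overall architecture as the paper's: compactness plus full support yields a uniform lower bound on the $\nu$-mass of small balls, a per-step ratio bound of the form $V_T^\pi \geq (1-\eta Md)^T V_T^{*,M}$ for $\pi$ in an $\eta$-ball around the optimizer, and then the two limits $T\to\infty$, $\eta\to 0$. The only tactical difference is in how the per-step bound is obtained: you bound $|A-B|/B$ directly via $|A-B|<\delta\max_j(\mu_{t+1}^j/\mu_t^j)$ and $B\geq (Md)^{-1}\max_j(\mu_{t+1}^j/\mu_t^j)$, whereas the paper rewrites $\pi^M=(1-\alpha)\pi^{*,M}+\alpha\tilde\pi$ with $\alpha=\eta Md$, verifies that $\tilde\pi$ lands in $\bar\Delta^d$, and then simply drops the nonnegative $\alpha$-term; both routes give the identical factor $1-\eta Md$.
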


\begin{proof}

The inequality ``$\geq$'' is obvious. For the reverse inequality we follow the argument of \cite{blum1999universal}. Since ${\mathcal{L}}^M$ is compact and $\nu$ has full support, it is not difficult to see that for any $\eta >0$, there exists $\delta>0$ such that every $\eta$-neighbourhood of a point $\pi \in \mathcal{L}^M$ has $\nu$-measure bigger than $\delta.$

Let a scenario $(\mu_t)_{t = 0}^{\infty}$ in $\Delta^d$ be given. For a fixed time $T$ let $\pi^{*,M}\in \mathcal{L}^M$ be an optimizer of \eqref{p7}. Consider a portfolio map $\pi^M \in \mathcal{L}^M$ with $\|\pi^M-\pi^{*,M}\|_\infty < \eta,$ i.e., such that, for every $x \in \Delta^d$ we have $\|\pi^M(x) - \pi^{*,M}(x)\|_1 = \sum^d_{j=1} |\pi^M(x)^j-\pi^{*,M}(x)^j |< \eta.$

Choose $\eta>0$ small enough so that $\alpha=\eta Md < 1$ and define, for $x \in \Delta^d,$ 
\begin{align}\label{p8}
\tilde{\pi}(x)=\frac{1}{\alpha} \pi^M(x) - \frac{1-\alpha}{\alpha} \pi ^{*,M}(x),
\end{align}
so that
\begin{align}\label{p8a}
\pi^M(x)=(1-\alpha)\pi^{*,M}(x)+\alpha \tilde{\pi}(x).
\end{align}

Observe that we have that $\tilde{\pi}$ maps $\Delta^d$ into $\bar{\Delta}^d.$ Indeed, fix $1\leq j \leq d.$ As $|\pi^M(x)^j -  \pi^{*,M}(x)^j| < \eta$ we obtain
\begin{equation}
    \begin{split}
\\ \tilde{\pi} (x)^j &= \pi^{*,M}(x)^j + \frac{1}{\alpha}(\pi^M(x)^j-\pi^{*,M}(x)^j)
\\ &\geq (M d)^{-1} -\frac{\eta}{\alpha} = 0.
\end{split}
\end{equation}

Using \eqref{p8a} we may estimate 
\begin{equation}
    \begin{split}
\\ \frac{1}{T} \log V^{\pi^M}_T&=\frac{1}{T} \sum^{T-1}_{t=0} \log (\langle \pi^M(\mu_t), \frac{\mu_{t+1}}{\mu_t}\rangle)
\\& \geq \frac{1}{T} \sum^{T-1}_{t=0} \log (\langle (1-\alpha) \pi^{*,M}(\mu_t), \frac{\mu_{t+1}}{\mu_t}\rangle)
\\& = \frac{1}{T} \log(V_T^{*,M})+ \log(1-\alpha).     
\end{split}
\end{equation}

Fix $\epsilon >0.$ Choosing $\eta >0$ sufficiently small we can make $\alpha =\eta M d$ small enough such that the final term is bigger than $-\epsilon$.
Summing up, we have
\begin{align}\label{9p}
\frac{1}{T}[\log (V^{*,M}_T) - \log (V_T^{\pi^M})] < \epsilon
\end{align}
whenever $\| \pi^M - \pi^{*,M}\|_\infty <\eta.$

Denote by $B=B_\eta (\pi^{*,M})$ the $\|\cdot \|_\infty$-ball with radius $\eta$ in $\mathcal{L}^M$ which has $\nu$-measure at least $\delta >0,$ where $\delta$ only depends on $\eta.$ As each element $\pi^M$ of $B$ satisfies \eqref{9p} we have
\begin{align}\label{eq:compuniversal}
\frac{1}{T} \log (V_T^M(\nu)) \geq \frac{\log(\delta)}{T} + \frac{1}{T} \log (V^{*,M}_T)-\epsilon.
\end{align}
Now \eqref{p9} is proved by sending in \eqref{eq:compuniversal} $T$ to infinity and letting $\epsilon$ to zero.
\end{proof}

It is worth mentioning that in Theorem \ref{t4.6} we did not have to impose the uniform boundedness condition \eqref{p3} (compare this result with \cite[Lemma 3.3]{wong2015universal}). Now we can combine Lemma \ref{5.4}, which pertains to the probabilistic setting of this section, with Theorem \ref{t4.6} which is a result holding true for {\it every} (as opposed to {\it $\mathbb{P}$-almost every}) sequence $(\mu_t)^\infty_{t=0}$ in $\Delta^d,$ to obtain -- under suitable assumptions --  equality of the asymptotic performance among the three classes of strategies introduced in Section \ref{sec:portfoliodiscrete}.

We first state the equivalence of the performance properties in the context of a fixed Lipschitz constant $M$. In Corollary \ref{c4.8} below we then formulate a qualitative result not relying on this constant.

\begin{theorem}\label{tG3}
Let $\mu=(\mu_t)^\infty_{t=0}$ be a $\Delta^d$-valued stochastic process satisfying Assumptions \ref{5.1} and \ref{5.2} and being defined on the canonical space $\Omega =(\Delta^d)^\mathbb{N}$ equipped with its natural filtration and a probability measure $\mathbb{P}.$ Let $M>0$ be a fixed Lipschitz constant. 
\begin{enumerate}
\item [(i)]
For a random element $\omega \in \Omega$ consider the trajectory $(\mu_t(\omega))^\infty_{t=0}.$ For each $T \in \mathbb{N},$ define the element $\pi^{*,M}(\omega) \in \mathcal{L}^M$ as well as the positive real numbers $V^{*,M}_T(\omega) :=V^{\pi^{*,M}}_T(\omega)$ as in \eqref{p7}.
\item [(ii)]
Fix a probability measure $\nu$ on $\mathcal{L}^M$ with full support and denote by $V^M(\nu)(\omega),$ for a given trajectory $(\mu_t(\omega))^\infty_{t=0},$ the corresponding sequence of positive real numbers $(V^M_t(\nu)(\omega))^\infty_{t=0}$ as in \eqref{B1}.
\item [(iii)]
Define the log-optimal wealth process among the portfolio maps $\pi \in \mathcal{L}^M$ by 
\begin{align}
\hat{\pi}^M = \argmax_{\pi \in \mathcal{L}^M} \int_{\Delta^d} \left[\int_{\Delta^d} \log (\langle \pi (x), \frac{y}{x} \rangle ) \rho(x,dy) \right] d\rho (x)
\end{align}
and define the corresponding wealth process $(\hat{V}^M_t(\omega))^\infty_{t=0}=(V^{\hat{\pi}^M}_t(\omega))^\infty_{t=0}$ as in Theorem \ref{tG3}.
\end{enumerate}
Then, for $\mathbb{P}$-almost all trajectory $(\mu_t(\omega))^\infty_{t=0}$, we have
\begin{align}\label{G4}
\liminf_{T\to \infty} \frac{1}{T} \log ( V^{*,M}_T(\omega ) ) = \liminf_{T\to \infty} \frac{1}{T} \log (V^M_T(\nu) (\omega))=
\lim_{T \to \infty} \frac{1}{T} \log (\hat{V}^M_T).
\end{align}
In addition, the first equality holds for {\it all} sequences $(\mu_t(\omega))^\infty_{t=0}$ in $\Delta^d.$ 
\end{theorem}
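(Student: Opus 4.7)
The first equality in \eqref{G4} is exactly the pathwise content of Theorem \ref{t4.6}: for \emph{every} sequence $(\mu_t)\subset \Delta^d$ one has $\tfrac{1}{T}(\log V^{*,M}_T - \log V^M_T(\nu)) \to 0$, which forces the $\liminf$ of $\tfrac{1}{T}\log V^{*,M}_T$ and of $\tfrac{1}{T}\log V^M_T(\nu)$ to coincide. This needs no probability and yields the final assertion of the theorem.

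For the second equality I would prove the stronger statement that, $\mathbb{P}$-a.s., $\lim_{T\to\infty} \tfrac{1}{T}\log V^{*,M}_T(\omega) = L^{\hat\pi^M}$, where by definition $L^{\hat\pi^M} = \sup_{\pi\in\mathcal{L}^M} L^\pi$. Observe first that every $\pi\in\mathcal{L}^M$ takes values in $\bar\Delta^d_{M^{-1}}$, so $\langle\pi(x),y/x\rangle \geq (Md)^{-1}$ and the integrability hypothesis \eqref{E4a} of Theorem \ref{t4.3} is automatic for all $\pi\in\mathcal{L}^M$. Applied to $\hat\pi^M$ this gives $\tfrac{1}{T}\log\hat V^M_T \to L^{\hat\pi^M}$ a.s.; combined with the pathwise inequality $V^{*,M}_T \geq \hat V^M_T$ we obtain the lower bound $\liminf\tfrac{1}{T}\log V^{*,M}_T \geq L^{\hat\pi^M}$ a.s., and directly also the third expression in \eqref{G4}.

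For the matching upper bound I would use the compactness of $(\mathcal{L}^M,\|\cdot\|_\infty)$ together with the following deterministic Lipschitz-in-$\pi$ estimate: if $\pi,\pi'\in\mathcal{L}^M$ satisfy $\|\pi-\pi'\|_\infty\leq\eta$, then $\pi'(x)^j\geq(Md)^{-1}$ yields $\langle\pi'(x),y/x\rangle\geq (Md)^{-1}\max_j y^j/x^j$, hence
$$\frac{\langle\pi(x),y/x\rangle}{\langle\pi'(x),y/x\rangle} \;\leq\; 1+\eta Md,$$
which, after taking logarithms and summing over $t=0,\dots,T-1$, gives $\log V^{\pi}_T \leq \log V^{\pi'}_T + T\cdot Md\,\eta$. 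Given $\epsilon>0$, choose $\eta$ with $Md\eta<\epsilon$ and a finite $\eta$-net $\{\pi_1,\dots,\pi_N\}\subset\mathcal{L}^M$; then $\log V^{*,M}_T \leq \max_i\log V^{\pi_i}_T + T\epsilon$ for every trajectory. Applying Theorem \ref{t4.3} to each of the finitely many $\pi_i$ on a common event of full probability, one has $\tfrac{1}{T}\log V^{\pi_i}_T \to L^{\pi_i} \leq L^{\hat\pi^M}$, so $\limsup\tfrac{1}{T}\log V^{*,M}_T \leq L^{\hat\pi^M}+\epsilon$ a.s.; letting $\epsilon\downarrow 0$ finishes the proof.

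The main obstacle is this upper bound, which has to turn the \emph{pointwise-in-$\pi$} ergodic theorem into a \emph{uniform} control over the retrospectively optimal map $\pi^{*,M}(\omega)$, an object that depends both on $\omega$ and on $T$. The combination of compactness of $\mathcal{L}^M$ and the displayed ``Lipschitz in $\pi$'' estimate is what reduces the supremum to a finite maximum, and it is here that the uniform lower bound $(Md)^{-1}$ on the components of portfolio maps in $\mathcal{L}^M$ enters in an essential way; without it, the factor $\eta Md$ above would not be finite.
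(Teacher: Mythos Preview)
Your proof is correct and in fact establishes more than the paper claims: you show that the full limit $\lim_{T\to\infty}\tfrac{1}{T}\log V^{*,M}_T$ exists a.s.\ and equals $L^{\hat\pi^M}$, whereas the theorem statement and the paper's proof only pin down the $\liminf$.

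The first equality and the lower bound are handled identically in both arguments. For the upper bound, however, the paper takes a genuinely different route. Rather than your $\eta$-net/compactness argument, it works in expectation: from the inequality $\mathbb{E}[\log V^M_T(\nu)] \leq \sup_{\pi\in\mathcal{L}^M}\mathbb{E}[\log V^\pi_T] = \mathbb{E}[\log\hat V^M_T]$ (justified through the time-homogeneous Markov structure and the fact that the universal portfolio weight at each time is a convex combination of values $\pi(\mu_t)$ with $\pi\in\mathcal{L}^M$, hence lies in $\bar\Delta^d_{M^{-1}}$), one divides by $T$, applies Fatou's lemma, and combines with the already-established a.s.\ lower bound $\liminf_T\tfrac{1}{T}\log V^M_T(\nu)\geq L^M$ to force this $\liminf$ to be a.s.\ constant and equal to $L^M$. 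This Fatou argument is softer --- it never needs your explicit Lipschitz-in-$\pi$ estimate $\langle\pi(x),y/x\rangle/\langle\pi'(x),y/x\rangle\leq 1+\eta Md$ --- but it only controls the $\liminf$. Your approach is more elementary and pathwise in spirit: the uniform lower bound $\langle\pi'(x),y/x\rangle\geq (Md)^{-1}\max_j y^j/x^j$ lets you reduce the supremum over $\mathcal{L}^M$ to a finite maximum, after which Theorem~\ref{t4.3} applied finitely many times gives the matching $\limsup$ bound.
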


\begin{proof}
Regarding the well-definedness, the only issue which has not been shown already is the well-definedness of $\hat{\pi}^M$. 
This fact follows in a straight-forward way from the compactness of $\mathcal{L}^M$ with respect to $\|\cdot\|_{\infty}$ (compare the proof of Lemma \ref{5.4}).
Note also that by the ergodic theorem (Theorem \ref{t4.3}), for each $\pi \in {\mathcal{L}}^M$ we have the a.s.~limit
\[
\lim_{T \rightarrow \infty} \frac{1}{T} \log V_T^{\pi}(\omega) = L^{\pi}.
\]
Here we recall that $L^{\pi}$ is defined by \eqref{E4a}. In particular, since $\hat{\pi}^M \in {\mathcal{L}}^M$ by definition, we have a.s.
\begin{align}\label{eq:LM}
\lim_{T \rightarrow \infty} \frac{1}{T} \log \hat{V}^M_T(\omega) = \sup_{\pi \in {\mathcal{L}}^M} L^{\pi}=: L^M.
\end{align}

That the first equality in \eqref{G4} holds for {\it all} sequences $(\mu_t(\omega))^\infty_{t=0}$ in $\Delta^d$, was shown in Theorem \ref{t4.6}.

For each fixed $T \in \mathbb{N}$ and each scenario $(\mu_t(\omega))^T_{t=0},$ we have the obvious a.s.~inequality
\begin{align}\label{G5a}
\frac{1}{T} \log (\hat{V}^M_T) \leq \frac{1}{T} \log (V^{*,M}_T(\omega)).
\end{align}
Indeed, \eqref{G5a} follows from the fact that $\pi^{*,M}_T \in \mathcal{L}^M$ is chosen 
retrospectively as the best performing element in $\mathcal{L}^M$. It therefore dominates the choice $\hat{\pi}^M \in \mathcal{L}^M$. Recall that the latter has to be made without knowing the trajectory $(\mu_t(\omega))^T_{t=0}.$

Using \eqref{eq:LM}, \eqref{G5a} and Theorem \ref{t4.6} we thus have for $\mathbb{P}$-a.e. scenario $(\mu_t(\omega))_{t=0}^T$,
\begin{align}\label{eq:pathwiselim}
L^M= \lim_{T\to \infty}\frac{1}{T} \log (\hat{V}^M_T) \leq \liminf_{T \to \infty} \frac{1}{T} \log (V^{*,M}_T(\omega))=\liminf_{T\to \infty} \frac{1}{T} \log (V^M_T(\nu) (\omega)).
\end{align}
On the other hand, by the definition of $(\hat{V}^M_t)^\infty_{t=0}$ as the log-optimizer within the class $\mathcal{L}^M$, we have
\begin{align}\label{eq:expectdom}
 \mathbb{E}[\log (V^M_T(\nu)) ]\leq \sup_{\pi \in \mathcal{L}^M}\mathbb{E}[\log (V^{\pi}_T)]=\mathbb{E}[\log (\hat{V}^M_T)].
\end{align}
Note here that the universal portfolio process to build $V^{M}_T(\nu)$ is at each time a random mixture of Lipschitz maps lying in $\mathcal{L}^M$. 
By the time-homogenous Markovianity it is thus sufficient to dominate the left hand side of \eqref{eq:expectdom} by taking the supremum over elements in $\mathcal{L}^M$.
Combining now \eqref{eq:expectdom}, 
Theorem \ref{t4.3} and \eqref{eq:pathwiselim} yields,
\begin{align*}
 \mathbb{E}\left[\liminf_{T\to \infty}\frac{1}{T} \log (V^M_T(\nu))\right]&\leq \liminf_{T\to \infty} \frac{1}{T} \mathbb{E}[\log (V^M_T(\nu)) ]\\
 &\leq \lim_{T\to \infty} \frac{1}{T} \mathbb{E}[\log (\hat{V}^M_T)]\\
 &= \lim_{T\to \infty}\frac{1}{T} \log (\hat{V}^M_T) \\
 &\leq \liminf_{T\to \infty} \frac{1}{T} \log (V^M_T(\nu)), \quad \mathbb{P}\text{-a.s.},
\end{align*}
where the first inequality follows from Fatou's lemma (note here that $\log (V^M_T(\nu))$ is bounded from below, see e.g., \eqref{eq:compuniversal}).
From this we see that 
\[
\liminf_{T\to \infty} \frac{1}{T} \log (V^M_T(\nu))
\]
is $\mathbb{P}$-almost surely constant and equal to 
$\lim_{T\to \infty}\frac{1}{T} \log (\hat{V}^M_T) $. Hence the assertion is proved.

\end{proof}

To formulate a result not depending explicitly on the constant $M$, we define a universal portfolio process $(V_t(\nu))^\infty_{t=0}$ in the following way. For $M=1,2,3, \dots$ choose a measure $\nu^M$ on $\mathcal{L}^M$ with full support. Define $\nu=\sum^{\infty}_{M=1} 2^{-M} \nu^M$ and the process $V(\nu)$ as in \eqref{B1} 
\begin{align}\label{H1}
V_t(\nu)=\int_{\bigcup^\infty_{M=1}\mathcal{L}^M} V^\pi_t d\nu(\pi), \qquad t \in \mathbb{N}.
\end{align}
To be precise, here we consider $\bigcup_{M = 1}^{\infty} {\mathcal{L}}^M$ as a disjoint union. Recall that $\hat{V}_t$ is the wealth process of the log-optimal portfolio \eqref{E2}. 

\begin{corollary}\label{c4.8}
Under the assumptions of Theorem \ref{tG3} we have $\mathbb{P}$-a.s.
\begin{align}
\lim_{M \rightarrow \infty} \lim_{T \rightarrow \infty} \frac{1}{T} \log  V_T^{*, M}=\lim_{T\to \infty} \frac{1}{T} \log V_T(\nu) = \lim_{T\to \infty} \frac{1}{T} \log \hat{V}_T =  L,
\end{align}
where $L$ is defined in \eqref{eq:ass4.2}.
\end{corollary}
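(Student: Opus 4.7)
My approach is to reduce the three equalities to Theorem \ref{tG3}, Lemma \ref{5.4}, and Theorem \ref{t4.3}. The only non-routine ingredient is a num\'eraire-type supermartingale property of $\hat V$ among long-only predictable strategies, which supplies the upper bound on $\frac{1}{T}\log V_T(\nu)$.

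Since $\mathcal{L}^M\subseteq\mathcal{L}^{M+1}$, the constants $L^M:=\sup_{\pi\in\mathcal{L}^M}L^{\pi}$ are nondecreasing in $M$, and Lemma \ref{5.4} yields $\lim_{M\to\infty}L^M=L$. The equality $\lim_T\frac{1}{T}\log\hat V_T=L$ a.s.\ is exactly Theorem \ref{t4.3} applied to $\hat\pi$, since $L^{\hat\pi}=L$. For each $M$, Theorem \ref{tG3} gives $\liminf_T\frac{1}{T}\log V_T^{*,M}=L^M$ a.s., so sending $M\to\infty$ produces $\lim_{M\to\infty}\lim_{T\to\infty}\frac{1}{T}\log V_T^{*,M}=L$ a.s., with the inner $\lim_T$ understood as $\liminf_T$ in accordance with \eqref{G4}.

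For the universal portfolio the lower bound is immediate: since $\nu=\sum_{M\geq 1}2^{-M}\nu^M$ on the disjoint union, $V_T(\nu)\geq 2^{-M}V_T^M(\nu^M)$ for every $M$, so by Theorem \ref{tG3}(ii),
\[
\liminf_{T\to\infty}\tfrac{1}{T}\log V_T(\nu)\ \geq\ \liminf_{T\to\infty}\tfrac{1}{T}\log V_T^M(\nu^M)\ =\ L^M\qquad\mathbb{P}\text{-a.s.},
\]
and letting $M\to\infty$ yields $\liminf_T\frac{1}{T}\log V_T(\nu)\geq L$ a.s. For the matching upper bound I would observe that $V_t(\nu)$ is the wealth process of the predictable long-only strategy
\[
\pi^{\mathrm{univ}}_{t+1}=\frac{\int\pi(\mu_t)V_t^\pi\,d\nu(\pi)}{V_t(\nu)}\in\bar{\Delta}^d,
\]
and invoke the num\'eraire inequality
\[
\int_{\Delta^d}\frac{\langle p,y/x\rangle}{\langle\hat\pi(x),y/x\rangle}\,\rho(x,dy)\ \leq\ 1,\qquad p\in\bar{\Delta}^d,
\]
which follows from the first-order optimality condition at the maximizer $\hat\pi(x)$ of the concave program \eqref{E2} (differentiate along the segment from $\hat\pi(x)$ to $p$, interchanging derivative and integral by monotone convergence, the difference quotient being nonincreasing in $\epsilon$ by concavity of the logarithm). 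Setting $p=\pi^{\mathrm{univ}}_{t+1}$ shows that $V_t(\nu)/\hat V_t$ is a nonnegative $(\mathcal{F}_t)$-supermartingale; by Doob's convergence theorem it converges a.s.\ to a finite limit, so $\log V_T(\nu)-\log\hat V_T$ is a.s.\ bounded above. Dividing by $T$ and using $\lim_T\frac{1}{T}\log\hat V_T=L$ yields $\limsup_T\frac{1}{T}\log V_T(\nu)\leq L$ a.s., and combined with the lower bound we obtain $\lim_T\frac{1}{T}\log V_T(\nu)=L$ a.s.

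The main obstacle is the num\'eraire inequality, which requires justifying the derivative-integral interchange; this is handled by the monotone-convergence argument sketched above, using Assumption \ref{5.2} to ensure the relevant integrals are finite and the fact that $\hat\pi(x)\in\bar{\Delta}^d$ has at least one positive coordinate, so $\langle\hat\pi(x),y/x\rangle>0$ for all $y\in\Delta^d$.
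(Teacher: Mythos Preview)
Your proposal is correct and follows essentially the same route as the paper: the lower bound from Theorem~\ref{tG3} and Lemma~\ref{5.4}, the upper bound via the supermartingale property of $V_t(\nu)/\hat V_t$ and Doob's convergence theorem. The paper packages the supermartingale step as two separate lemmas (Lemma~\ref{lem:supermart} and Lemma~\ref{lem:leq1}); it derives the supermartingale property by Fubini, reducing to each fixed $\pi$, whereas you apply the num\'eraire inequality directly to the mixture $\pi^{\mathrm{univ}}_{t+1}$---both are valid. For the num\'eraire inequality itself, the paper's argument (following Becherer) avoids your monotone-convergence interchange by bounding $\log a-\log b\le(a-b)/b$ and passing to the limit with Fatou, which is slightly cleaner since the difference quotients in your sketch need not be bounded below by an integrable function a priori.
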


\begin{proof}
Note that 
\begin{align*}
\lim_{M \rightarrow \infty} \liminf_{T \rightarrow \infty} \frac{1}{T} \log  V_T^{*, M} =\liminf_{T\to \infty} \frac{1}{T} \log V_T(\nu) = \lim_{T\to \infty} \frac{1}{T} \log \hat{V}_T = L
\end{align*} 
follows from  Lemma \ref{5.4} and Theorem \ref{tG3}. Since Theorem \ref{t4.6} implies that
\[
\limsup_{T\to \infty} \frac{1}{T} \log V_T(\nu) = \lim_{M \rightarrow \infty} \limsup_{T \rightarrow \infty} \frac{1}{T} \log  V_T^{*, M},
\]
the corollary is proved if 
\begin{align}\label{eq:finite}
\limsup_{T\to \infty} \frac{1}{T} (\log V_T(\nu)-\log \hat{V}_T) = \limsup_{T\to \infty} \frac{1}{T} \log \left(\frac{V_T(\nu)} {\hat{V}_T}\right) =0, \quad \mathbb{P}\text{-a.s.}
\end{align}
As by Lemma \ref{lem:supermart}, $(\frac{V_t(\nu)}{\hat{V}_t})_{t =0}^{\infty}$ 
is a non-negative supermartingale, it converges $\mathbb{P}$-a.s. to a finite limit as $t \to \infty$. This in turn implies \eqref{eq:finite} and proves the assertion.
\end{proof}

\begin{lemma}\label{lem:supermart}
The process $(\frac{V_t(\nu)}{\hat{V}_t})_{t =0}^{\infty}$ is a non-negative supermartingale.
\end{lemma}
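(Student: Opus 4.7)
The plan is to reduce the mixture statement to a per-$\pi$ supermartingale property via a Tonelli interchange, and then prove the latter as the classical num\'eraire-portfolio first-order condition. Non-negativity of $V_t(\nu)/\hat{V}_t$ is immediate since both $V^{\pi}_t$ and $\hat{V}_t$ are strictly positive by \eqref{p4}.

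First I would observe that $V_t(\nu) = \int V_t^{\pi}\,d\nu(\pi)$ is a non-negative mixture and $\hat{V}_{t+1}>0$, so Tonelli's theorem allows the interchange
\begin{align*}
\mathbb{E}\!\left[\frac{V_{t+1}(\nu)}{\hat{V}_{t+1}}\,\bigg|\,\mathcal{F}_t\right] = \int_{\bigcup_M \mathcal{L}^M} \mathbb{E}\!\left[\frac{V_{t+1}^{\pi}}{\hat{V}_{t+1}}\,\bigg|\,\mathcal{F}_t\right] d\nu(\pi),
\end{align*}
where the required joint $(\omega,\pi)$-measurability is inherited from the continuity of $\pi\mapsto V_t^{\pi}(\omega)$, which is built into \eqref{p4}. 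It thus suffices to show that for every fixed $\pi \in \bigcup_M \mathcal{L}^M$ the ratio process $(V_t^{\pi}/\hat{V}_t)_{t\ge 0}$ is itself a supermartingale.

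For this individual supermartingale property, I would use \eqref{p4} with $\pi_{t+1}=\pi(\mu_t)$ and $\hat{\pi}_{t+1}=\hat{\pi}(\mu_t)$, together with the time-homogeneous Markov property, to reduce the question to the \emph{num\'eraire inequality}: for every $x \in \Delta^d$ and every $q \in \bar{\Delta}^d$,
\begin{align}\label{eq:numineqplan}
I(x,q) := \int_{\Delta^d} \frac{\langle q, y/x\rangle}{\langle \hat{\pi}(x), y/x\rangle}\,\rho(x,dy) \;\leq\; 1.
\end{align}
I would derive \eqref{eq:numineqplan} from the pointwise optimality \eqref{E2} of $\hat{\pi}(x)$ by plugging the feasible perturbation $(1-\epsilon)\hat{\pi}(x) + \epsilon q \in \bar{\Delta}^d$ into the objective, which (after subtracting the value at $\hat{\pi}(x)$ and dividing by $\epsilon$) produces
\begin{align*}
0 \;\geq\; \int_{\Delta^d} \frac{1}{\epsilon}\log\!\left(1 + \epsilon\,\frac{\langle q - \hat{\pi}(x), y/x\rangle}{\langle \hat{\pi}(x), y/x\rangle}\right) \rho(x,dy), \qquad \epsilon \in (0,1].
\end{align*}
Since $\epsilon \mapsto \epsilon^{-1}\log(1+\epsilon u)$ is non-increasing in $\epsilon$ and converges to $u$ as $\epsilon\downarrow 0$ (the constraint $1+\epsilon u>0$ is automatic here, as the perturbed portfolio stays in $\bar{\Delta}^d$), monotone convergence lets me pass to the limit and obtain $I(x,q)-1 \leq 0$.

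Combining the two steps gives $\mathbb{E}[V_{t+1}(\nu)/\hat{V}_{t+1}\,|\,\mathcal{F}_t] \leq V_t(\nu)/\hat{V}_t$, which is the desired supermartingale property. The main obstacle I foresee is the limit passage $\epsilon \downarrow 0$ in \eqref{eq:numineqplan}: because $\hat{\pi}(x)$ may lie on $\partial \bar{\Delta}^d$, only one-sided directional derivatives of the log-optimization objective are available, and any naive dominated-convergence argument would require extra integrability of the linearization. The pointwise monotonicity of $\epsilon^{-1}\log(1+\epsilon u)$ resolves this cleanly and needs no input beyond Assumption \ref{5.2}, which already guarantees finiteness of $L(x)$ that underlies \eqref{E2}.
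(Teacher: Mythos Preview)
Your proposal is correct and follows essentially the same architecture as the paper: reduce the mixture to a per-$\pi$ supermartingale by Tonelli/Fubini, and then derive the per-$\pi$ statement from the pointwise num\'eraire inequality $\int \frac{\langle q,\,y/x\rangle}{\langle \hat{\pi}(x),\,y/x\rangle}\,\rho(x,dy)\le 1$, which is exactly the content of the paper's Lemma~\ref{lem:leq1}.

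The only noteworthy difference is in how the limit $\epsilon\downarrow 0$ is justified in the first-order condition. The paper (following Becherer) bounds $\log\langle\hat{\pi},y/x\rangle-\log\langle\pi^{\alpha},y/x\rangle$ by $\frac{\alpha\langle\hat{\pi}-\pi,\,y/x\rangle}{\langle\pi^{\alpha},\,y/x\rangle}$, obtains the explicit upper bound $\frac{1}{1-\alpha}$ for the ratio integral, and then applies Fatou. You instead exploit the monotonicity of $\epsilon\mapsto\epsilon^{-1}\log(1+\epsilon u)$ together with the uniform lower bound $u\ge -1$ (which follows from $\langle q,y/x\rangle\ge 0$) and apply monotone convergence. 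Both devices are clean and avoid any extra integrability hypotheses; your route is arguably more transparent about why boundary maximizers $\hat{\pi}(x)\in\partial\bar{\Delta}^d$ cause no trouble.
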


\begin{proof}
First note that for any $\pi : \Delta^d \to \bar{\Delta}^d $, $(\frac{V^\pi_t}{\hat{V}_t})_{t =0}^{\infty}$ is a non-negative supermartingale. Indeed, by Lemma \ref{lem:leq1}
\begin{align*}
 E\left[ \frac{V^\pi_{t+1}}{\hat{V}_{t+1}} \big | \mathcal{F}_{t}\right] =\frac{V^\pi_{t}}{\hat{V}_{t}} \int_{\Delta^d} \frac{\langle \pi(\mu_{t}), \frac{y}{\mu_{t}}\rangle}{\langle \hat{\pi}(\mu_{t}), \frac{y}{\mu_{t}}\rangle} \rho(\mu_{t},dy) \leq \frac{V^\pi_{t}}{\hat{V}_{t}}.
\end{align*}
By Fubini's theorem we can thus conclude the supermartingale property of $(\frac{V_t(\nu)}{\hat{V}_t})_{t =0}^{\infty}$:
\begin{align*}
E\left[\frac{V_{t+1}(\nu)}{\hat{V}_{t+1}}\big | \mathcal{F}_{t}\right]&=
E\left[\int_{\bigcup^\infty_{M=1}\mathcal{L}^M} \frac{V^\pi_{t+1}}{\hat{V}_{t+1}} d\nu(\pi)\big | \mathcal{F}_{t}\right]\\
&=\int_{\bigcup^\infty_{M=1}\mathcal{L}^M} E\left[ \frac{V^\pi_{t+1}}{\hat{V}_{t+1}} \big | \mathcal{F}_{t}\right] d\nu(\pi)\\
&\leq \int_{\bigcup^\infty_{M=1}\mathcal{L}^M}  \frac{V^\pi_{t}}{\hat{V}_{t}}d\nu(\pi)=\frac{V_{t}(\nu)}{\hat{V}_{t}}.
\end{align*}
\end{proof}

The following lemma is needed to establish the supermartingale property stated in Lemma \ref{lem:supermart}.

\begin{lemma}\label{lem:leq1}
Let $\hat{\pi}$ be given by \eqref{E2}. Then for every $\pi : \Delta^d \to \bar{\Delta}^d $ and every $x \in \Delta^d$,
\[
\int_{\Delta^d} \frac{\langle \pi(x), \frac{y}{x}\rangle}{\langle \hat{\pi}(x), \frac{y}{x}\rangle} \rho(x,dy) \leq 1.
\]
\end{lemma}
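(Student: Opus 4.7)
The goal is to extract the desired inequality from the first-order optimality of $\hat{\pi}(x)$ in the concave program \eqref{E2}. Fix $x \in \Delta^d$ and set $p := \pi(x) \in \bar{\Delta}^d$. For $\epsilon \in (0,1)$, the convex combination $p_\epsilon := (1-\epsilon)\hat{\pi}(x) + \epsilon p$ still lies in $\bar{\Delta}^d$, and by the defining maximum property of $\hat{\pi}(x)$ one obtains
\[
0 \;\geq\; \int_{\Delta^d} \left[ \log \langle p_\epsilon, \tfrac{y}{x} \rangle - \log \langle \hat{\pi}(x), \tfrac{y}{x}\rangle \right] \rho(x, dy) \;=\; \int_{\Delta^d} \log\!\left(1 + \epsilon U(y)\right) \rho(x, dy),
\]
where $U(y) := \frac{\langle p, y/x\rangle}{\langle \hat{\pi}(x), y/x\rangle} - 1 \geq -1$. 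All integrals above are finite: $\int \log \langle \hat{\pi}(x), \frac{y}{x}\rangle\, \rho(x, dy) = L(x) \in [0,\infty)$ (using $L(x) \geq 0$ via the market choice $\pi(x) = x$, which gives $\langle x, y/x\rangle = 1$), and $\langle p_\epsilon, y/x\rangle \geq (1-\epsilon)\langle \hat{\pi}(x), y/x\rangle$ yields the lower bound $\log\langle p_\epsilon, y/x\rangle \geq \log(1-\epsilon) + \log\langle \hat{\pi}(x), y/x\rangle$.

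Dividing by $\epsilon$ and letting $\epsilon \downarrow 0$ will give the assertion provided one can justify the interchange
\[
\lim_{\epsilon \downarrow 0} \int_{\Delta^d} \frac{\log(1 + \epsilon U(y))}{\epsilon}\, \rho(x, dy) \;=\; \int_{\Delta^d} U(y)\, \rho(x, dy).
\]
I plan to establish this by monotone convergence. Two ingredients are needed. First, $\epsilon \mapsto \log(1+\epsilon U)/\epsilon$ is monotonically non-decreasing as $\epsilon \downarrow 0$, with pointwise limit $U(y)$; this is the standard fact that $g(s)/s$ is non-increasing in $s>0$ for any concave $g$ with $g(0)=0$, applied to $g(t) = \log(1+tU)$. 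Second, the bound $U \geq -1$ gives $\log(1+\epsilon U)/\epsilon \geq \log(1-\epsilon)/\epsilon$, which is uniformly bounded below (by $-2\log 2$) for $\epsilon \in (0, \tfrac12)$; also $U^- \leq 1$, so $\int U\, \rho(x, dy) \in [-1, +\infty]$ is well-defined.

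Putting everything together, since the left-hand side of the displayed inequality is $\leq 0$ for every $\epsilon \in (0,1)$, passing to the limit yields $\int U(y)\, \rho(x, dy) \leq 0$, which is precisely $\int_{\Delta^d} \frac{\langle \pi(x), y/x\rangle}{\langle \hat{\pi}(x), y/x\rangle} \rho(x, dy) \leq 1$. The main technical point is the exchange of limit and integral: because $\hat{\pi}(x)$ may lie on the boundary of $\bar{\Delta}^d$, the ratio $\langle p, y/x\rangle/\langle \hat{\pi}(x), y/x\rangle$ need not be bounded, but this is sidestepped by perturbing \emph{towards} $\hat{\pi}(x)$ rather than away from it, since the mixture $p_\epsilon$ keeps the denominator inside the logarithm uniformly controlled from below by a factor $(1-\epsilon)$.
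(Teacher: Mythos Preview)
Your proof is correct and follows essentially the same approach as the paper: both perturb the competitor $\pi(x)$ towards $\hat{\pi}(x)$ via $p_\epsilon = (1-\epsilon)\hat{\pi}(x) + \epsilon\,\pi(x)$ and extract the first-order optimality condition as $\epsilon \downarrow 0$. The only difference is in the limit-interchange step: the paper bounds $\log a - \log b \leq (a-b)/b$ and then applies Fatou's lemma to the ratios $\langle \pi(x), y/x\rangle / \langle \pi^\alpha(x), y/x\rangle$, whereas you keep the form $\log(1+\epsilon U)/\epsilon$ and invoke monotone convergence via the concavity of $\log$; both are valid and equivalent ways to justify the same passage to the limit.
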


\begin{proof}
We proceed as in the proof of \cite[Proposition 4.3]{becherer2001numeraire}. Fix $\pi$ and $\alpha \in (0,1)$ and define $\pi^{\alpha}=\alpha \pi+ (1-\alpha) \hat{\pi}$. Then by the (long only) log-optimality of $\hat{\pi}$ we have for every $x \in \Delta^d$
\begin{align*}
0&\leq \int_{\Delta^d} \left(\log\langle \hat{\pi}(x), \frac{y}{x}\rangle-  \log\langle \pi^{\alpha}(x), \frac{y}{x}\rangle\right) \rho(x, dy)= \int_{\Delta^d} \left(\int^{\langle \hat{\pi}(x), \frac{y}{x}\rangle}_{\langle \pi^{\alpha}(x), \frac{y}{x}\rangle} \frac{1}{z} dz \right) \rho(x,dy)\\
& \leq \int_{\Delta^d} \frac{\langle \hat{\pi}(x), \frac{y}{x}\rangle -\langle \pi^{\alpha}(x), \frac{y}{x}\rangle}{\langle \pi^{\alpha}(x), \frac{y}{x}\rangle}\rho(x,dy)=\int_{\Delta^d} \frac{\langle\alpha (\hat{\pi}(x)- \pi(x)),\frac{y}{x}\rangle}{\langle \pi^{\alpha}(x), \frac{y}{x}\rangle} \rho(x,dy).
\end{align*}
Hence,
\[
\int_{\Delta^d} \frac{\langle\pi(x),\frac{y}{x}\rangle}{\langle \pi^{\alpha}(x), \frac{y}{x}\rangle} \rho(x,dy) \leq \int_{\Delta^d} \frac{\langle \hat{\pi}(x),\frac{y}{x}\rangle}{\langle \pi^{\alpha}(x), \frac{y}{x}\rangle} \rho(x,dy) \leq 
\int_{\Delta^d} \frac{\langle \hat{\pi}(x),\frac{y}{x}\rangle}{\langle (1-\alpha)\hat{\pi}(x), \frac{y}{x}\rangle} \rho(x,dy), 
\]
where the last equality follows from $\pi^{\alpha} \geq (1-\alpha)\hat{\pi}$.
By Fatou's lemma we therefore have 
\begin{align*}
\int_{\Delta^d} \frac{\langle\pi(x),\frac{y}{x}\rangle}{\langle \hat{\pi}(x), \frac{y}{x}\rangle} \rho(x,dy)  &=\int_{\Delta^d} \lim_{\alpha \to 0}\frac{\langle\pi(x),\frac{y}{x}\rangle}{\langle \pi^{\alpha}(x), \frac{y}{x}\rangle} \rho(x,dy) \leq \lim_{\alpha \to 0} \int_{\Delta^d} \frac{\langle\pi(x),\frac{y}{x}\rangle}{\langle \pi^{\alpha}(x), \frac{y}{x}\rangle} \rho(x,dy)\\
&\leq \lim_{\alpha \to 0}\frac{1}{1-\alpha}\int_{\Delta^d} \frac{\langle \hat{\pi}(x),\frac{y}{x}\rangle}{\langle \hat{\pi}(x), \frac{y}{x}\rangle} \rho(x,dy)=1,
\end{align*}
which proves the claim.
\end{proof}

\section{A comparison of the three approaches - the continuous time case with functionally generated portfolios}\label{sec:functionally}

This part of the paper is dedicated to a similar analysis as in the previous section, 
however in continuous time and with a different class of portfolios, namely \emph{functionally generated} ones, 
where we can -- even in continuous time -- treat the universal and the best retrospectively chosen portfolio in a pathwise way. 
Indeed, in the present context this class of portfolios is essentially the most general setting which allows 
to define wealth processes in a pathwise way without passing to rough paths theory. This is due to the fact that we can apply the pathwise 
It\^o-calculus developed by H.~F\"ollmer \cite{F:81} to make sense of the integral with respect to a continuous paths 
which admits a quadratic variation process in a sense made precise below. For these reasons we introduce instead of Lipschitz portfolio maps, 
the following spaces of functionally generated portfolio maps.

\subsection{Functionally generated portfolios}\label{sec:funcgen}

We consider the following set of concave functions for some fixed $M >0$ and $0 \leq \alpha \leq 1$,
\[
\mathcal{G}^{M,\alpha}=\left\{ G \in  C^{2, \alpha}(\bar{\Delta}^d), \textrm{concave such that } \| G\|_{C^{2,\alpha}} \leq M \textrm{ and } G \geq \frac{1}{M}\right\},
\]
where $ C^{2, \alpha}(\bar{\Delta}^d) $ denotes the H\"older space of $2$-times continuously differentiable functions from $\bar\Delta^d \to \mathbb{R}$ whose derivatives are $\alpha$-H\"older continuous. That is,
\[
C^{2, \alpha}(\bar{\Delta}^d)=\{ G \in C^2(\bar{\Delta}^d) \, |\, \| G\|_{C^{2,\alpha}} < \infty\},
\]
where 
\[
\| G\|_{C^{2,\alpha}}= \max_{|\mathbf{k}| \leq 2} \| D^{\mathbf{k}} G \|_{\infty} + 
\max_{|\mathbf{k}|=2} \sup_{x \neq y}\frac{|D^{\mathbf{k}} G(x)-D^{\mathbf{k}} G(y)|}{\|x-y\|^{\alpha}}
\]
with $\mathbf{k}$ denoting a multiindex in $\mathbb{N}^2$. For $\alpha=0$ the second term in this norm is left away. Note that $G$ is only defined on the simplex $\Delta^d$. In order that the partial derivatives are well defined, we assume that each $G$ is extended to an open neighborhood of $\Delta^d$ such that $G(x) = G(x')$, where $x'$ is the orthogonal projection of $x$ onto $\Delta^d$. The choice of the extension is immaterial and is for notational convenience only.

\begin{lemma}\label{lem:compact}
Let $M, \alpha >0$ be fixed. Then the set $\mathcal{G}^{M,\alpha}$ is compact with respect to $\| \cdot\|_{C^{2,0}}$.
\end{lemma}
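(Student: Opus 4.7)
My plan is to prove compactness in two stages: relative compactness via Arzelà--Ascoli applied to derivatives up to order two, followed by closedness under $\|\cdot\|_{C^{2,0}}$-limits.

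For relative compactness, I would fix a sequence $(G_n)_{n \geq 1}$ in $\mathcal{G}^{M,\alpha}$ and extract a subsequence converging in $\|\cdot\|_{C^{2,0}}$. The bound $\|G_n\|_{C^{2,\alpha}} \leq M$ provides two things simultaneously: uniform boundedness of $G_n$ and of all partial derivatives $D^{\mathbf{k}} G_n$ for $|\mathbf{k}| \leq 2$, and uniform $\alpha$-H\"older continuity of the second derivatives. Because the bound on $D^2 G_n$ is uniform, $G_n$ and $\nabla G_n$ are uniformly Lipschitz on $\bar{\Delta}^d$ (which is convex and compact), hence equicontinuous. Together with the uniform $\alpha$-H\"older continuity of $D^2 G_n$, Arzel\`a--Ascoli applied to each partial derivative up to order $2$ (with a diagonal extraction over the finite collection of multi-indices $|\mathbf{k}| \leq 2$) yields a subsequence, still denoted $(G_n)$, and a function $G \in C^2(\bar{\Delta}^d)$ such that $D^{\mathbf{k}} G_n \to D^{\mathbf{k}} G$ uniformly on $\bar{\Delta}^d$ for every $|\mathbf{k}| \leq 2$. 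This is precisely convergence in $\|\cdot\|_{C^{2,0}}$.

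For closedness, I need to check that the limit $G$ lies in $\mathcal{G}^{M,\alpha}$. Concavity passes to uniform (in particular pointwise) limits, and $G_n \geq 1/M$ pointwise implies $G \geq 1/M$. The uniform bound $\max_{|\mathbf{k}| \leq 2} \|D^{\mathbf{k}} G_n\|_\infty \leq M$ passes to the uniform limit directly. For the H\"older seminorm, fix $|\mathbf{k}| = 2$ and $x \neq y$ in $\bar{\Delta}^d$; since $D^{\mathbf{k}} G_n(x) \to D^{\mathbf{k}} G(x)$ and $D^{\mathbf{k}} G_n(y) \to D^{\mathbf{k}} G(y)$, we get
\[
\frac{|D^{\mathbf{k}} G(x) - D^{\mathbf{k}} G(y)|}{\|x-y\|^{\alpha}} = \lim_{n \to \infty} \frac{|D^{\mathbf{k}} G_n(x) - D^{\mathbf{k}} G_n(y)|}{\|x-y\|^{\alpha}} \leq M,
\]
so taking the supremum over $x \neq y$ and maximum over $|\mathbf{k}| = 2$ gives $\|G\|_{C^{2,\alpha}} \leq M$. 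Hence $G \in \mathcal{G}^{M,\alpha}$ and the set is closed under $\|\cdot\|_{C^{2,0}}$.

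There is no real obstacle: the argument is a textbook application of Arzel\`a--Ascoli, the only mildly delicate point being the preservation of the $\alpha$-H\"older seminorm bound under a merely $C^{2,0}$ limit, which is handled by the lower-semicontinuity observation above.
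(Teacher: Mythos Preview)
Your approach is essentially the same as the paper's: establish relative compactness and then closedness. The paper cites the compact embedding $C^{2,\alpha}(\bar\Delta^d)\hookrightarrow C^{2,0}(\bar\Delta^d)$ for the first step, whereas you unpack it by hand via Arzel\`a--Ascoli; this is the same argument at a different level of abstraction.

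There is one bookkeeping slip in your closedness step. You bound the two summands of $\|G\|_{C^{2,\alpha}}$ separately---the sup-norm part by $M$ and the H\"older seminorm by $M$---and then conclude $\|G\|_{C^{2,\alpha}}\le M$. But the norm is defined as the \emph{sum} of these two pieces, so what you have written only yields $\|G\|_{C^{2,\alpha}}\le 2M$. The fix is immediate and is precisely the lower-semicontinuity you invoke informally at the end: for fixed $x\neq y$ and $|\mathbf{k}|=2$, pass to the limit in the combined quantity
\[
\max_{|\mathbf{j}|\le 2}\|D^{\mathbf{j}}G_n\|_\infty \;+\; \frac{|D^{\mathbf{k}}G_n(x)-D^{\mathbf{k}}G_n(y)|}{\|x-y\|^{\alpha}} \;\le\; \|G_n\|_{C^{2,\alpha}} \;\le\; M,
\]
using that $\|D^{\mathbf{j}}G_n\|_\infty\to\|D^{\mathbf{j}}G\|_\infty$ and $D^{\mathbf{k}}G_n\to D^{\mathbf{k}}G$ pointwise; then take the supremum over $x\neq y$ and $\mathbf{k}$. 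With this adjustment your proof is complete.
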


\begin{proof}
This follows from the fact that the embedding from $C^{2,\alpha}(\bar{\Delta}^d) \to C^{2, \alpha'}(\bar{\Delta}^d)$ is compact for $\alpha' < \alpha$ (see e.g., \cite[Satz 2.42]{D:10}). This means in particular that any bounded set in $C^{2,\alpha}(\bar{\Delta}^d) $ is totally bounded in $C^{2,0}(\bar{\Delta}^d)$, thus relatively compact. To prove compactness it thus suffices to prove that $\mathcal{G}^{M,\alpha}$ is closed. Take a sequence $G^n\in \mathcal{G}^{M,\alpha}$ converging to $G$ with respect to the $\|\cdot\|_{C^{2,0}}$ norm. Then, we can estimate $\| G\|_{C^{2,\alpha}}$ by
\begin{align*}
&\| G\|_{C^{2,\alpha}} = \| G\|_{C^{2,0}}+ 
\max_{|\mathbf{k}|=2} \sup_{x \neq y}\frac{|D^{\mathbf{k}} G(x)-D^{\mathbf{k}} G(y)|}{\|x-y\|^{\alpha}}\\
&\leq \|G-G^n\|_{C^{2,0}}+ \|G^n\|_{C^{2,0}}\\
&\quad + \max_{|\mathbf{k}|=2} \sup_{x \neq y}\frac{|D^{\mathbf{k}} G(x)
-D^{\mathbf{k}}G^n(x)| +|D^{\mathbf{k}}G^n(x) -D^{\mathbf{k}} G^n(y)| + |D^{\mathbf{k}} G^n(y)
-D^{\mathbf{k}} G(y)|}{\|x-y\|^{\alpha}}
\end{align*}
for any $n \in \mathbb{N}$. Letting $n \to \infty$ and using the fact that $ \|G^n -G\|_{C^{2,0}} \to 0$  yields $\|G\|_{C^{2,\alpha}} \leq M$. Similarly, we obtain 
$G \geq \frac{1}{M}$. This together with the fact that $G$ is concave as a limit of concave functions proves $G \in \mathcal{G}^{M,\alpha}$ and thus in turn compactness of $\mathcal{G}^{M,\alpha}$ with respect to $\| \cdot\|_{C^{2,0}}$.
\end{proof}

To the set of generating functions $\mathcal{G}^{M,\alpha}$ we associate now the set of functionally generated portfolios $\mathcal{FG}^{M,\alpha}$
in the spirit of \cite{F:02} defined by
\begin{equation}\label{eq:FG}
\begin{split}
&\mathcal{FG}^{M,\alpha}=\Bigg\{ \pi^G : \Delta^d \to \bar{\Delta}^d, \\
&\quad x \mapsto (\pi^G(x))^i= x^i\left(\frac{D^i G(x)}{G(x)}+1 -\sum_{j=1}^d \frac{D^j G(x)}{G(x)} x^j\right), \, i =1, \ldots d, \, |\, G \in \mathcal{G}^{M, \alpha}\Bigg\}.
\end{split}
\end{equation}
Note that due to the concavity of $G$, $\pi^G$ takes values in $\bar{\Delta}^d$, i.e. it is long-only.
We will denote the corresponding portfolio wealth processes either by $V^{\pi^G}$ or $V^{G}$.

As mentioned in the introduction, for these kinds of portfolios it is possible to obtain -- via the pathwise It\^o-calculus developed by F\"ollmer~\cite{F:81} --  
a pathwise expression for $V^{\pi^G}$. 
In the specific context of stochastic portfolio theory this has recently been worked out by A.~Schied, L.~Speiser and I.~Voloshchenko \cite{SSV:16}. 
Within this pathwise approach they can also include more general functionally generated portfolios, 
whose generating function can depend on time as well as the paths of the market weights or on further continuous trajectories of bounded variation in an 
adaptive manner. This is achieved by applying the functional It\^o-calculus developed by B.~Dupire \cite{D:09} and R.~Cont and D.~Fourni\'e \cite{CF:10, CF:13}, 
which generalizes F\"ollmer's It\^o-calculus to path-dependent functionals. 
In order to be in line with the previous section we only consider ``Markovian'' strategies, i.e., functionally generated portfolios as defined in \eqref{eq:FG}.

We adopt the notation of \cite{SSV:16} and fix a refining sequence of partitions $(\mathbb{T}_n)_{n=1}^ {\infty}$ of $[0,\infty)$, i.e., $\mathbb{T}_n =\{ t_0, t_1,\ldots\}$ is such that $0 = t^n_0 < t^n_1 < \cdots$ and $t^n_k \to \infty$ as $k \to \infty$, and $\mathbb{T}_1 \subset \mathbb{T}_2\subset \cdots$. Moreover, the mesh of $\mathbb{T}_n$ tends to zero on each compact interval as $n \to \infty$. Furthermore, we denote the successor of $t \in  \mathbb{T}_n$ by $t'$. That is,
$t' = \min\{ u \in \mathbb{T}_n \, |,\ u > t\}$. 

Throughout this section the market weights are described by a $d$-dimensional continuous path $\mu=(\mu_t)_{t \geq 0}$ with values in $\Delta^d$. Furthermore, $\mu$ has to exhibit pathwise quadratic variation in an appropriate sense. Here and henceforth we let $S_d^+$ be the set of $d \times d$ positive definite matrices.

 \begin{assumption}\label{ass:quadratic}
The path $(\mu_t)_{t \geq 0}$ admits a continuous $S_d^+$-valued quadratic variation $[\mu]$ along $(\mathbb{T}_n)$ in the sense of \cite{F:81}, 
i.e., for any $1 \leq i, j \leq d$ and all $ t\geq 0$ the sequence
\[
\sum_{s \in \mathbb{T}_n, s \leq t} (\mu^i_{s'}-\mu^i_s)(\mu^j_{s'}-\mu^j_s)
\]
converges to a finite limit, as $n \to \infty$, denoted $[\mu^i, \mu^j]_t$, such that $t \mapsto [\mu^i, \mu^j]_t$ is continuous.
\end{assumption}

The dynamics of the relative wealth process $V^{\pi^G}$ built by investing according to $\pi^G \in \mathcal{FG}^{M,\alpha}$ 
are given in this continuous time case by  
\begin{align}\label{eq:wealthfunc}
\frac{dV_t^{\pi^G}}{V_{t}^{\pi^G}}=\sum_{i=1}^d (\pi^G(\mu_t))^i\frac{d\mu_{t}^i}{\mu_{t}^i} =\sum_{i=1}^d \frac{D^i G(\mu_t)}{G(\mu_t)}d\mu_{t}^i, 
\quad V_0^{\pi} =1,
\end{align}
(compare \eqref{p4} in the discrete time case), where the right hand side has to be understood as F\"ollmer's pathwise integral (c.f.~ Equation (2.5) in \cite{SSV:16}).
Note that the second equality holds by the definition of $\pi^G$ and the fact that $\sum_{i=1}^d d\mu_{t}^i=0$.

Using \eqref{eq:wealthfunc} and F\"ollmer's It\^o calculus, we then get the following pathwise version of
Fernholz's \cite{F:02} master equation, which also follows from \cite[Theorem 2.9]{SSV:16}.

\begin{corollary}
Let $G \in C^2(\bar{\Delta}^d)$ and $\pi^G$ be defined as in \eqref{eq:FG}. Let $(\mu_t)_{t \geq 0}$ be a continuous path satisfying Assumption \ref{ass:quadratic}. 
Then the relative wealth process $V^{\pi^G}$ satisfies
\begin{align}\label{eq:master}
V_T^{\pi^G}\equiv V_T^G=\frac{G(\mu_T)}{G(\mu_0)}e^{\mathfrak{g}([0,T])}, \quad 0 \leq T < \infty,
\end{align}
where $\mathfrak{g}(dt)=-\frac{1}{2G(\mu_t)}\sum_{i,j} D^{ij} G(\mu_t) d[\mu^i, \mu^j]_t$.
\end{corollary}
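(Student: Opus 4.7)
The plan is to compare Föllmer's pathwise Itô formula applied to $\log G(\mu_\cdot)$ with the same formula applied to $\log V^{\pi^G}$, and to observe that the stochastic-integral parts cancel, leaving precisely the drift $\mathfrak{g}([0,T])$.

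First I would apply F\"ollmer's pathwise Itô formula \cite{F:81} to the $C^2$ function $x \mapsto \log G(x)$ (well defined since $G$ is positive on $\bar\Delta^d$) composed with the continuous path $\mu$ of Assumption \ref{ass:quadratic}. This yields, for every $T\geq 0$,
\begin{align*}
\log G(\mu_T) - \log G(\mu_0)
&= \int_0^T \sum_{i=1}^d \frac{D^i G(\mu_t)}{G(\mu_t)}\, d\mu_t^i \\
&\quad + \frac{1}{2}\int_0^T \sum_{i,j=1}^d \frac{D^{ij} G(\mu_t)}{G(\mu_t)}\, d[\mu^i,\mu^j]_t \\
&\quad - \frac{1}{2}\int_0^T \sum_{i,j=1}^d \frac{D^i G(\mu_t)\, D^j G(\mu_t)}{G(\mu_t)^2}\, d[\mu^i,\mu^j]_t.
\end{align*}

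Next I would analyze $V^{\pi^G}$ itself. By \eqref{eq:wealthfunc}, it is the Föllmer-pathwise stochastic exponential of $N_t := \int_0^t \sum_i \frac{D^iG(\mu_s)}{G(\mu_s)} d\mu_s^i$, and in particular admits a pathwise quadratic variation given by
\[
d[V^{\pi^G}]_t = (V_t^{\pi^G})^2 \sum_{i,j=1}^d \frac{D^i G(\mu_t)\, D^j G(\mu_t)}{G(\mu_t)^2}\, d[\mu^i,\mu^j]_t,
\]
inherited from $[\mu]$ via the standard quadratic-variation rules for the Föllmer integral. This places $V^{\pi^G}$ in the setting in which the pathwise Itô formula is again applicable, and applying it to $x\mapsto \log x$ (using $V^{\pi^G}>0$, which holds provided the exponent is locally bounded, hence by continuity on $[0,T]$) gives
\[
\log V_T^{\pi^G} = \int_0^T \frac{dV_t^{\pi^G}}{V_t^{\pi^G}} - \frac{1}{2}\int_0^T \frac{d[V^{\pi^G}]_t}{(V_t^{\pi^G})^2} = \int_0^T \sum_i \frac{D^iG(\mu_t)}{G(\mu_t)} d\mu_t^i - \frac{1}{2}\int_0^T \sum_{i,j} \frac{D^iG(\mu_t)D^jG(\mu_t)}{G(\mu_t)^2} d[\mu^i,\mu^j]_t,
\]
where I have used \eqref{eq:wealthfunc} and $V_0^{\pi^G}=1$.

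Finally, subtracting the two identities, the Föllmer integral $\int_0^T \sum_i \frac{D^iG(\mu_t)}{G(\mu_t)}d\mu_t^i$ and the $DG\otimes DG$ quadratic-variation terms cancel, leaving
\[
\log G(\mu_T) - \log G(\mu_0) - \log V_T^{\pi^G} = \frac{1}{2}\int_0^T \sum_{i,j=1}^d \frac{D^{ij} G(\mu_t)}{G(\mu_t)}\, d[\mu^i,\mu^j]_t = -\mathfrak{g}([0,T]),
\]
and exponentiating yields \eqref{eq:master}. The only non-routine point I expect is verifying that the pathwise Itô formula genuinely applies to $V^{\pi^G}$, i.e., that its quadratic variation really is the one displayed above; this is precisely the content of the pathwise stochastic calculus for Föllmer integrals developed in \cite{F:81} and used in \cite{SSV:16}, and could alternatively be circumvented by applying the pathwise Itô formula directly to the explicit candidate $(G(\mu_T)/G(\mu_0))\exp(\mathfrak{g}([0,T]))$ and checking that it satisfies \eqref{eq:wealthfunc}.
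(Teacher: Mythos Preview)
Your proposal is correct and follows exactly the route the paper indicates: the paper does not spell out a proof but merely says the identity follows ``using \eqref{eq:wealthfunc} and F\"ollmer's It\^o calculus'' and refers to \cite[Theorem~2.9]{SSV:16}, and your computation is precisely the explicit version of that argument. The only small caveat is that you should note $G>0$ along the path (so that $\log G$ is $C^2$); this is implicit in the statement via the appearance of $1/G(\mu_t)$ in $\mathfrak{g}$ and in \eqref{eq:wealthfunc}, and is of course satisfied for every $G\in\mathcal{G}^{M,\alpha}$.
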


\subsection{Different types of portfolios}\label{sec:typesfun}

In this continuous-time setting we consider again the following three types of portfolios
\begin{enumerate}
\item [(i)] the best retrospectively chosen portfolio,
\item [(ii)] the universal portfolio and,
\item [(iii)] the log-optimal portfolio, 
\end{enumerate}
now all within the class of functionally generated portfolios. As before our goal is to establish equality of the asymptotic performance. 
To define the log-optimal portfolio we will need to pass to a specific stochastic model which we introduce in
Section \ref{sec:logoptimalfunc}. In Section \ref{sec:assgrowth} we then establish the form of the asymptotic growth rate 
for this model class under an additional ergodicity assumption.

\subsubsection{The best retrospectively chosen portfolio}
As in discrete time, let us consider the best portofolio with hindsight, 
this time among the set of functionally generated portfolios $\mathcal{FG}^{M,\alpha}$ and 
for a given continuous path $(\mu_t)_{t \geq 0}$ satisfying Assumption \ref{ass:quadratic}. Fix $M, \alpha >0$ and define
\begin{align}\label{eq:retrofunctional}
V^{*,M,\alpha}_T= \sup_{ \pi^G \in \mathcal{FG}^{M,\alpha}}V_T^{\pi^G}= \sup_{ G \in \mathcal{G}^{M,\alpha}}V_T^{G}.
\end{align}

To prove existence of an optimizer, let us start by establishing continuity of $G \mapsto V^G_T$:

\begin{lemma}\label{lem:continuity}
Let $T, M, \alpha >0$ be fixed and $(\mu_t)_{t \geq 0}$ be a continuous path satisfying Assumption \ref{ass:quadratic}.
Consider the function $G \mapsto V^G_T $ where $V^G_T$ is given by \eqref{eq:master}. Then 
$G \mapsto V^G_T $ is continuous from $(\mathcal{G}^{M,\alpha}, \| \cdot \|_{C^{2,0}})$ to $\mathbb{R}$.
\end{lemma}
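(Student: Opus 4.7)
The plan is to exploit the explicit pathwise formula \eqref{eq:master} and reduce continuity to the fact that $C^{2,0}$-convergence of generating functions forces uniform convergence of both the boundary ratio $G(\mu_T)/G(\mu_0)$ and the integrand appearing in $\mathfrak{g}([0,T])$, against an integrator of bounded variation. So I would write
\[
V_T^G=\frac{G(\mu_T)}{G(\mu_0)}\exp\!\left(-\frac{1}{2}\sum_{i,j=1}^{d}\int_0^T \frac{D^{ij}G(\mu_t)}{G(\mu_t)}\,d[\mu^i,\mu^j]_t\right),
\]
and show that each factor depends continuously on $G$.

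Let $G_n\to G$ in $\|\cdot\|_{C^{2,0}}$ with $G_n\in\mathcal{G}^{M,\alpha}$. Since $G_n\geq 1/M$ and $G_n\to G$ uniformly on $\bar{\Delta}^d$, the reciprocals $1/G_n$ converge uniformly to $1/G$. Coupled with the uniform convergence of the second partials $D^{ij}G_n\to D^{ij}G$, this gives $\frac{D^{ij}G_n}{G_n}\to\frac{D^{ij}G}{G}$ uniformly on $\bar{\Delta}^d$. Because $\{\mu_t:t\in[0,T]\}$ is a compact subset of $\bar{\Delta}^d$, the functions
\[
f^{ij}_n(t):=\frac{D^{ij}G_n(\mu_t)}{G_n(\mu_t)},\qquad f^{ij}(t):=\frac{D^{ij}G(\mu_t)}{G(\mu_t)}
\]
satisfy $\sup_{t\in[0,T]}|f^{ij}_n(t)-f^{ij}(t)|\to 0$.

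The integrator $[\mu^i,\mu^j]$ is of bounded variation on $[0,T]$: by polarization
\[
[\mu^i,\mu^j]=\tfrac{1}{4}\bigl([\mu^i+\mu^j]-[\mu^i-\mu^j]\bigr),
\]
which is the difference of two continuous non-decreasing functions guaranteed by Assumption \ref{ass:quadratic}. Consequently, for bounded measurable integrands the pathwise integral coincides with the Riemann--Stieltjes integral against this finite signed measure, and uniform convergence of $f^{ij}_n$ implies
\[
\int_0^T f^{ij}_n(t)\,d[\mu^i,\mu^j]_t\ \longrightarrow\ \int_0^T f^{ij}(t)\,d[\mu^i,\mu^j]_t.
\]
Summing over $i,j$, $\mathfrak{g}_n([0,T])\to\mathfrak{g}([0,T])$. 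The boundary term $G_n(\mu_T)/G_n(\mu_0)\to G(\mu_T)/G(\mu_0)$ is immediate from uniform convergence and the lower bound $1/M$. Continuity of $\exp$ and of the product map then yields $V_T^{G_n}\to V_T^{G}$.

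The only delicate point is the convergence of the $\mathfrak{g}$-term, since F\"ollmer's pathwise integral is \emph{a priori} defined only as the limit of non-anticipative Riemann sums along $(\mathbb{T}_n)$; the argument above works because the integrand depends on $\mu_t$ through continuous functions, so that the F\"ollmer integral reduces to an ordinary Lebesgue--Stieltjes integral against the signed measure $d[\mu^i,\mu^j]$ and standard dominated/uniform convergence arguments apply. Everything else is bookkeeping with uniform bounds on $\mathcal{G}^{M,\alpha}$, namely $\|G\|_{\infty}, \|DG\|_\infty, \|D^2G\|_\infty\leq M$ and $G\geq 1/M$.
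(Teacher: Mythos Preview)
Your argument is correct: you use the master formula, the uniform lower bound $G\geq 1/M$, uniform convergence of $D^{ij}G_n/G_n$, and the finite total variation of $[\mu^i,\mu^j]$ on $[0,T]$ to pass to the limit in each piece. This suffices for the statement of the lemma.

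The paper proceeds along the same lines but organizes the computation differently and extracts more: instead of a sequential argument, it bounds $|\log V_T^G-\log V_T^{\widetilde G}|$ directly and obtains the explicit Lipschitz-type estimate
\[
|\log V_T^{G}-\log V_T^{\widetilde G}|\leq 2M\|G-\widetilde G\|_{C^{2,0}}+\Bigl(\tfrac{M}{2}d^2+\tfrac{M^3}{2}d^2\Bigr)\|G-\widetilde G\|_{C^{2,0}}\max_i[\mu^i,\mu^i]_T,
\]
by adding and subtracting in the integrand and using that $\log$ is $M$-Lipschitz on $[1/M,\infty)$. This quantitative inequality \eqref{eq:funcont1} is not merely cosmetic: it is invoked verbatim later in the proof of Theorem~\ref{th:continouscase1} (see \eqref{eq:estimate1}), where one needs a uniform-in-$T$ smallness of $\frac{1}{T}(\log V_T^{G}-\log V_T^{G_T^{*,M,\alpha}})$ for $G$ in a fixed $\eta$-ball. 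Your sequential continuity alone would not deliver that uniformity; you would have to redo the estimate there. So while both arguments prove the lemma, the paper's version is packaged for reuse.
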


\begin{proof}
For $G, \widetilde{G} \in \mathcal{G}^{M,\alpha} $, we have
\begin{align*}
&\log (V_T^{G})-\log (V_T^{\widetilde{G}})= \log( G(\mu_T))- \log(\widetilde{G}(\mu_T))-
(\log( G(\mu_0))- \log(\widetilde{G}(\mu_0)))\\
&\quad\quad -\int_0^T\left(\sum_{i,j} \frac{D^{ij} G(\mu_t)}{2G(\mu_t)}-\frac{D^{ij} \widetilde{G}(\mu_t)}{2\widetilde{G}(\mu_t)}\right) d[\mu^i, \mu^j]_t\\
&\quad =
\log( G(\mu_T))- \log(\widetilde{G}(\mu_T))-
(\log( G(\mu_0))- \log(\widetilde{G}(\mu_0)))\\
&\quad \quad-\sum_{i,j} \int_0^T\left(\frac{D^{ij} G(\mu_t)- D^{ij} \widetilde{G}(\mu_t)}{2G(\mu_t)}+\frac{D^{ij} \widetilde{G}(\mu_t)\left(\widetilde{G}(\mu_t)-G(\mu_t)\right)}{2\widetilde{G}(\mu_t)G(\mu_t)}\right) d[\mu^i, \mu^j]_t.
\end{align*}
Hence, using the fact that $\|\widetilde{G}\|_{C^{2,0}} \leq M$ as well as $G \geq \frac{1}{M}$ and $\widetilde{G} \geq \frac{1}{M}$ and that $z \mapsto \log(z)$ is Lipschitz continuous on $[\frac{1}{M}, \infty)$ with constant $M$, we obtain the following estimate:
\begin{equation}
\begin{split}\label{eq:funcont1}
|\log (V_T^{G})-\log (V_T^{\widetilde{G}})|&\leq 2 M\|G- \widetilde{G}\|_{C^{2,0}}\\
&\quad +\left(\frac{M}{2}d^2 \|G- \widetilde{G}\|_{C^{2,0}}+
\frac{M^3}{2} d^2\|G-\widetilde{G}\|_{C^{2,0}}\right)\max_i[\mu^i, \mu^i]_T,
\end{split}
\end{equation}
which proves the asserted continuity.
\end{proof}

As a consequence we obtain the existence of an optimizer as stated in the following proposition.

\begin{proposition}\label{prop:optimizer}
Let $T$ be fixed and $(\mu_t)_{t\geq 0}$ be a continuous path satisfying Assumption \ref{ass:quadratic}. Let $V^{*,M,\alpha}_T$ be defined by \eqref{eq:retrofunctional}. Then there exists an optimizer $G_T^{*,M,\alpha}\in \mathcal G^{M,\alpha} $and in turn a portfolio $\pi_T^{*,M,\alpha}$ generated by $G_T^{*,M,\alpha}$ such that
\[
V^{*,M,\alpha}_T=V_T^{\pi_T^{*,M,\alpha}}=V^{G_T^{*,M,\alpha}}_T.
\]
\end{proposition}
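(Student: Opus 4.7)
The proof is essentially immediate from the two lemmas that have just been established, and my plan is simply to splice them together via the standard fact that a continuous real-valued function on a compact metric space attains its maximum.

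More concretely, I would argue as follows. By Lemma \ref{lem:compact}, the set $\mathcal{G}^{M,\alpha}$ is compact with respect to the norm $\|\cdot\|_{C^{2,0}}$. By Lemma \ref{lem:continuity}, for the given path $(\mu_t)_{t \geq 0}$ satisfying Assumption \ref{ass:quadratic}, the map
\[
\mathcal{G}^{M,\alpha} \ni G \longmapsto V_T^{G} \in \mathbb{R}
\]
is continuous, where $V_T^G$ is given by the pathwise master formula \eqref{eq:master}. A continuous real-valued function on a compact metric space attains its supremum, so there exists $G_T^{*,M,\alpha} \in \mathcal{G}^{M,\alpha}$ such that
\[
V_T^{G_T^{*,M,\alpha}} = \sup_{G \in \mathcal{G}^{M,\alpha}} V_T^{G} = V_T^{*,M,\alpha}.
\]
The associated portfolio map $\pi_T^{*,M,\alpha} := \pi^{G_T^{*,M,\alpha}} \in \mathcal{FG}^{M,\alpha}$ is then obtained by plugging $G_T^{*,M,\alpha}$ into the defining formula \eqref{eq:FG}. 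Since by \eqref{eq:wealthfunc}--\eqref{eq:master} the wealth processes built from $\pi^G$ and from $G$ agree, we have $V_T^{\pi_T^{*,M,\alpha}} = V_T^{G_T^{*,M,\alpha}} = V_T^{*,M,\alpha}$, which is the desired conclusion.

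There is no genuine obstacle: the two preceding lemmas do all the work. The only minor things to verify along the way are that $\mathcal{G}^{M,\alpha}$ is nonempty (for instance, any sufficiently flat positive concave perturbation of a constant function lies in it, provided $M$ is large enough; one might wish to note this explicitly or restrict to such $M$) so that the supremum is meaningful and finite, and that the optimizer need not be unique, which is why the proposition is phrased as ``there exists'' rather than an equality characterizing a unique element.
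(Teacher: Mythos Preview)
Your proposal is correct and follows exactly the paper's own proof, which simply invokes Lemma \ref{lem:compact} (compactness of $\mathcal{G}^{M,\alpha}$ in $\|\cdot\|_{C^{2,0}}$) and Lemma \ref{lem:continuity} (continuity of $G\mapsto V_T^G$) to conclude that the supremum is attained. Your additional remarks on nonemptiness and non-uniqueness are fine but not needed for the argument.
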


\begin{proof}
This is simply a consequence of continuity as proved in Lemma \ref{lem:continuity} and
compactness of $(\mathcal{G}^{M,\alpha}, \|\cdot\|_{C^{2,0}})$ as shown in Lemma \ref{lem:compact}.
\end{proof}

\subsubsection{Universal portfolio}
In order to define the analog of Cover's portfolio in the present setting, let $m$ be a Borel probability measure on $(\mathcal{G}^{M,\alpha}, \|\cdot\|_{C^{2, 0}})$. Consider the map
\begin{align}\label{eq:Pi}
F: \mathcal{G}^{M,\alpha} \to \mathcal{FG}^{M, \alpha};\, G \mapsto F(G)=\pi^G,
\end{align}
where $\pi^G$ is given by \eqref{eq:FG}. Define now on $(\mathcal{FG}^{M,\alpha}, \|\cdot\|_{\infty})$ a Borel probability measure $\nu$ via $\nu=F_{*} m$ (the pushforward) and set
\[
\pi^{M, \alpha,\nu}_T=\frac{\int_{\mathcal{FG}^{M,\alpha}} \pi^G(\mu_T) V_T^{\pi^G}d\nu(\pi^G)}{\int_{\mathcal{FG}^{M,\alpha} } V_T^{\pi^G} d\nu(\pi^G)}.
\]
Analogous to \eqref{C8} and \eqref{B1} we have the following form for the universal portfolio 
(compare also \cite{J:92} and \cite{wong2015universal}).

\begin{lemma}\label{lem:universalfunc}
The relative wealth achieved by investing according to $\pi^{M,\nu,\alpha}_T$ is given by
\begin{align}\label{eq:universalwealthfunk}
V^{M,\alpha}_T(\nu):=V^{\pi^{M,\alpha,\nu}}_T=\int_{\mathcal{FG}^{M,\alpha}} V_T^{\pi^G}d\nu(\pi^G)= \int_{\mathcal{G}^{M,\alpha}} V^G_T m(dG).
\end{align}
\end{lemma}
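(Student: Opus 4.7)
The second equality in \eqref{eq:universalwealthfunk} is just a change-of-variables: since $\nu = F_{*}m$ and, by \eqref{eq:master}, $V^{\pi^G}_T = V^G_T$ is a (pathwise) function of $G$, we have $\int V^{\pi^G}_T\,d\nu(\pi^G)=\int V^G_T\,dm(G)$. So the real content is the first equality, namely that the wealth process of the strategy $\pi^{M,\alpha,\nu}_t=\frac{\int \pi^G(\mu_t)V_t^{\pi^G}d\nu(\pi^G)}{\int V_t^{\pi^G}d\nu(\pi^G)}$ coincides with $W_T:=\int V^{\pi^G}_T d\nu(\pi^G)$.

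My plan is to show that $W$ solves the same pathwise wealth equation \eqref{eq:wealthfunc} as $V^{\pi^{M,\alpha,\nu}}$ with the same initial value $1$, and then conclude by uniqueness of that equation. First, I would recall that each $V^{\pi^G}$ satisfies $\frac{dV^{\pi^G}_t}{V^{\pi^G}_t}=\sum_i \pi^G(\mu_t)^i\frac{d\mu_t^i}{\mu_t^i}$ as a F\"ollmer pathwise integral, equivalently $V^{\pi^G}_t=1+\int_0^t \sum_i \pi^G(\mu_s)^i V^{\pi^G}_s\frac{d\mu_s^i}{\mu_s^i}$. Integrating this identity against $d\nu(\pi^G)$ and interchanging the order of integration gives, at least formally,
\begin{align*}
W_t &= 1+\int_{\mathcal{FG}^{M,\alpha}}\!\!\int_0^t \sum_i \pi^G(\mu_s)^i V^{\pi^G}_s\frac{d\mu_s^i}{\mu_s^i}\,d\nu(\pi^G) \\
 &= 1+\int_0^t \sum_i \Bigl(\int \pi^G(\mu_s)^i V^{\pi^G}_s\,d\nu(\pi^G)\Bigr)\frac{d\mu_s^i}{\mu_s^i} \\
 &= 1+\int_0^t W_s \sum_i \pi^{M,\alpha,\nu}_s{}^i\frac{d\mu_s^i}{\mu_s^i},
\end{align*}
which is exactly the wealth equation defining $V^{\pi^{M,\alpha,\nu}}$. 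Combined with $W_0=1$ and the (pathwise) uniqueness of the corresponding exponential-type equation, this yields $W_T=V^{\pi^{M,\alpha,\nu}}_T$.

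The one genuinely non-routine point is the Fubini-type interchange between F\"ollmer's pathwise integral in time and the $\nu$-integration over portfolio maps, since the former is defined as a limit along the partition sequence $(\mathbb{T}_n)$ rather than as an ordinary Lebesgue integral. I would justify this by first approximating the pathwise integrals by their Riemann sums along $\mathbb{T}_n$, namely $\sum_{s\in\mathbb{T}_n,\,s\le t}\pi^G(\mu_s)^i V^{\pi^G}_s(\mu^i_{s'}-\mu^i_s)$, for which the classical Fubini theorem applies without issue, and then passing to the limit as $n\to\infty$. The passage to the limit is controlled uniformly in $G\in\mathcal{G}^{M,\alpha}$: by Lemma~\ref{lem:compact} the set $\mathcal{G}^{M,\alpha}$ is compact in $\|\cdot\|_{C^{2,0}}$, by Lemma~\ref{lem:continuity} (together with the explicit master equation \eqref{eq:master}) $G\mapsto V^G_t$ is continuous and bounded on $\mathcal{G}^{M,\alpha}$ uniformly in $t\in[0,T]$, and $\pi^G$ is a continuous bounded functional of $G$, so dominated convergence allows the exchange of limit and $\nu$-integral. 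This uniform control is the step I expect to be the main technical obstacle, and it is essentially the reason why we restricted to the compact, regular class $\mathcal{G}^{M,\alpha}$ in the first place.
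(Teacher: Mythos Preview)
Your proposal is correct and follows essentially the same route as the paper: both verify that $W_t=\int V_t^{\pi^G}\,d\nu(\pi^G)$ satisfies the wealth equation $\frac{dW_t}{W_t}=\sum_i(\pi^{M,\alpha,\nu}_t)^i\frac{d\mu^i_t}{\mu^i_t}$ by differentiating under the $\nu$-integral and using \eqref{eq:wealthfunc}. In fact you are more explicit than the paper about the Fubini-type interchange between the F\"ollmer integral and the $\nu$-integration, which the paper's proof performs formally without comment.
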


\begin{proof}
Note that by~\eqref{eq:wealthfunc}, we have
\begin{align*}
\frac{d\int_{\mathcal{FG}^{M,\alpha}} V_T^{\pi^G}d\nu(\pi^G)}{\int_{\mathcal{FG}^{M,\alpha}} V_T^{\pi^G}d\nu(\pi^G)}&=
\frac{\int_{\mathcal{FG}^{M,\alpha}} dV_T^{\pi^G}d\nu(\pi^G)}{\int_{\mathcal{FG}^{M,\alpha}} V_T^{\pi^G}d\nu(\pi^G)}\\
&=\frac{\int_{\mathcal{FG}^{M,\alpha}} 
\left( \sum_{i=1}^d V_T^{\pi}\pi^G(\mu_T)^i\frac{d\mu^i_{T}}{\mu^i_{T}}\right)d\nu(\pi^G)}{\int_{\mathcal{FG}^{M,\alpha}} V_T^{\pi^G}d\nu(\pi^G)}\\
&=\sum_{i=1}^d \frac{\left(\int_{\mathcal{FG}^{M,\alpha}}V_T^{\pi^G} \pi^G(\mu_T)^i d\nu(\pi^G)\right)}
{\int_{\mathcal{FG}^{M,\alpha}} V_T^{\pi^G}d\nu(\pi^G)}
\frac{d\mu^i_{T}}{\mu^i_{T}}\\
&=\sum_{i=1}^d(\pi_{T}^{M,\nu,\alpha})^i\frac{d\mu^i_{T}}{\mu^i_{T}}=\frac{dV_T^{M}(\nu)}{V_T^{M}(\nu)}.
\end{align*}
\end{proof}

\subsubsection{Functionally generated log-optimal portfolios}\label{sec:logoptimalfunc}

By definition, the notion of log-optimal portfolios requires the introduction of a probability space on which we consider a stochastic model for the market weights. We suppose that $\mu=(\mu^1_t, \ldots, \mu_t^d)_{t \geq 0}$ follows a \emph{time-homogeneous Markovian It\^o-diffusion} 
(defined on $(\Omega, \mathcal{F}, (\mathcal{F}_t)_{t\geq 0}, \mathbb{P})$) with values in $\Delta^d$ of the form 
\begin{align}\label{eq:muP}
\mu_t=\mu_0+\int_0^t  c(\mu_s)\lambda(\mu_s)ds + \int_0^t \sqrt{c(\mu_s)}dW_s, \quad \mu_0 \in \Delta^{d},
\end{align}
where $\sqrt{\cdot}$ denotes the matrix square root, $W$ a $d$-dimensional Brownian motion,  
$\lambda$ a Borel measurable function from $\Delta^d \to \mathbb{R}^d$ 
and $c$ a Borel measurable function from $\Delta^d \to S_+^d$,
satisfying 
\begin{align}\label{eq:Deltainside1}
&\int_0^T \lambda^{\top}(\mu_t)c(\mu_t) \lambda(\mu_t) dt<\infty, \quad \forall  T \in [0,\infty), \\
&c(x)\mathbf{1}=0, \qquad 
\sum_{i,j} c^{ij}(x) \lambda(x)^j=0, \quad \forall x \in \Delta^d. \label{eq:Deltainside2}
\end{align}
The requirements in \eqref{eq:Deltainside2} are necessary to guarantee  that the process $\mu$ lies in $\Delta^d$. 
Note that $(\mu_t)_{t \geq 0}$ given by \eqref{eq:muP} satisfies the so called \emph{structure condition} (see \cite{S:95}), 
which is due to  \eqref{eq:Deltainside1} and the fact that the drift part is of form $\int_0^t c(\mu_s) \lambda(\mu_s) ds$.
Let us remark that the structure condition characterizes the condition of ``No unbounded profit with bounded risk'' (NUPBR) in the case of continuous semimartingales
(see e.g., \cite{HS:10}). This is the minimal condition for economically reasonable models in continuous time
and is a usual assumption in stochastic portfolio theory. 

In this setting proportions of current (relative) wealth invested in each of the assets are described by processes $\pi$ in the following set
\begin{align}\label{eq:Piset}
\Pi=\{\pi \,  | \, H^d\text{-valued, predictable}, R\text{-integrable}\},              
\end{align}
where the process $R$ is defined componentwise by $R_t^i=\int_0^t\frac{d\mu_{s}^i}{\mu_{s}^i}$. Recall that $H^d$ denotes the hyperplane corresponding to portfolio weights that are not necessarily long-only. Note that the set $\mathcal{FG}^{M,\alpha}$ is clearly a subset of long-only strategies in $\Pi$.

Analogously to the model free setting (c.f. Equation \eqref{eq:wealthfunc}) the dynamics of the relative wealth process $V^{\pi}$ 
built by investing according to $\pi \in \Pi$ satisfy
\begin{align}\label{eq:wealth}
\frac{dV_t^{\pi}}{V_{t}^{\pi}}=\sum_{i=1}^d \pi_t^i\frac{d\mu_{t}^i}{\mu_{t}^i}, \quad V_0^{\pi} =1.
\end{align}
In contrast to Section \ref{sec:funcgen}, the right hand side is here understood as the usual stochastic integral since we deal with general integrands $\pi$.
Note that we can also write 
\begin{align}\label{eq:portfolio}
V_T^{\pi}&=\mathcal{E}((\pi \bullet R))_T=\exp\left(\int_0^{T} \left(\frac{\pi}{\mu_t}\right)^{\top}d\mu_{t}-\frac{1}{2}\int_0^{T}\left(\frac{\pi}{\mu_{t}}\right)^{\top}c(\mu_t)\frac{\pi}{\mu_{t}}
dt\right)\\
&=\exp\left(\int_0^{T} \sum_{i=1}^d \pi^i\frac{d\mu_{t}^i}{\mu_{t}^i}-\frac{1}{2}\int_0^{T}\sum_{i,j}\frac{\pi^i}{\mu_{t}^i}\frac{\pi^j}{\mu_{t}^j}c^{ij}(\mu_t)
dt\right),\notag
\end{align}
where, for two vectors $p, q \in \mathbb{R}^d$, $\frac{p}{q}$ always denotes the componentwise quotient $(\frac{p^1}{q^1}, \ldots, \frac{p^d}{q^d})$.

Let us now turn to the log-optimal num\'eraire portfolio which is defined as in~\eqref{C15}, now however in continuous time. 
As in \cite[Section 3.1]{FKa:10}, we derive the ratio of two wealth processes $V^{\pi}$ and $V^{\theta}$ for  $\pi, \theta \in \Pi$.  
Using \eqref{eq:wealth} (for the processes $\pi$ and $\theta$) and It\^{o}'s lemma, this ratio is given by
\begin{align*}
d\left(\frac{V_t^{\pi}}{V^{\theta}_t}\right)&=\frac{V_t^{\pi}}{V^{\theta}_t}\left(\frac{\pi_t}{\mu_t}-\frac{\theta_t}{\mu_t}\right)^{\top}\left(d\mu_t-c(\mu_t)\frac{\theta_t}{\mu_t}dt\right)\\
&=\frac{V_t^{\pi}}{V^{\theta}_t}\left(\frac{\pi_t}{\mu_t}-\frac{\theta_t}{\mu_t}\right)^{\top}\left(\sqrt{c(\mu_t)}dW_t+c(\mu_t)\left(\lambda(\mu_t)
-\sqrt{c(\mu_t)}\frac{\theta_t}{\mu_t}\right) dt\right).
\end{align*}
The finite variation part of the expression vanishes \emph{for every $\pi \in \Pi$} if we choose $\theta \in \Pi$ such that
\begin{align}\label{eq:scaledweights} 
c(\mu_t)\left(\frac{\theta_t}{\mu_t} -\lambda(\mu_t)\right)=0, \quad \mathbb{P}\text{-a.s. for all } t \geq 0.  
\end{align}
By passing from the scaled relative weights $\theta/ \mu$ to ordinary portfolio weights via~\cite[Equation (5)]{FKa:10}, 
the generic solution of \eqref{eq:scaledweights} which we denote by $\pi^{\text{num}}$ can be expressed by
\begin{align}\label{eq:logopt}
(\pi_t^{\text{num}})^i=\mu^i_{t}\left(\lambda^i(\mu_t)+1-\sum_{j=1}^d \mu^j_{t}\lambda^j(\mu_t)\right).
\end{align}
Note that due to \eqref{eq:scaledweights}, the ratio $V_t^{\pi}/V_t^{{\text{num}}}$ is, for any $\pi \in \Pi$, a non-negative local martingale 
and therefore a supermartingale. Hence $V^{{\text{num}}}$  yields the relative wealth process corresponding to the log-optimal portfolio (see e.g., \cite{KK:07, FKa:10}). 
Indeed, by the supermartingale property and Jensen's inequality we have
\[
 \mathbb{E}\left[\log (V_T^{\pi})- \log (V_T^{\text{num}})\right]=\mathbb{E}\left[\log \left(\frac{ V_T^{\pi}}{V_T^{\text{num}}}\right)\right]\leq 
 \log\left(\mathbb{E}\left[\frac{ V_T^{\pi}} {V_T^{\text{num}}}\right]\right)\leq 0,
\]
showing that $\mathbb{E}[\log (V_T^{\pi})] \leq\mathbb{E}[ \log (V_T^{\text{num}})]$ for all $\pi \in \Pi$.

Moreover, the expected value of the log-optimal portfolio is given due to \eqref{eq:portfolio} by
\[
\sup_{\pi \in \Pi}\mathbb{E}[\log V^{\pi}_T]=\frac{1}{2}\mathbb{E}\left[\int_0^T \lambda^{\top}(\mu_t) c(\mu_t) \lambda(\mu_t) dt\right].
\]

So far we have optimized over \emph{all} strategies in $\Pi$. 
In the sequel we shall mainly consider suprema taken over smaller sets, in particular over $\mathcal{FG}^{M,\alpha}$. Note that in this case the optimizer will 
still be a function of the market weights due to the time-homogeneous Markov property of $(\mu_t)_{t \geq 0}$.

In this context let us also answer the question when the log-optimal num\'eraire portfolio is functionally generated in the sense of Proposition 
\ref{prop:logoptfunctionally} below. This is needed to relate its asymptotic growth rate to the one of the best retrospectively chosen portofolio and 
the universal portfolio (see Theorem \ref{cor:main} below).

\begin{proposition}\label{prop:logoptfunctionally}
Let $(\mu_t)_{t \geq 0}$ be of the form  \eqref{eq:muP}.  Then the log-optimal portfolio is generated by a differentiable function $G$, i.e.,
\begin{align*}
(\pi_t^{\text{num}})^i=\mu^i_{t}\left(\frac{D^i G(\mu_t)}{G(\mu_t)}+1-\sum_{j=1}^d \mu^j_{t}\frac{D^j G(\mu_t)}{G(\mu_t)}\right), \quad i=1, \ldots,d,
\end{align*}
 if the drift characteristic $\lambda$ satisfies
\[
\lambda(x)=\nabla \log G(x)= \frac{\nabla G(x)}{G(x)}, \quad x \in \Delta^d.
\]
\end{proposition}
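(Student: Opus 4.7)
The plan is a direct algebraic comparison between the two explicit formulas \eqref{eq:logopt} and \eqref{eq:FG}. Recall that under the It\^o-diffusion model \eqref{eq:muP}, the log-optimal num\'eraire portfolio was derived as
\[
(\pi_t^{\text{num}})^i = \mu^i_t\!\left(\lambda^i(\mu_t) + 1 - \sum_{j=1}^d \mu^j_t \lambda^j(\mu_t)\right), \qquad i=1,\dots,d,
\]
the normalization term ensuring that $\pi^{\text{num}}$ takes values in the hyperplane $H^d$. On the other hand, the portfolio $\pi^G$ functionally generated by a differentiable function $G\colon \Delta^d \to \mathbb{R}$ is, by \eqref{eq:FG}, given pointwise by the same template
\[
(\pi^G(x))^i = x^i\!\left(\frac{D^i G(x)}{G(x)} + 1 - \sum_{j=1}^d x^j \frac{D^j G(x)}{G(x)}\right),
\]
with the role of $\lambda^i$ being played by the logarithmic gradient $D^i G/G$.

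Thus the entire argument amounts to substituting the hypothesis $\lambda^i(x) = D^i G(x)/G(x)$, i.e.\ $\lambda(x) = \nabla \log G(x)$, into the first formula and observing that the result is exactly the second formula evaluated at $x = \mu_t$. No integration, no further regularity, and no concavity of $G$ is required for this identification; concavity would be needed only if one additionally wishes $\pi^G$ to be long-only, i.e.\ to lie in $\mathcal{FG}^{M,\alpha}$. The main conceptual point (rather than an obstacle) is simply to recognize the common structure of the two expressions and to confirm that both normalization terms play the same role of projecting the ``scaled weights'' onto $H^d$ after multiplication by $\mu^i_t$; there is no analytic difficulty to overcome beyond this recognition.
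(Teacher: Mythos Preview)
Your proposal is correct and follows exactly the same approach as the paper: the paper's proof consists of the single sentence ``The assertion follows from expression~\eqref{eq:logopt},'' which is precisely the substitution of $\lambda^i = D^i G/G$ into the explicit formula for $\pi^{\text{num}}$ that you describe. Your additional remarks on concavity and normalization are accurate elaborations but go beyond what the paper records.
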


\begin{proof}
The assertion follows from expression~\eqref{eq:logopt}. 
\end{proof}

\subsubsection{Asymptotic growth rates in the case of an ergodic market weights process}\label{sec:assgrowth}

This paragraph is dedicated to establish the form of the asymptotic growth rates, 
 given by 
\[
\lim_{T\to \infty}\frac{1}{T} \log V^{\pi}_T,
\] 
when the market weights process $(\mu_t)_{t \geq 0}$ described by the stochastic model in~\eqref{eq:muP} satisfies the following ergodicity assumption.

\begin{assumption}\label{ass:1}
The process $\mu$ as given in~\eqref{eq:muP} is an ergodic process with stationary  measure $\rho$ on $\Delta^d$.
\end{assumption}

For the precise notion of ergodicity in continuous time we refer to \cite[Section 2.2., Theorem 2.4 and Section 2.2.3]{E:15}. 
In the following theorem we consider portfolio maps which are not necessarily long-only, but can take values in the hyperplane $H^d$.

\begin{theorem}\label{th:ergodic}
Under Assumption~\ref{ass:1} the following statements hold true:
\begin{enumerate}
\item Let $\pi: \Delta^d \to H^d$  be any ($\rho$-measurable) portfolio map such that
\begin{align}
\int_{\Delta^d} \left|\left(\frac{\pi(x)}{x}\right)^{\top}c(x)\lambda(x)\right|\rho(dx) < \infty,\notag\\
Q^{\pi}:=\int_{\Delta^d}\left(\frac{\pi(x)}{x}\right)^{\top} c(x)\left(\frac{\pi(x)}{x}\right)\rho(dx) < \infty.\label{eq:finitevar} 
\end{align}
We then have, for $\rho$-a.e.~starting value $\mu_0$, that
\begin{align*}
\lim_{T \to \infty} \frac{1}{T} \log(V^{\pi}_T)&=L^{\pi}:=\int_{\Delta^d} \left(\frac{\pi(x)}{x}\right)^{\top}c(x)\lambda(x)\rho(dx)\\
& -\frac{1}{2}
\int_{\Delta^d}\left(\frac{\pi(x)}{x}\right)^{\top} c(x)\left(\frac{\pi(x)}{x}\right)\rho(dx) 
, \quad \mathbb{P}\textrm{-a.s.} 
\end{align*}
\item Assume that $L^{\text{num}}:=\frac{1}{2}\int_{\Delta^d} \lambda^{\top}(x) c(x) \lambda(x)\rho(dx)< \infty$. Then, for $\rho$-a.e.~starting value $\mu_0$, it holds that
\[
\lim_{T \to \infty} \frac{1}{T}\log V^{\text{num}}_T=L^{\text{num}}, \quad \mathbb{P}\textrm{-a.s.}
\]
\end{enumerate}
\end{theorem}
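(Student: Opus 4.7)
The plan for part (i) is to expand $\log V_T^\pi$ using \eqref{eq:portfolio} together with the diffusion decomposition \eqref{eq:muP}. Substituting $d\mu_t = c(\mu_t)\lambda(\mu_t)\,dt+\sqrt{c(\mu_t)}\,dW_t$ into the stochastic integral against $d\mu$ in \eqref{eq:portfolio} produces
\begin{equation*}
\log V_T^\pi \;=\; A_T + M_T - \tfrac{1}{2} B_T,
\end{equation*}
where
\begin{equation*}
A_T = \int_0^T \!\left(\tfrac{\pi(\mu_t)}{\mu_t}\right)^{\!\top}\! c(\mu_t)\lambda(\mu_t)\,dt, \qquad
M_T = \int_0^T \!\left(\tfrac{\pi(\mu_t)}{\mu_t}\right)^{\!\top}\! \sqrt{c(\mu_t)}\,dW_t,
\end{equation*}
\begin{equation*}
B_T = \int_0^T \!\left(\tfrac{\pi(\mu_t)}{\mu_t}\right)^{\!\top}\! c(\mu_t) \left(\tfrac{\pi(\mu_t)}{\mu_t}\right)dt.
\end{equation*}
Under Assumption~\ref{ass:1}, Birkhoff's ergodic theorem for continuous-time ergodic Markov processes (cf.~\cite{E:15}), combined with the integrability conditions \eqref{eq:finitevar}, yields, for $\rho$-a.e.~starting value $\mu_0 \in \Delta^d$,
\begin{equation*}
\tfrac{1}{T} A_T \to \int_{\Delta^d}\!\!\left(\tfrac{\pi(x)}{x}\right)^{\!\top}\! c(x)\lambda(x)\,\rho(dx), \qquad \tfrac{1}{T} B_T \to Q^\pi, \qquad \mathbb{P}\text{-a.s.}
\end{equation*}

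The key step is handling the local-martingale term $M_T$, whose quadratic variation is precisely $B_T$. By the Dambis-Dubins-Schwarz theorem, $M_T = \widetilde W_{\langle M\rangle_T}$ for some Brownian motion $\widetilde W$, so $M_T/T = (\widetilde W_{\langle M\rangle_T}/\langle M\rangle_T)\cdot(\langle M\rangle_T/T)$ whenever $\langle M\rangle_\infty = \infty$; in this case the first factor vanishes almost surely by the strong law of large numbers for Brownian motion, while the second is bounded. If instead $\langle M\rangle_\infty < \infty$, then $M_T$ converges a.s.~to a finite limit and $M_T/T \to 0$ trivially. Either way $M_T/T \to 0$ $\mathbb{P}$-a.s., so summing the three limits reproduces $L^\pi$.

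For part (ii), I apply part (i) to $\pi^{\text{num}}$ given by \eqref{eq:logopt}. The defining property \eqref{eq:scaledweights} yields $c(\mu_t)(\pi^{\text{num}}_t/\mu_t - \lambda(\mu_t))=0$, and the symmetry of $c$ gives
\begin{equation*}
\left(\tfrac{\pi^{\text{num}}}{\mu}\right)^{\!\top}\! c \left(\tfrac{\pi^{\text{num}}}{\mu}\right) \;=\; \left(\tfrac{\pi^{\text{num}}}{\mu}\right)^{\!\top}\! c\,\lambda \;=\; \lambda^{\top} c\, \lambda.
\end{equation*}
Hence the hypothesis $L^{\text{num}}<\infty$ secures both integrability conditions in \eqref{eq:finitevar} for $\pi = \pi^{\text{num}}$, and the drift and variance contributions to $L^{\pi^{\text{num}}}$ combine as $2L^{\text{num}}-L^{\text{num}}=L^{\text{num}}$, as claimed. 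The main technical obstacle is the martingale SLLN; the remaining ergodic-theorem pieces are routine once the integrability assumptions have been invoked.
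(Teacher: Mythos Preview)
Your proof is correct and follows the same overall strategy as the paper: decompose $\log V_T^\pi$ into drift, martingale, and quadratic-variation pieces, apply the ergodic theorem to the two finite-variation terms, and show that the martingale contribution is $o(T)$. The only difference is in the handling of $M_T/T\to 0$: the paper packages this step into Lemma~\ref{lem:conv} (a law-of-large-numbers for continuous local martingales requiring $\tfrac{1}{T^2}\langle M,M\rangle_T\log\log T\to 0$, which follows from $\langle M,M\rangle_T/T\to Q^\pi<\infty$), whereas you argue directly via the Dambis--Dubins--Schwarz time change and the strong law for Brownian motion. These are essentially the same argument; your route is arguably cleaner since it avoids the auxiliary $\log\log$ condition. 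For part (ii) the paper re-runs the martingale argument from scratch after simplifying via \eqref{eq:scaledweights}, while you more efficiently observe that $\pi^{\text{num}}$ satisfies the hypotheses of part (i); both are fine.
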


The proof of Theorem \ref{th:ergodic} relies on the following lemma which is stated and proved in~\cite[Lemma 1.3.2]{F:02}.

\begin{lemma}\label{lem:conv}
Let $M$ be a continuous local martingale such that
\begin{align}\label{eq:condconv} 
\lim_{T \to \infty}\frac{1}{T^2} \langle M, M\rangle_T \log \log T=0, \quad \mathbb{P}\textrm{-a.s.}
\end{align}
Then $\lim_{T \to \infty} \frac{1}{T}M_T=0$, $\mathbb{P}$-a.s.
\end{lemma}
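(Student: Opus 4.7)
The plan is to reduce the statement to the law of the iterated logarithm for Brownian motion via the Dambis--Dubins--Schwarz (DDS) time change. Concretely, I would split the probability space according to whether $\langle M,M\rangle_\infty$ is finite or infinite, and handle each case separately.

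First I would dispose of the easy case $\{\langle M,M\rangle_\infty < \infty\}$. On this set $M$ is a continuous $L^2$-bounded local martingale (after localization) and converges $\mathbb{P}$-a.s.\ to a finite limit $M_\infty$ as $T\to\infty$; hence $M_T/T \to 0$ trivially.

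The main work is on the set $\{\langle M,M\rangle_\infty = \infty\}$. Here I would invoke the DDS theorem to produce (possibly after enlarging the probability space) a Brownian motion $B$ such that $M_T = B_{\langle M,M\rangle_T}$. The classical law of the iterated logarithm for $B$ gives
\begin{equation*}
\limsup_{t\to\infty} \frac{|B_t|}{\sqrt{2t\log\log t}} = 1, \quad \mathbb{P}\text{-a.s.},
\end{equation*}
so there is an a.s.\ finite random constant $C$ such that $|B_t|\le C\sqrt{t\log\log t}$ for all $t$ large enough. Substituting $t = \langle M,M\rangle_T$ yields
\begin{equation*}
|M_T| \le C\sqrt{\langle M,M\rangle_T \,\log\log \langle M,M\rangle_T}
\end{equation*}
for $T$ sufficiently large. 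Now the hypothesis \eqref{eq:condconv} implies $\langle M,M\rangle_T = o(T^2/\log\log T)$, so in particular $\log\log\langle M,M\rangle_T \le \log\log(T^2) + o(1) = \log\log T + O(1)$ eventually. Combining the two bounds gives
\begin{equation*}
\frac{|M_T|}{T} \le \frac{C'\sqrt{\langle M,M\rangle_T \log\log T}}{T} \longrightarrow 0, \quad \mathbb{P}\text{-a.s.},
\end{equation*}
which is exactly the claim.

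The main obstacle I expect is not any single step but rather the careful bookkeeping on the set $\{\langle M,M\rangle_\infty = \infty\}$: one must verify that the LIL, which is an asymptotic statement in $t$, can indeed be applied at the random times $\langle M,M\rangle_T$, and one must check that the a.s.\ null exceptional sets from the LIL, from \eqref{eq:condconv}, and from the DDS construction can be combined into a single $\mathbb{P}$-null set. Once these measurability issues are settled, the comparison $\log\log \langle M,M\rangle_T \lesssim \log\log T$ is immediate from the growth hypothesis and makes the final estimate essentially automatic.
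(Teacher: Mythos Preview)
Your argument is correct. The paper does not supply its own proof of this lemma but simply cites \cite[Lemma 1.3.2]{F:02}; the proof there is exactly the Dambis--Dubins--Schwarz time change followed by the law of the iterated logarithm that you sketch, including the case split on $\{\langle M,M\rangle_\infty<\infty\}$ versus $\{\langle M,M\rangle_\infty=\infty\}$.
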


\begin{proof}[Proof of Theorem \ref{th:ergodic}]
Let us start by proving statement (i). By~\eqref{eq:wealth}, $\log V^{\pi}_T$ reads as
\begin{equation}
\begin{split}\label{eq:logV}
\log V^{\pi}_T&= \int_0^{T} \left(\frac{\pi(\mu_t)}{\mu_t}\right)^{\top}c(\mu_t)\lambda(\mu_t)dt-\frac{1}{2}\int_0^{T}\left(\frac{\pi(\mu_t)}{\mu_{t}}\right)^{\top}c(\mu_t)\frac{\pi(\mu_t)}{\mu_{t}}dt\\
&\quad+\int_0^{T} \left(\frac{\pi(\mu_t)}{\mu_t}\right)^{\top}\sqrt{c(\mu_t)}dW_t.
\end{split}
\end{equation}
The local martingale part
\[
M^{\pi}_T:= \int_0^T\left(\frac{\pi(\mu_t)}{\mu_t}\right)^{\top}\sqrt{c(\mu_t)} dW_t
\]
satisfies Condition~\eqref{eq:condconv} of Lemma~\ref{lem:conv} below. Indeed, by the ergodic theorem in continuous time (see e.g.,\cite[Theorem 2.4 and Section 2.2.3]{E:15}) and~\eqref{eq:finitevar} we have
\[
\frac{1}{T}\langle M^{\pi},M^{\pi}\rangle_T=
\frac{1}{T}\int_0^{T}\left(\frac{\pi(\mu_t)}{\mu_{t}}\right)^{\top}c(\mu_t)\frac{\pi(\mu_t)}{\mu_{t}}dt\stackrel{T\to \infty} \to   Q^{\pi}< \infty, \quad \mathbb{P}\textrm{-a.s.}
\]
Multiplying the left hand side with $(\log \log T)/T$, therefore yields Condition~\eqref{eq:condconv} and 
\[
\frac{1}{T} M^{\pi}_T=\frac{1}{T} \int_0^T \left(\frac{\pi(\mu_t)}{\mu_t}\right)^{\top} \sqrt{c(\mu_t)} dW_t \to 0, \quad \mathbb{P}\textrm{-a.s.}
\]
Hence, evoking again the ergodic theorem yields
\begin{align*}
\lim_{T \to \infty}\frac{1}{T}\log V^{\pi}_T&= \lim_{T \to \infty}\frac{1}{T}\left(\int_0^{T} \left(\frac{\pi(\mu_t)}{\mu_t}\right)^{\top}c(\mu_t)
\lambda(\mu_t)dt-\frac{1}{2}\int_0^{T}\left(\frac{\pi(\mu_t)}{\mu_{t}}\right)^{\top}c(\mu_t)\frac{\pi(\mu_t)}{\mu_{t}}dt\right)\\
&=\int_{\Delta^d} \left(\frac{\pi(x)}{x}\right)^{\top}c(x)\lambda(x)\rho(dx)-\frac{1}{2}
\int_{\Delta^d}\left(\frac{\pi(x)}{x}\right)^{\top} c(x)\left(\frac{\pi(x)}{x}\right)\rho(dx),
\end{align*}
$\mathbb{P}$-a.s. (and also in $L^1(\Omega,\mathcal{F},P)$) and thus assertion (i).

Concerning statement (ii), note that according to relation~\eqref{eq:scaledweights}, the scaled relative weights corresponding to the log-optimal portfolio satisfy
\[
c(x)\left(\frac{\pi^{\text{num}}(x)}{{x}}-\lambda(x)\right)=0.
\]
Thus, by~\eqref{eq:logV} and \eqref{eq:wealth}, $\log V^{\text{num}}_T$ simplifies to
\[
\log V^{\text{num}}_T= \frac{1}{2} \int_0^T \lambda^{\top}(\mu_t)c(\mu_t) \lambda(\mu_t)dt+ \int_0^T\lambda^{\top}(\mu_t)\sqrt{c(\mu_t)}dW_t.
\]
In this case we have
\[
\frac{1}{T}\langle M^{\pi^{\text{num}}},M^{\pi^{\text{num}}}\rangle_T=
\frac{1}{T}\int_0^T  \lambda^{\top}(\mu_t)c(\mu_t) \lambda(\mu_t) dt\stackrel{T\to \infty} \to  2 L^{\text{num}}, \quad \mathbb{P}\textrm{-a.s.},
\]
which yields by the same argument as above
\[
\frac{1}{T} M^{\pi^{\text{num}}}_T=\frac{1}{T} \int_0^T \lambda^{\top}(\mu_t)\sqrt{c(\mu_t)}dW_t \to 0, \quad \mathbb{P}\textrm{-a.s.}
\]
and in turn
\[
\lim_{T \to \infty}\frac{1}{T}\log V^{\text{num}}_T= \lim_{T \to \infty}\frac{1}{2T}\int_0^T \lambda^{\top}(\mu_t)c(\mu_t) \lambda(\mu_t) dt=L^{\text{num}}, \quad \mathbb{P}\textrm{-a.s.}
\]
\end{proof}

\subsection{Asymptotically equivalent growth rates}
As in discrete time we are now able to establish asymptotic equality between the growth rates of all three portfolio types introduced in Section \ref{sec:typesfun}. 
We start by comparing the best retrospectively chosen one with the universal one.

\begin{theorem} \label{th:continouscase1}
Let $M,\alpha > 0$ be fixed and let $(\mu_t)_{t \geq 0}$ be a continuous path satisfying Assumption \ref{ass:quadratic} such that for all $i \in \{1, \ldots, d\}$
\begin{align}\label{eq:assfinite}
\lim_{T \to \infty}\frac{1}{T}[\mu^i, \mu^i]_T < \infty.
\end{align} 
Consider a probability measure $m$ on $\mathcal{G}^{M,\alpha}$ with full support and set $\nu=F_* m$ with $F$ defined in \eqref{eq:Pi}. Then
\[
\lim_{T\to \infty} \frac{1}{T} (\log V_T^{*,M,\alpha} -\log V^{M,\alpha}_T(\nu)) =0,
\]
where $V^{*,M,\alpha}$ and  $V^{M,\alpha}(\nu)$ are defined in \eqref{eq:retrofunctional} and \eqref{eq:universalwealthfunk} respectively.
\end{theorem}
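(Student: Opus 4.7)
The plan is to mirror the discrete-time argument of Theorem \ref{t4.6}, replacing the compactness of $\mathcal{L}^M$ in $\|\cdot\|_\infty$ by the compactness of $\mathcal{G}^{M,\alpha}$ in $\|\cdot\|_{C^{2,0}}$ established in Lemma \ref{lem:compact}, and using the explicit continuity estimate \eqref{eq:funcont1} from the proof of Lemma \ref{lem:continuity} in place of the discrete Lipschitz perturbation \eqref{p8a}. The inequality $V^{M,\alpha}_T(\nu)\le V_T^{*,M,\alpha}$ is immediate from \eqref{eq:universalwealthfunk} and \eqref{eq:retrofunctional}, giving $\liminf_{T\to\infty} \frac{1}{T}(\log V_T^{*,M,\alpha}-\log V^{M,\alpha}_T(\nu))\ge 0$; the substance of the proof is the matching reverse bound.

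The first step is to obtain a uniform lower bound on the $m$-measure of small balls. Since $(\mathcal{G}^{M,\alpha},\|\cdot\|_{C^{2,0}})$ is compact, for every $\eta>0$ there is a finite $\eta/2$-net $\{G_1,\dots,G_N\}\subset\mathcal{G}^{M,\alpha}$; since $m$ has full support, $\delta(\eta):=\min_{1\le i\le N} m(B_{\eta/2}(G_i))>0$. For any $G^*\in\mathcal{G}^{M,\alpha}$ one can pick $G_i$ with $\|G^*-G_i\|_{C^{2,0}}<\eta/2$, whence $B_{\eta/2}(G_i)\subseteq B_\eta(G^*)$ and therefore $m(B_\eta(G^*))\ge\delta(\eta)$. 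Push forward under $F$ to obtain the analogous lower bound for $\nu$. This handles the fact that the optimizer will depend on $T$.

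Next, fix $T>0$ and let $G_T^{*,M,\alpha}\in\mathcal{G}^{M,\alpha}$ be the optimizer given by Proposition \ref{prop:optimizer}. The estimate \eqref{eq:funcont1} in the proof of Lemma \ref{lem:continuity} shows that there exist constants $C_1,C_2$ depending only on $M$ and $d$ such that for every $G\in B_\eta(G_T^{*,M,\alpha})$,
\[
\bigl|\log V_T^{G}-\log V_T^{*,M,\alpha}\bigr|\le C_1\eta+C_2\eta\,\max_i[\mu^i,\mu^i]_T.
\]
Consequently,
\[
V^{M,\alpha}_T(\nu)=\int_{\mathcal{G}^{M,\alpha}}V_T^{G}\,dm(G)\ge\int_{B_\eta(G_T^{*,M,\alpha})}V_T^{G}\,dm(G)\ge\delta(\eta)\,V_T^{*,M,\alpha}\,\exp\!\bigl(-C_1\eta-C_2\eta\,\max_i[\mu^i,\mu^i]_T\bigr).
\]
Taking logarithms, dividing by $T$, and using assumption \eqref{eq:assfinite} (which guarantees $\limsup_{T\to\infty}\max_i[\mu^i,\mu^i]_T/T=:K<\infty$), one obtains
\[
\liminf_{T\to\infty}\tfrac{1}{T}\bigl(\log V^{M,\alpha}_T(\nu)-\log V_T^{*,M,\alpha}\bigr)\ge -C_2\eta K.
\]
Since $\eta>0$ is arbitrary, the liminf is nonnegative, which combined with the trivial direction yields the claimed limit equal to zero.

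The main obstacle is the $T$-dependence of the retrospective optimizer $G_T^{*,M,\alpha}$: a naive application of the full-support assumption only gives a ball-measure lower bound at each fixed point of $\mathcal{G}^{M,\alpha}$. The resolution is the finite $\eta/2$-net argument above, which is the precise place where compactness of $\mathcal{G}^{M,\alpha}$ (Lemma \ref{lem:compact}) enters. A secondary subtlety is that, unlike in discrete time where the perturbation estimate \eqref{9p} is purely algebraic, in continuous time the error term from Lemma \ref{lem:continuity} contains $\max_i[\mu^i,\mu^i]_T$, which is why the pathwise sublinearity assumption \eqref{eq:assfinite} is required.
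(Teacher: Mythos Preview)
Your proposal is correct and follows essentially the same route as the paper's proof: both exploit the compactness of $\mathcal{G}^{M,\alpha}$ in $\|\cdot\|_{C^{2,0}}$ together with full support of $m$ to get a uniform lower bound $\delta(\eta)$ on the measure of $\eta$-balls, invoke the continuity estimate \eqref{eq:funcont1} from Lemma~\ref{lem:continuity} to control $\log V_T^G-\log V_T^{*,M,\alpha}$ on such a ball, and then use assumption \eqref{eq:assfinite} to kill the quadratic-variation term after dividing by $T$. The only cosmetic difference is that the paper passes through Jensen's inequality for $x\mapsto x^{1/T}$ to reach the bound $\bigl(V_T^{M,\alpha}(\nu)/V_T^{*,M,\alpha}\bigr)^{1/T}\ge\delta^{1/T}e^{-\epsilon}$, whereas you integrate the pointwise lower bound $V_T^G\ge V_T^{*,M,\alpha}e^{-C_1\eta-C_2\eta\max_i[\mu^i,\mu^i]_T}$ directly over $B_\eta$; your variant is in fact slightly cleaner and avoids the normalization bookkeeping that Jensen requires.
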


\begin{proof}
The inequality ``$\geq$'' is obvious. For the converse inequality we proceed similarly as in the previous section, however on the level of generating functions. 
As $m$ has full support and ${\mathcal{G}}^{M, \alpha}$ is compact, we have that, for $\eta >0$ there exists some $\delta>0$, such that every $\eta$-neighborhood of a point $G \in \mathcal{G}^{M,\alpha}$ has $m$-measure bigger than $\delta.$

Let $T \geq 1 $ and denote by $G_T^{*,M,\alpha}$ the optimizer as of Proposition \ref{prop:optimizer}.  
Consider now a generating function $G$ such $\| G- G_T^{*,M,\alpha}\|_{C^{2,0}} \leq \eta$. 
Then it follows from \eqref{eq:funcont1} that
\begin{equation}\label{eq:estimate1}
\begin{split}
\frac{1}{T}\left(\log (V_T^{G})-\log (V_T^{G_T^{*,M,\alpha}})\right)&\geq \frac{1}{T} \left(-2M \eta-\left(\frac{M}{2}d^2 \eta+
\frac{M^3}{2} d^2\eta\right)\max_{i}[\mu^i, \mu^i]_T\right)\\
&=:-K_T.
\end{split}
\end{equation}
Fix $\epsilon >0 $ and note that by assumption \eqref{eq:assfinite} and continuity of $ T\mapsto \frac{1}{T} [u^i, u^i]_T$ on $[1, \infty)$, $ \sup_{T \in [1, \infty)}\frac{1}{T}[\mu^i, \mu^i]_T$ can be bounded by some constant.  Therefore we can choose $\eta$ sufficiently small such that $K_T \leq \epsilon$ for all $T \geq 1$.

Denote by $B=B_\eta (G_T^{*,M,\alpha})$ the $\|\cdot \|_{C^{2,0}}$-ball with radius $\eta$ in $\mathcal{G}^{M,\alpha}$ which has $m$-measure at least $\delta >0,$ where $\delta$ only depends on $\eta$.
We then may estimate using Jensen's inequality and~\eqref{eq:estimate1}
\begin{align}
\left(\frac{V_T^{M,\alpha}(\nu)}{V_T^{G_T^{*,M,\alpha}}}\right)^{\frac{1}{T}}&=\left(\frac{\int_{\mathcal{G}^{M,\alpha}} V_T^{G}m(dG)}{V_T^{G_T^{*,M,\alpha}}}\right)^{\frac{1}{T}}
\geq\left(\frac{\int_{B_\eta (G_T^{*,M,\alpha})}  V_T^{G}m(dG)}{V_T^{G_T^{*,M,\alpha}}}\right)^{\frac{1}{T}}\notag\\
&\geq \delta^{\frac{1}{T}-1}
\frac{\int_{B_\eta (G_T^{*,M,\alpha})}  (V_T^{G})^{\frac{1}{T}}m(dG)}{(V_T^{G_T^{*,M,\alpha}})^{\frac{1}{T}}}\notag\\
&\geq \delta^{\frac{1}{T}}e^{-K_T} \notag\\
&\geq  \delta^{\frac{1}{T}}e^{-\epsilon}. \notag
\end{align}
Letting $T \to \infty$ for any given $\epsilon$ (which determines $\eta$ and thus in turn $\delta$) yields the assertion.
\end{proof}

In order to compare the asymptotic performance with the log-optimal portfolio we start by optimizing over portfolio maps
lying in $\mathcal{FG}^{M,\alpha}$ and suppose henceforth that $(\mu_t)_{t \geq 0}$ is a stochastic process of the form \eqref{eq:muP}.  
Under the ergodicity assumption \ref{ass:1} and 
in view of Theorem~\ref{th:ergodic} define
\begin{equation} \label{eq:logfunc}
 \begin{split}
\hat{\pi}^{M,\alpha}&:=\argmax_{\pi^G \in \mathcal{FG}^{M,\alpha}} \left(\int_{\Delta^d} \left(\frac{\pi^G(x)}{x}\right)^{\top} c(x)\lambda(x)\rho(dx)\right. \\
&\quad\left.-\frac{1}{2}
\int_{\Delta^d}\left(\frac{\pi^G(x)}{x}\right)^{\top} c(x)\left(\frac{\pi^G(x)}{x}\right)\rho(dx)\right) 
  \end{split}
\end{equation}
and the corresponding wealth process $\hat{V}^{M,\alpha}$ by $\hat{V}^{M,\alpha}=V^{\hat{\pi}^{M,\alpha}}$, whenever $\hat{\pi}^{M,\alpha}$ is well defined. Since
\[
\sup_{\pi^G \in \mathcal{FG}^{M,\alpha}} \mathbb{E}\left[ \log(V^{\pi^G}_T )\right]
\]
yields $\hat{\pi}^{M,\alpha}$ as optimizer for all $T >0$, $\hat{V}^{M,\alpha}$ corresponds to the log-optimal portfolio among functionally generated portfolios with generating function in $\mathcal{G}^{M,\alpha}$.

\begin{theorem}\label{tC3cf} 
Let $M,\alpha > 0$ be fixed and let  $(\mu_t)_{t \geq 0}$ be a stochastic process of the form \eqref{eq:muP} satisfying Assumption \ref{ass:1}. Moreover, suppose that 
\begin{align}\label{eq:conderg}
&\int_{\Delta^d} c^{ii}(x) \rho(dx)< \infty, \quad \text{for all  }i \in \{1, \ldots,d\},\\
&\int_{\Delta^d} \max_{i \in \{1, \ldots, d\}}|(c(x) \lambda(x))^i|\rho(dx) < \infty. \label{eq:conderg1}
\end{align}
Consider a probability measure $m$ on $\mathcal{G}^{M,\alpha}$ with full support and set $\nu=F_* m$ with $F$ defined in \eqref{eq:Pi}.
 Then
\begin{align}\label{eq:result1}
\liminf_{T\to \infty} \frac{1}{T} \log V_T^{*,M,\alpha} = \liminf_{T\to \infty} \frac{1}{T}\log V^{M,\alpha}_T(\nu)= \lim_{T\to \infty} \frac{1}{T} \log \hat{V}^{M,\alpha}_T, \quad \mathbb{P}\textrm{-a.s.}
\end{align}
where $\hat{V}^{M,\alpha}_T$ denotes the log-optimal portfolio among $\mathcal{FG}^{M,\alpha}$-maps defined via \eqref{eq:logfunc}, $V^{*,M,\alpha}$ and  
$V^{M,\alpha}(\nu)$ are defined pathwise in \eqref{eq:retrofunctional} and \eqref{eq:universalwealthfunk} respectively.
\end{theorem}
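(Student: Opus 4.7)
The plan is to mirror the discrete-time proof of Theorem \ref{tG3} and Corollary \ref{c4.8}: combine the pathwise Theorem \ref{th:continouscase1} with the ergodic Theorem \ref{th:ergodic}(i), and then upgrade the pointwise ergodic convergence to a \emph{uniform} statement over the compact class $\mathcal{G}^{M,\alpha}$ via the master equation \eqref{eq:master}. First I verify that Theorem \ref{th:continouscase1} applies under the present hypotheses. Since $d[\mu^i,\mu^j]_t=c^{ij}(\mu_t)\,dt$, the continuous-time ergodic theorem combined with \eqref{eq:conderg} yields $\tfrac{1}{T}[\mu^i,\mu^i]_T\to \int c^{ii}\,d\rho<\infty$ almost surely, so \eqref{eq:assfinite} is satisfied. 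Theorem \ref{th:continouscase1} then delivers the first equality of \eqref{eq:result1} \emph{pathwise}.

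To identify the common value with $L^{M,\alpha}:=L^{\hat\pi^{M,\alpha}}$, I apply Theorem \ref{th:ergodic}(i) to $\hat\pi^{M,\alpha}$. Using the explicit formula \eqref{eq:FG} together with the uniform bounds $\tfrac{1}{M}\leq G\leq M$ and $\|G\|_{C^{2,0}}\leq M$ valid on $\mathcal{G}^{M,\alpha}$, one obtains $|\pi^G(x)^i/x^i|\leq C_M$ uniformly in $x\in\Delta^d$ and $G\in\mathcal{G}^{M,\alpha}$; together with Cauchy-Schwarz $|c^{ij}|\leq\tfrac{1}{2}(c^{ii}+c^{jj})$ and the assumptions \eqref{eq:conderg}-\eqref{eq:conderg1}, this verifies the integrability conditions \eqref{eq:finitevar}. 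Consequently $\lim_{T\to\infty}\tfrac{1}{T}\log\hat V_T^{M,\alpha}=L^{M,\alpha}$ almost surely. Since $\hat\pi^{M,\alpha}\in\mathcal{FG}^{M,\alpha}$, the pathwise inequality $V_T^{*,M,\alpha}\geq\hat V_T^{M,\alpha}$ immediately gives $\liminf_{T\to\infty}\tfrac{1}{T}\log V_T^{*,M,\alpha}\geq L^{M,\alpha}$ a.s.

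For the matching upper bound I use the master equation \eqref{eq:master}: since $|\log(G(\mu_T)/G(\mu_0))|\leq 2\log M$ uniformly in $G\in\mathcal{G}^{M,\alpha}$, it suffices to prove the uniform ergodic statement
\[
\limsup_{T\to\infty}\tfrac{1}{T}\sup_{G\in\mathcal{G}^{M,\alpha}}\mathfrak{g}_G([0,T])\leq \sup_G J^G=L^{M,\alpha}\quad\mathbb{P}\text{-a.s.,}
\]
where $J^G:=-\tfrac{1}{2}\int G^{-1}\sum_{i,j}D^{ij}G(x)c^{ij}(x)\rho(dx)$; comparing the ergodic limits of the two sides of \eqref{eq:master} shows $J^G=L^{\pi^G}$, hence $\sup_G J^G=L^{M,\alpha}$. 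The uniform statement is then obtained in three steps: (i) a Lipschitz estimate
\[
\tfrac{1}{T}|\mathfrak{g}_G([0,T])-\mathfrak{g}_{\tilde G}([0,T])|\leq C'_M\|G-\tilde G\|_{C^{2,0}}\cdot\tfrac{1}{T}\sum_{i,j}\int_0^T|c^{ij}(\mu_t)|\,dt,
\]
whose prefactor converges a.s.\ to $C'_M\int\sum_{i,j}|c^{ij}|\,d\rho<\infty$; (ii) the compactness of $(\mathcal{G}^{M,\alpha},\|\cdot\|_{C^{2,0}})$ (Lemma \ref{lem:compact}) giving a finite $\eta$-net; and (iii) the pointwise ergodic theorem applied at each center of the net. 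The main obstacle is precisely this upgrade from pointwise to uniform ergodic convergence over the uncountable class $\mathcal{G}^{M,\alpha}$; it is here that the $C^{2,\alpha}$-regularity built into the definition of $\mathcal{G}^{M,\alpha}$ (hence $C^{2,0}$-compactness via Arzel\`a-Ascoli) plays its crucial role. Combining this with the lower bound from the previous paragraph yields $\lim_{T\to\infty}\tfrac{1}{T}\log V_T^{*,M,\alpha}=L^{M,\alpha}$ a.s., and the first equality transfers this limit to $V_T^{M,\alpha}(\nu)$, completing the proof of \eqref{eq:result1}.
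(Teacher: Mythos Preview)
Your proposal is correct, and its first half (applicability of Theorem \ref{th:continouscase1}, the ergodic limit for $\hat\pi^{M,\alpha}$, and the pathwise lower bound $\liminf_T\tfrac{1}{T}\log V_T^{*,M,\alpha}\ge L^{M,\alpha}$) coincides with what the paper does. The divergence is in the upper bound. The paper argues \emph{in expectation}: by log-optimality of $\hat\pi^{M,\alpha}$ within $\mathcal{FG}^{M,\alpha}$ one has $\mathbb{E}[\log V_T^{M,\alpha}(\nu)]\le\mathbb{E}[\log\hat V_T^{M,\alpha}]$, and combining Fatou's lemma with the a.s.\ lower bound forces $\liminf_T\tfrac{1}{T}\log V_T^{M,\alpha}(\nu)$ to be a.s.\ constant and equal to $L^{M,\alpha}$. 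You instead exploit the master equation \eqref{eq:master} to reduce $\tfrac{1}{T}\log V_T^G$ to the finite-variation ergodic average $\tfrac{1}{T}\mathfrak g_G([0,T])$ (the boundary term $\log(G(\mu_T)/G(\mu_0))$ being uniformly $O(1)$), and then upgrade pointwise ergodic convergence to a uniform statement over $\mathcal{G}^{M,\alpha}$ via your Lipschitz estimate, $C^{2,0}$-compactness (Lemma \ref{lem:compact}) and a finite $\eta$-net.

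Your route is a genuinely different and in some respects cleaner argument: it is pathwise (no appeal to expectations or Fatou), it makes transparent why the $C^{2,\alpha}$-regularity is needed (to get the $C^{2,0}$-compactness that drives the $\eta$-net), and it actually yields the stronger conclusion that $\tfrac{1}{T}\log V_T^{*,M,\alpha}$ has a full a.s.\ \emph{limit} equal to $L^{M,\alpha}$, not merely the $\liminf$ statement of \eqref{eq:result1}. The paper's expectation/Fatou argument is softer and shorter, and has the conceptual advantage of using the log-optimality of $\hat\pi^{M,\alpha}$ directly, but it only pins down the $\liminf$ and requires some care about lower bounds when invoking Fatou.
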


\begin{proof}
The well-definedness of $\hat{\pi}^{M,\alpha}$ follows 
from continuity and compactness. Indeed, the map
\begin{align*}
G &\mapsto  \int_{\Delta^d} \left(\frac{\pi^G(x)}{x}\right)^{\top} c(x)\lambda(x)\rho(dx) -\frac{1}{2}
\int_{\Delta^d}\left(\frac{\pi^G(x)}{x}\right)^{\top} c(x)\left(\frac{\pi^G(x)}{x}\right)\rho(dx)\\
&=\int_{\Delta^d} \left(\frac{\nabla G(x)}{G(x)}\right)^{\top}c(x)\lambda(x)\rho(dx) -\frac{1}{2}
\int_{\Delta^d}\left(\frac{\nabla G(x)}{G(x)}\right)^{\top} c(x)\left(\frac{\nabla G(x)}{G(x)}\right)\rho(dx)\\
\end{align*}
is continuous from $(\mathcal{G}^{M,\alpha},\| \cdot \|_{2,0})$ to $\mathbb{R}$. This together with  compactness of $\mathcal{G}^{M,\alpha}$  with respect to 
$\|\cdot\|_{2,0}$ imply the well-definedness of $\hat{\pi}^{M,\alpha}$.   

Note also that \eqref{eq:conderg} and \eqref{eq:conderg1} as well as the conditions on $G$ imply the assumptions of the ergodic theorem (Theorem \ref{th:ergodic}). Hence, we have for each $\pi^G \in \mathcal{FG}^{M,\alpha}$ the $\mathbb{P}$-a.s.~limit
\[
\lim_{T \rightarrow \infty} \frac{1}{T} \log V_T^{\pi^G} = L^{\pi^G}.
\]
In particular,
\begin{align}\label{eq:asymrate}
\lim_{T \rightarrow \infty} \frac{1}{T} \log \hat{V}^{M,\alpha}_T = \sup_{\pi^G \in {\mathcal{FG}}^{M,\alpha}} L^{\pi^G}=: L^{M,\alpha}
\end{align}
holds  $\mathbb{P}$-a.s.

Due to  \eqref{eq:conderg}, we can now apply Theorem~\ref{th:continouscase1}, which implies the first equality in \eqref{eq:result1}. 
Moreover, we have by the definition of $V^{*,M,\alpha}_T$ for each fixed $T$ the inequality
\begin{align}\label{eq:vstardom}
\frac{1}{T} \log (\hat{V}^{M,\alpha}_T) \leq \frac{1}{T} \log (V^{*,M,\alpha}_T), \quad \mathbb{P}\textrm{-a.s.}
\end{align}
Using \eqref{eq:asymrate}, \eqref{eq:vstardom} and Theorem~\ref{th:continouscase1}, we thus have $\mathbb{P}$-a.s.,
\begin{align}\label{eq:pathwiselim1}
L^{M,\alpha}=\lim_{T\to \infty}\frac{1}{T} \log (\hat{V}^{M,\alpha}_T) \leq \liminf_{T \to \infty} \frac{1}{T} \log (V^{*,M,\alpha}_T)=
\liminf_{T\to \infty} \frac{1}{T} \log (V^{M,\alpha}_T(\nu) ).
\end{align}
On the other hand, by the definition of $(\hat{V}^{M,\alpha}_t)_{t \geq 0}$ as log-optimizer within the class $\mathcal{FG}^{M,\alpha}$
\begin{align}\label{eq:expectdom1}
 \mathbb{E}[\log (V^{M,\alpha}_T(\nu)) ]\leq \sup_{\pi^G \in \mathcal{FG}^{M,\alpha}}\mathbb{E}[\log (V^{\pi^G}_T)]=\mathbb{E}[\log (\hat{V}^{M,\alpha}_T)]
\end{align}
holds. Note here that the universal portfolio  $\pi^{M,\nu, \alpha}_T$ to build $V^{M,\alpha}_T(\nu)$ is at each time a random mixture of functionally generated portfolios 
lying in $\mathcal{G}^{M,\alpha}$. 
By the time-homogenous Markovianity it is thus sufficient to dominate the left hand side of \eqref{eq:expectdom1} by taking the supremum over elements in 
$\mathcal{FG}^{M,\alpha}$.

Combining now \eqref{eq:expectdom1}, 
Theorem \ref{th:ergodic} and \eqref{eq:pathwiselim1} yields,
\begin{align*}
 \mathbb{E}[\liminf_{T\to \infty}\frac{1}{T} \log (V^{M,\alpha}_T(\nu))]&\leq \liminf_{T\to \infty} \frac{1}{T} \mathbb{E}[\log (V^{M,\alpha}_T(\nu)) ]\\
 &\leq \lim_{T\to \infty} \frac{1}{T} \mathbb{E}[\log (\hat{V}^{M,\alpha}_T)]\\
 &= \lim_{T\to \infty}\frac{1}{T} \log (\hat{V}^{M,\alpha}_T) \\
 &\leq \liminf_{T\to \infty} \frac{1}{T} \log (V^{M,\alpha}_T(\nu)), \quad \mathbb{P}\text{-a.s.},
\end{align*}
where the first inequality follows from Fatou's lemma. From this we see that 
\[
\liminf_{T\to \infty} \frac{1}{T} \log (V^{M,\alpha}_T(\nu))
\]
is $\mathbb{P}$-almost surely constant and equal to 
$\lim_{T\to \infty}\frac{1}{T} \log (\hat{V}^{M,\alpha}_T) $. Hence the assertion is proved.
\end{proof}

As in the previous section we can formulate a result not depending explicitly on the constant $M$ on $\alpha$. Setting $\alpha=\frac{1}{M}$ we choose for $M=1,2,3, \dots$ a measure $m^M$ on $\mathcal{G}^{M,\frac{1}{M}}$ with full support. Define $m=\sum^{\infty}_{M=1} 2^{-M} m^M$ and the process $V(\nu)$  by
\begin{align*}
V_T(\nu)=\int_{\bigcup^\infty_{M=1}\mathcal{G}^{M,\frac{1}{M}}} V^G_T m(dG).
\end{align*}
In order to compare the performance with the one of the global log-optimal portfolio, whenever it is functionally generated, 
we combine the above results with Proposition \ref{prop:logoptfunctionally}.

\begin{corollary}\label{cor:main}
Let $(\mu_t)_{t \geq 0}$ be a stochastic process of form \eqref{eq:muP} satisfying Assumption \ref{ass:1}. 
Moreover, suppose that its characteristics $\lambda$ and $c$ satisfy \eqref{eq:conderg} and
\begin{align}\label{ass:lambda1}
&\lambda(x)= \frac{\nabla \hat{G}(x)}{\hat{G}(x)},\\
&\frac{1}{2}\int_{\Delta^d} \frac{\nabla \hat{G}(x)}{\hat{G}(x)} c(x) \frac{\nabla \hat{G}(x)}{\hat{G}(x)} \rho(dx) < \infty \label{ass:welldef}
\end{align}
for some concave function $\hat{G} \in C^2(\bar{\Delta}^d)$. 
Then we have $\mathbb{P}$-a.s.
\begin{align}
\lim_{M\to \infty} \lim_{T \to \infty} \frac{1}{T} \log ( V_{T}^{*,M,\frac{1}{M}} ) = \lim_{T \to \infty} \frac{1}{T} \log (V_{T}(\nu) )=
\lim_{{T}\to \infty} \frac{1}{T} \log (V^{\text{num}}_{T})=L^{\text{num}},
\end{align}
where $L^{\text{num}}$ is given by \eqref{ass:welldef}.
\end{corollary}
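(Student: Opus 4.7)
I would follow the structure of the proof of Corollary~\ref{c4.8}, combining \thmref{tC3cf} (applied at each fixed level $M$), \thmref{th:ergodic}(ii), and a continuous-time analogue of the supermartingale bound of \lemref{lem:supermart}. The first step is to identify the conjectured common growth rate as $L^{\text{num}}$: by Proposition~\ref{prop:logoptfunctionally} combined with \eqref{ass:lambda1}, the num\'eraire portfolio coincides with $\pi^{\hat G}$ and hence $V^{\text{num}} = V^{\hat G}$; by \thmref{th:ergodic}(ii) and \eqref{ass:welldef}, $\frac{1}{T}\log V^{\text{num}}_T \to L^{\text{num}}$ almost surely. I would also note that, since $d[\mu^i,\mu^i]_t = c^{ii}(\mu_t)\,dt$, the ergodic theorem gives $\frac{1}{T}[\mu^i,\mu^i]_T \to \int c^{ii}(x)\rho(dx) < \infty$ by \eqref{eq:conderg}, so condition \eqref{eq:assfinite} of \thmref{th:continouscase1} holds on a set of full $\mathbb{P}$-measure and both \thmref{th:continouscase1} and \thmref{tC3cf} become applicable.

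Next, for each fixed $M$ I would set $\nu^M := F_\ast m^M$ and
\[
L^{M,1/M} := \sup_{\pi^G \in \mathcal{FG}^{M,1/M}} L^{\pi^G},
\]
so that \thmref{tC3cf} yields, $\mathbb{P}$-almost surely,
\[
\liminf_{T\to\infty}\tfrac{1}{T}\log V_T^{*,M,1/M} = \liminf_{T\to\infty}\tfrac{1}{T}\log V_T^{M,1/M}(\nu^M) = L^{M,1/M},
\]
and $L^{M,1/M} \leq L^{\text{num}}$ since $V^{\text{num}}$ is the log-optimizer over \emph{all} admissible strategies in $\Pi$. The crux is then to prove $L^{M,1/M} \to L^{\text{num}}$ as $M \to \infty$. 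I expect this approximation to be the main obstacle: $\hat G$ is only assumed to be $C^2$, so its second derivatives need not be $(1/M)$-H\"older for any $M$, and hence $\hat G$ need not lie in any $\mathcal{G}^{M,1/M}$. I would handle this by mollification: take $\hat G_n := \hat G \ast \phi_{1/n}$ with a non-negative symmetric compactly supported kernel, so that $\hat G_n \in C^\infty(\bar\Delta^d)$ remains concave and converges to $\hat G$ in the $C^2$-norm. Using $\inf_{\bar\Delta^d}\hat G > 0$ (by positivity of $\hat G$ and compactness of $\bar\Delta^d$) together with the smoothness of $\hat G_n$, one chooses $M(n) \to \infty$ slowly enough that $\hat G_n \in \mathcal{G}^{M(n),1/M(n)}$. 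Then $\pi^{\hat G_n} \to \pi^{\hat G} = \pi^{\text{num}}$ uniformly, and dominated convergence together with \eqref{eq:conderg}--\eqref{eq:conderg1} yields $L^{\pi^{\hat G_n}} \to L^{\text{num}}$, so $L^{M(n),1/M(n)} \to L^{\text{num}}$ and therefore $\sup_M L^{M,1/M} = L^{\text{num}}$.

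Finally I would handle the untruncated universal wealth $V(\nu) = \int V^G \, m(dG)$ with $m = \sum_M 2^{-M} m^M$. The lower bound $\liminf_{T}\frac{1}{T}\log V_T(\nu) \geq L^{\text{num}}$ follows from the pointwise inequality $V_T(\nu) \geq 2^{-M} V_T^{M,1/M}(\nu^M)$ applied for each $M$, together with the previous paragraph. For the matching upper bound I would adapt \lemref{lem:supermart} to continuous time: for any $\pi \in \Pi$ the ratio $V^\pi/V^{\text{num}}$ is a non-negative local martingale (hence supermartingale) by the derivation of $\pi^{\text{num}}$ in Section~\ref{sec:logoptimalfunc}, and Fubini's theorem yields that $V_t(\nu)/V_t^{\text{num}}$ is itself a non-negative supermartingale, hence converges $\mathbb{P}$-almost surely to a finite limit. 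This implies $\limsup_{T}\frac{1}{T}\log(V_T(\nu)/V_T^{\text{num}}) \leq 0$ almost surely, whence $\limsup_{T}\frac{1}{T}\log V_T(\nu) \leq L^{\text{num}}$ by combining with the identification of $\frac{1}{T}\log V_T^{\text{num}}$ in the first paragraph. Together with the asymptotic equality between $V_T^{*,M,1/M}$ and $V_T^{M,1/M}(\nu^M)$ provided by \thmref{th:continouscase1}, this closes the chain of equalities asserted in the corollary.
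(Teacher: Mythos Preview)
Your proposal is correct and follows essentially the same architecture as the paper's proof: identify $L^{\text{num}}$ via Proposition~\ref{prop:logoptfunctionally} and Theorem~\ref{th:ergodic}(ii), invoke Theorem~\ref{tC3cf} at each level $M$, approximate $\hat G$ by elements of $\bigcup_M \mathcal{G}^{M,1/M}$ to show $L^{M,1/M}\to L^{\text{num}}$, and close the argument with the supermartingale property of $V(\nu)/V^{\text{num}}$. The only notable difference is that where the paper handles the approximation step by directly appealing to the continuity estimate~\eqref{eq:funcont1} of Lemma~\ref{lem:continuity} (divided by $T$, using boundedness of $\tfrac{1}{T}[\mu^i,\mu^i]_T$) and the $C^{2,0}$-density of $\bigcup_M \mathcal{G}^{M,1/M}$ in the relevant class of concave $C^2$ functions, you make this density explicit via mollification and then pass to the limit in $L^{\pi^{\hat G_n}}$ by dominated convergence; both routes are valid, and your version is arguably more self-contained.
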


\begin{proof}
Note first that $L^{\text{num}}$ is well defined due to \eqref{ass:welldef}. 
Furthermore, note that for every $\varepsilon >0$, there exists some $M >0$ and some function $G \in \mathcal{G}^{M,\frac{1}{M}}$ such that
\begin{align*}
\lim_{T\to \infty} \frac{1}{T}\log(V^{G}_T)\geq \lim_{T\to \infty} \frac{1}{T} \log (V^{\text{num}}_T)+\varepsilon.
\end{align*}
Indeed this simply follows from continuity of $G \mapsto V^G$ as asserted in Lemma 
 \ref{lem:continuity} and  by choosing  $G \in \mathcal{G}^{M,\frac{1}{M}}$ close enough with respect to the $\|\cdot \|_{C^{2,0}}$ to the optimizing function $\hat{G} \in C^2(\bar{\Delta}^d)$ whose generated portfolio yields $V^{\text{num}}$ due to \eqref{ass:lambda1} and Proposition \ref{prop:logoptfunctionally}.
By Theorem~\ref{tC3cf}, we can therefore conclude that 
\begin{align}
\lim_{M\to \infty} \liminf_{T \to \infty} \frac{1}{T} \log ( V_{T}^{*,M,\frac{1}{M}} ) = \liminf_{T \to \infty} \frac{1}{T} \log (V_{T}(\nu) )=
\lim_{{T}\to \infty} \frac{1}{T} \log (V^{\text{num}}_{T})=L^{\text{num}}
\end{align}
holds true. Since Theorem \ref{th:continouscase1} implies that
\[
\limsup_{T\to \infty} \frac{1}{T} \log V_T(\nu) = \lim_{M \rightarrow \infty} \limsup_{T \rightarrow \infty} \frac{1}{T} \log  V_T^{*, M, \frac{1}{M}},
\]
the assertion is proved if 
\begin{align}\label{eq:finite1}
\limsup_{T\to \infty} \frac{1}{T} (\log V_T(\nu)-\log V^{\text{num}}_T) = \limsup_{T\to \infty} \frac{1}{T} \log \left(\frac{V_T(\nu)} {V^{\text{num}}_T}\right) =0, \quad \mathbb{P}\text{-a.s.}
\end{align}
By the considerations of Section \ref{sec:logoptimalfunc} (see also \cite[Propostion 4.3]{becherer2001numeraire}), it follows that $(\frac{V_t(\nu)}{V^{\text{num}}_t})_{t \geq 0}$  is a non-negative supermartingale. It converges $\mathbb{P}$-a.s.~to a finite limit as $t \to \infty$. This in turn implies \eqref{eq:finite1} and proves the statement.
\end{proof}

\vskip10pt

\appendix
\section{Proof of Theorem \ref{t1.1}}

\begin{proof}
Fix $T>0$ and the trajectory $s=(s^1_t, \dots, s^d_t)^T_{t=0} \in (\mathbb{R}^d)^{T+1}.$ For fixed $s$ the function $b \mapsto V_T(b)(s)$ is continuous on $\bar{\Delta}^d.$ Hence there is $\bar{b}=\bar{b}(s) \in \bar{\Delta}^d$ such that
\begin{align}\label{p1}
V^*_T(s)=V_T(\bar{b})(s).
\end{align}
In fact, condition \eqref{p3} implies that the sequence of functions $(b \mapsto \frac{1}{T} \log V_T(b))^\infty_{T=1}$ is Lipschitz on $\bar{\Delta}^d,$ uniformly in $T \in \mathbb{N}$ and $s$ satisfying \eqref{p3} for some fixed constants $C> c > 0.$

Indeed, consider the distance on $\bar{\Delta}^d$ defined by $\|b - \tilde{b} \|_1= \sum^d_{j=1}|b^j - \tilde{b}^j|.$ Then we may estimate
\begin{align*}
|\frac{1}{T} \log V_T(b) - \frac{1}{T} \log V_T(\tilde{b})| \leq (\log (C)- \log(c) )\|b-\tilde{b}\|_1.
\end{align*}
For $\epsilon >0$ we may therefore define $\delta := \frac{c\epsilon}{C} >0$ such that, for every $\delta$-neighbourhood $U(\bar{b})$ around any $\bar{b} \in \bar{\Delta}^d$ we have
\begin{align*}
\frac{1}{T} \log V_T(b) \geq \frac{1}{T} \log V_T(\bar{b}) - \epsilon, 
\end{align*}
for every $b \in U(\bar{b}).$ If the probability measure $\nu$ has full support, we also may find $\eta=\eta(\epsilon,c,C)>0$ such that each such $\delta$-neighborhood $U(\bar{b})$, where $\bar{b}$ runs through $\bar{\Delta}^d,$ satisfies $\nu(U(\bar{b}))>\eta.$ Using \eqref{p1} we therefore may conclude, similarly as in \eqref{C4}, that \eqref{p3a} holds true, uniformly in $s=(s^1_t, \dots, s^d_t)^\infty_{t=0}$ satisfying \eqref{p3} for some fixed constants $C > c > 0.$
\end{proof}

%\bibliographystyle{abbrv}

%\bibliography{referencesCover}

\begin{thebibliography}{10}

\bibitem{BFI:05}
A.~D. Banner, R.~Fernholz, and I.~Karatzas.
\newblock Atlas models of equity markets.
\newblock {\em The Annals of Applied Probability}, 15(4):2296--2330, 2005.

\bibitem{becherer2001numeraire}
D.~Becherer.
\newblock The numeraire portfolio for unbounded semimartingales.
\newblock {\em Finance and Stochastics}, 5(3):327--341, 2001.

\bibitem{blum1999universal}
A.~Blum and A.~Kalai.
\newblock Universal portfolios with and without transaction costs.
\newblock {\em Machine Learning}, 35(3):193--205, 1999.

\bibitem{B:14}
M.~R. Brod.
\newblock Generating the universal portfolio.
\newblock 2014.
\newblock Master’s thesis, Eidgen{\"o}ssische Technische Hochschule,
  Z{\"u}rich.

\bibitem{CF:10}
R.~Cont and D.~Fournie.
\newblock A functional extension of the {I}t{\^o}-formula.
\newblock {\em Comptes Rendus Mathematique}, 348(1):57--61, 2010.

\bibitem{CF:13}
R.~Cont, D.-A. Fourni{\'e}, et~al.
\newblock Functional {I}t{\^o} calculus and stochastic integral representation
  of martingales.
\newblock {\em The Annals of Probability}, 41(1):109--133, 2013.

\bibitem{C91}
T.~M. Cover.
\newblock Universal portfolios.
\newblock {\em Math. Finance}, 1(1):1--29, 1991.

\bibitem{cover1996universal}
T.~M. Cover and E.~Ordentlich.
\newblock Universal portfolios with side information.
\newblock {\em Information Theory, IEEE Transactions on}, 42(2):348--363, 1996.

\bibitem{DS:95}
F.~Delbaen and W.~Schachermayer.
\newblock The no-arbitrage property under a change of num\'eraire.
\newblock {\em Stochastics Stochastics Rep.}, 53(3-4):213--226, 1995.

\bibitem{D:10}
M.~Dobrowolski.
\newblock {\em Angewandte Funktionalanalysis: Funktionalanalysis,
  Sobolev-R{\"a}ume und elliptische Differentialgleichungen}.
\newblock Springer-Verlag, 2010.

\bibitem{D:09}
B.~Dupire.
\newblock Functional {I}t{\^o} calculus.
\newblock {\em Bloomberg portfolio research paper}, 2009.
\newblock 2009-04.

\bibitem{E:15}
A.~Eberle.
\newblock Markov processes.
\newblock {\em Lecture notes, University Bonn}, 2016.

\bibitem{FK:10}
D.~Fernholz and I.~Karatzas.
\newblock On optimal arbitrage.
\newblock {\em Ann. Appl. Probab.}, 20(4):1179--1204, 2010.

\bibitem{FKa:10}
D.~Fernholz and I.~Karatzas.
\newblock Probabilistic aspects of arbitrage.
\newblock In {\em Contemporary quantitative finance}, pages 1--17. Springer,
  Berlin, 2010.

\bibitem{F:02}
E.~R. Fernholz.
\newblock {\em Stochastic portfolio theory}, volume~48 of {\em Applications of
  Mathematics (New York)}.
\newblock Springer-Verlag, New York, 2002.
\newblock Stochastic Modelling and Applied Probability.

\bibitem{FK:09}
R.~Fernholz and I.~Karatzas.
\newblock Stochastic portfolio theory: an overview.
\newblock {\em Handbook of numerical analysis}, 15:89--167, 2009.

\bibitem{F:81}
H.~F{\"o}llmer.
\newblock Calcul d'{I}t{\^o} sans probabilit{\'e}s.
\newblock {\em S{\'e}minaire de probabilit{\'e}s de Strasbourg}, 15:143--150,
  1981.

\bibitem{HK:15}
E.~Hazan and S.~Kale.
\newblock An online portfolio selection algorithm with regret logarithmic in
  price variation.
\newblock {\em Mathematical Finance}, 25(2):288--310, 2015.

\bibitem{HS:10}
H.~Hulley and M.~Schweizer.
\newblock On minimal market models and minimal martingale measures.
\newblock In {\em Contemporary Quantitative Finance}, pages 35--51. Springer,
  2010.

\bibitem{IB:14}
T.~Ichiba and M.~Brod.
\newblock Some aspects of universal portfolios.
\newblock 2014.
\newblock Presentation, Banff 2014.

\bibitem{J:92}
F.~Jamshidian.
\newblock Asymptotically optimal portfolios.
\newblock {\em Mathematical Finance}, 2(2):131--150, 1992.

\bibitem{KK:07}
I.~Karatzas and C.~Kardaras.
\newblock The num\'eraire portfolio in semimartingale financial models.
\newblock {\em Finance Stoch.}, 11(4):447--493, 2007.

\bibitem{KS:99}
D.~Kramkov and W.~Schachermayer.
\newblock The asymptotic elasticity of utility functions and optimal investment
  in incomplete markets.
\newblock {\em Annals of Applied Probability}, pages 904--950, 1999.

\bibitem{LH:14}
B.~Li and S.~C. Hoi.
\newblock Online portfolio selection: A survey.
\newblock {\em ACM Computing Surveys (CSUR)}, 46(3):35, 2014.

\bibitem{PW:15}
S.~Pal and T.-K.~L. Wong.
\newblock The geometry of relative arbitrage.
\newblock {\em Mathematics and Financial Economics}, 10(3):263--293, 2016.

\bibitem{SSV:16}
A.~Schied, L.~Speiser, and I.~Voloshchenko.
\newblock Model-free portfolio theory and its functional master formula.
\newblock {\em preprint arXiv:1606.03325}, 2016.

\bibitem{S:95}
M.~Schweizer.
\newblock On the minimal martingale measure and the {F}\"ollmer-{S}chweizer
  decomposition.
\newblock {\em Stochastic Anal. Appl.}, 13(5):573--599, 1995.

\bibitem{wong2015universal}
T.-K.~L. Wong.
\newblock Universal portfolios in stochastic portfolio theory.
\newblock {\em preprint arXiv:1510.02808}, 2015.

\end{thebibliography}

\end{document}